\documentclass[a4paper,UKenglish,cleveref, autoref, thm-restate]{lipics-v2021}

\hideLIPIcs  

\title{Continuous Defensive Domination Problems} 

\author{Christoph Grüne}{Department of Computer Science, RWTH Aachen University, Germany}{gruene@algo.rwth-aachen.de}{https://orcid.org/0000-0002-7789-8870}{}
\author{Tom Janßen}{Department of Computer Science, RWTH Aachen University, Germany}{janssen@algo.rwth-aachen.de}{https://orcid.org/0000-0003-4617-3540}{Funded by the German Research Foundation (DFG) – WO 1451/2-1}

\authorrunning{C. Grüne and T. Janßen} 

\Copyright{Christoph Grüne and Tom Janßen} 

\ccsdesc[500]{Theory of computation~Problems, reductions and completeness}

\keywords{Facility Location, Defensive Domination, Dominating Set, Computational Complexity} 

\category{} 

\relatedversion{} 

\nolinenumbers 

\EventEditors{John Q. Open and Joan R. Access}
\EventNoEds{2}
\EventLongTitle{42nd Conference on Very Important Topics (CVIT 2016)}
\EventShortTitle{CVIT 2016}
\EventAcronym{CVIT}
\EventYear{2016}
\EventDate{December 24--27, 2016}
\EventLocation{Little Whinging, United Kingdom}
\EventLogo{}
\SeriesVolume{42}
\ArticleNo{23}

\usepackage{mathtools}

\usepackage{tikz}
\usetikzlibrary{positioning,calc,arrows.meta,decorations.pathreplacing,angles,quotes,shapes.geometric}

\newcommand{\oball}[2]{\ensuremath{B^<\left(#1, #2\right)}}
\newcommand{\cball}[2]{\ensuremath{B^\leq\left(#1, #2\right)}}

\newcommand{\half}{\ensuremath{\tfrac{1}{2}}}

\newcommand{\mult}[2]{\ensuremath{\operatorname{mult}(#1, #2)}}

\newcommand{\N}{\mathbb{N}}

\newcommand{\R}{\mathbb{R}}

\newcommand{\NP}{{\sf NP}}

\begin{document}

\maketitle

\begin{abstract}
 The problem {\sc Defensive $\delta$-Covering}, for some covering range $\delta > 0$, is a continuous facility location problem on undirected graphs where all edges have unit length. 
It is a generalization of {\sc Defensive Dominating Set} and {\sc $\delta$-Covering}.
An attack and defense are sets of points, which are on vertices or on the interior of an edge.
A defense counters an attack, if there is a matching of the points in the defense to the points in the attack, such that any matched points have distance at most $\delta$, and every point in the attack is matched.
The task is, given a graph $G$ and numbers $\ell, k \in \mathbb N$, to find a defense of size at most $\ell$ that counters every possible attack of size at most $k$.

We study the complexity of this problem in various different settings.
We show that if the attack is restricted to vertices, the problem is $\Sigma^P_2$-complete for large $\delta$, but if the attack may consist of any points on the graph, it is $\NP$-complete.
Additionally, we analyze how the complexity changes if the attacks or defenses may be a multiset.
If the defense is allowed to be a multiset, the complexity does not change in any case we consider, while if the attack is allowed to be a multiset, the problem often becomes easier.
To show containment in the various complexity classes, we introduce a number of discretization arguments, which show that solutions with a regular structure must always exist.
\end{abstract}
\newpage

\section{Introduction}

Under the umbrella term of facility location we understand a class of optimization problems where one wants to find locations for facilities and an assignment of clients to these facilities so as to optimize a cost or service objective.
In the subarea of continuous facility location, demand is not limited to a finite set of vertices:
every point on the network, i.e. on vertices and along edges, must be served.
Facilities may also be placed anywhere on edges, not only at vertices.
Distances are measured as shortest-path distance along the graph.
This model dates back to Dearing \& Francis \cite{Dearing1974}.
A typical covering version is to choose facility locations so that every point $p$ on the network is within a service distance $\delta$ of at least one facility $f_i$, i.e. $\min_i dist(p,f_i) \leq \delta$, also known as {\sc $\delta$-Covering}.
Although the feasible set is infinite, these problems often reduce to checking a finite set of critical points enabling discretization \cite{DBLP:journals/algorithmica/GrigorievHLW21,DBLP:journals/mp/HartmannLW22}.

Defensive domination models placing a limited number of defender tokens on vertices of a graph so that the network can withstand any simultaneous attack on $k$ vertices.
Formally, given an undirected graph $G$ and integers $k$ and $\ell$, we ask whether there exists a set $D \subseteq V$ with $|D| \leq \ell$ such that every attack set $A \subseteq V$ with $|A| \leq k$ can be defended as follows:
each attacked vertex $a \in A$ must be assigned a distinct token $d \in D$ in its neighborhood $d \in N[a]$.
This is exactly the requirement that the bipartite graph between $A$ and $D$, in which an edge between $a \in A$ and $d \in D$ exists iff $a \in N[d]$, contains a matching of size $|A|$.
Thus, a feasible defense $D$ must provide enough distinct nearby defenders for every possible attacked set of size at most $k$.
For $k=1$, this reduces to the classical dominating set problem; for $k>1$, it enforces robustness against concurrent attacks and is typically computationally hard \cite{DBLP:conf/mfcs/ChaplickGK25}.
Defensive domination can be viewed as a graph-based facility-location problem where placing a defender token corresponds to opening a facility at a vertex.
An attack set $A$ of size $k$ acts like a demand scenario with $k$ clients that must be served.
A token can serve an attacked vertex only if it lies in its neighborhood.
Because each token can defend at most one attacked vertex, feasibility for a scenario is exactly the existence of a matching of size $k$ between $A$ and the chosen token set $D$.

A natural way to combine continuous facility location with defensive domination is to let both, defenses and attacks, be placed on points in a continuous network, i.e., vertices and the interior of the edges.
For this, we model the network as a metric graph $G$ whose edges have length $1$, and let $dist(\cdot,\cdot)$ denote shortest-path distance along the network. 
We place $\ell$ defender tokens at locations $D=\{d_1,\dots,d_\ell\}$, where each $d_i$ may be a vertex or a point on an edge.
An adversary then chooses an attack set $A = \{a_1, \dots, a_k\}$ of $k$ points anywhere on the network.
Defending is possible within a distance $\delta \in \R$ if the attacked points can be matched to distinct tokens that are close enough:
there must exist an injective assignment $f: A \to D$ such that $dist(a,f(a)) \leq \delta$ for all $a \in A$, or, equivalently, we build a bipartite graph between $A$ and $D$ as above.
We refer to this problem as {\sc Defensive $\delta$-Covering}.
This problem can model {\sc Defensive Dominating Set} by restricting the attacker and defender tokens to vertices and setting $\delta = 1$, as well as {\sc $\delta$-Covering} by setting the number of attacker tokens to $1$.

\paragraph*{Related Work}

For a background on facility location, we refer the reader to the books by Drezner \cite{drezner1996facility} and Mirchandi \& Francis \cite{mirchandani1990discrete}.
Our model was introduced by Dearing \& Francis \cite{Dearing1974}, and has since been extensively studied with multiple optimization goals.
For {\sc $\delta$-Covering}, where the goal is to cover the graph, Tamir \cite{Tamir87} and Megiddo \& Tamir \cite{MegiddoT1983} gave polynomial time algorithms for special graph classes and showed \NP-hardness for the special case $\delta = 2$.
Later, a work by Hartmann et al.\ \cite{DBLP:journals/mp/HartmannLW22} showed that {\sc $\delta$-Covering} is polynomial time solveable for unit fraction values of $\delta$ and \NP-hard otherwise, and also settled the parameterized complexity of {\sc $\delta$-Covering} for all rational values of $\delta$.
A subsequent work by Hartmann \& Janßen \cite{DBLP:conf/waoa/HartmannJ24} settled the hardness of approximation.
A dual problem to {\sc $\delta$-Covering}, where the goal is to place points with mutual distance $\delta$, is known as {\sc $\delta$-Dispersion}.
Grigoriev et al.\ \cite{DBLP:journals/algorithmica/GrigorievHLW21} settled the complexity of {\sc $\delta$-Dispersion} for all rational values of $\delta$, while a work by Hartmann \& Lendl \cite{DBLP:conf/mfcs/HartmannL22} considers the parameterized complexity for various graph parameters, as well as irrational values for $\delta$.
A similar problem was studied by Tamir \cite{Tamir1991}, where instead of fixing $\delta$, the number of points to be placed was fixed and the pairwise distance $\delta$ is to be optimized.
For this task, they gave a \half-approximation algorithm and showed that approximating it within any $\varepsilon > \frac{2}{3}$ is \NP-hard.
A different problem in this setting, {\sc $\delta$-Tour}, where the goal is to traverse the graph such that every point on the graph is within distance $\delta$ of the traversal, was studied by Frei et al.\ \cite{DBLP:conf/isaac/FreiGHHM24}.
They showed that this problem is \NP-hard for any $\delta > 0$ and analyzed the parameterized complexity and approximation properties.

For {\sc Defensive Domination}, where the size of the attack is fixed and not part of the input, Ekim et al.\ \cite{DBLP:journals/dm/EkimFP20} show \NP-completeness even for split graphs, and also show that the problem is solveable in various other special graph classes.
They also show that {\sc Defensive Domination} is {\sf coNP}-hard when the size of the attack is part of the input.
Later, Ekim et al.\ \cite{DBLP:journals/dam/EkimFPS23} also show a polynomial time result for proper interval graphs.
Henning et al.\ \cite{DBLP:journals/dam/HenningPT25} show that it remains \NP-hard even on bipartite graphs, and also show {\sf APX}-hardness for bounded degree graphs.
For {\sc Defensive Domination}, where the size of the attack is part of the input, Chaplick et al.\ \cite{DBLP:conf/mfcs/ChaplickGK25} show that the problem is $\Sigma^P_2$-complete on general graphs, and give a polynomial time algorithm for interval graphs when the defense may be a multiset.
A similar problem and its connections to cop and robber games are studied by Dereniowski et al.\ \cite{DBLP:journals/tcs/DereniowskiGK19}.

\paragraph*{Our Results}

We analyze the complexity of {\sc Defensive $\delta$-Covering} in various settings.
As an intermediate step, we also show $\Sigma^P_2$-hardness for {\sc Distance $d$-Defensive Domination} explicitly for each $d$.
While hardness already follows from \cite{DBLP:conf/mfcs/ChaplickGK25} when $d$ is part of the input, we need this for our results on {\sc Defensive $\delta$-Covering} as we consider $\delta$ not part of the input for that problem.
We also analyze the complexity of {\sc Distance $d$-Defensive Domination} when the attacker may attack vertices multiple times, and show a strong relation to {\sc $k$-Tuple Dominating Set} and thus \NP-completeness.

For {\sc Defensive $\delta$-Covering} we can group our results into two groups:
The attacker is only allowed to attack vertices, which we call {\sc Defensive $\delta$-Covering with Vertex Attack}, and the attacker is allowed to attack any point on the graph, which we simply call {\sc Defensive $\delta$-Covering}.
For the case where the attacker is only allowed to attack vertices, the complexity of the problem differs for three ranges of $\delta$:
\begin{description}
    \item[$\delta < \half$:] This case is very simple, because defender tokens can only defend at most one vertex at a time, and thus any defense must include all vertices.
    \item[$\half \leq \delta < 1$:] If the attacker is not allowed to attack a vertex multiple times, we can show \NP-completeness by showing a strong relation to the graph factor problem. The question whether a defense countering all attacks exists becomes equivalent to the graph having a tree factor with specific properties. On the other hand, if the attacker is allowed to attack a vertex multiple times, we can construct polynomial time algorithms by reducing the problem to {\sc Capacitated $b$-Edge Cover}, which is solveable in polynomial time \cite{schrijver2003combinatorial}.
    \item[$1 \leq \delta$:] Using some tricks, we show that the problem becomes essentially equivalent to {\sc Distance $d$-Defensive Domination}, and similarly, if the attacker is allowed to attack a multiset of vertices, the complexity drops back to \NP-hard.
\end{description}
For the case where the attacker is allowed to attack any point in the graph, we show that, surprisingly, the problem is always contained in \NP.
This stems mainly from the fact that for an non-empty interval of any length $\varepsilon>0$, we can place any number of defenders and attackers in that interval.
Our results are also summarized in \Cref{tab:my_label}, with links to the respective theorems.

\begin{table}[!ht]
    \centering
    \begin{tabular}{l|c|c|c|c}
         Problem Name / Multisets & none & Defender & Attacker & both \\\hline
         Dist-$d$ Def. Dom. $(d>1)$
         & $\Sigma^P_2$ Lemma \ref{lemma:distanceDefensiveDominationCompleteness}
         & $\Sigma^P_2$ Cor. \ref{cor:distanceDefensiveDominationMultisetDefenseCompleteness}
         & \NP{} Cor. \ref{corollary:defensiveDominationMultisetBothNPComplete}
         & \NP{} Cor. \ref{corollary:defensiveDominationMultisetBothNPComplete}\\
         Def. $\delta$-C. VA $(\delta < \half)$
         & $\sf P$ Lemma \ref{lemma:defensiveDeltaCoverVertexAttackSmallerHalfInP}
         & $\sf P$ Lemma \ref{lemma:defensiveDeltaCoverVertexAttackSmallerHalfInP}
         & $\sf P$ Lemma \ref{lemma:defensiveDeltaCoverVertexAttackMultisetSmallerHalfInP}
         & $\sf P$ Lemma \ref{lemma:defensiveDeltaCoverVertexAttackMultisetSmallerHalfInP} \\
         Def. $\delta$-C. VA $(\half \leq \delta < 1)$
         & \NP{} Thm. \ref{thm:defensiveDeltaCoverVertexAttackSmallerOneCompleteness}
         & \NP{} Thm. \ref{thm:defensiveDeltaCoverVertexAttackSmallerOneCompleteness}
         & $\sf P$ Thm. \ref{thm:defensiveDeltaCoverVertexAttackMultisetInP}
         & $\sf P$ Thm. \ref{thm:defensiveDeltaCoverVertexAttackMultisetInP} \\
         Def. $\delta$-C. VA $(1 \leq \delta)$
         & $\Sigma^P_2$ Thm. \ref{thm:defensiveDeltaCoverVertexAttackGreaterOneCompleteness}
         & $\Sigma^P_2$ Thm. \ref{thm:defensiveDeltaCoverVertexAttackGreaterOneCompleteness}
         & \NP{} Thm. \ref{thm:defensiveDeltaCoverVertexAttackMultisetBothNPComplete}
         & \NP{} Thm. \ref{thm:defensiveDeltaCoverVertexAttackMultisetBothNPComplete} \\
         Def. $\delta$-C.
         & \NP{}\footnotemark{} Lemma \ref{lemma:kTupleMultisetDeltaCoverNPComplete}
         & \NP{}\footnotemark[\value{footnote}] Thm. \ref{thm:defensiveDeltaCoverMultisetDefenseNPComplete}
         & \NP{}\footnotemark[\value{footnote}] Thm. \ref{thm:defensiveDeltaCoverNPCompletePolynomialK}
         & \NP{}\footnotemark[\value{footnote}] Thm. \ref{thm:defensiveDeltaCoverNPCompletePolynomialK}
    \end{tabular}
    \caption{Problems are complete for \NP{} and $\Sigma^P_2$ and contained in {\sf P}, unless mentioned otherwise. We use VA as an abbreviation for vertex attack.}
    \label{tab:my_label}
\end{table}
\footnotetext{NP-hardness is not known for cases where $\delta$ is a unit fraction.}

\paragraph*{Organization of this Work}

The paper is organized as follows.
In \Cref{sec:preliminaries}, we introduce basic notations and existing problem definitions.
In \Cref{sec:distanceDDefensiveDomination}, we introduce {\sc Distance-$d$ Defensive Domination} as a generalization of {\sc Defensive Dominating Set} and prove its $\Sigma^p_2$-completeness as an intermediate result, which we use in the following section.
In \Cref{sec:defensiveDeltaCoverWithVertexAttack}, we investigate {\sc Defensive $\delta$-Covering with Vertex Attack} and establish different completeness results for the classes $\sf P, NP,$ and $\Sigma^p_k$ for all $\delta \in \R$.
In \Cref{sec:defensiveDeltaCoverWithPointAttack}, we examine the complexity of {\sc Defensive $\delta$-Covering} and establish its $\sf NP$-completeness.
\Cref{sec:conclusion} concludes the paper with a short discussion and possible future work.

\section{Preliminaries}
\label{sec:preliminaries}

For a real number $\delta$, we use $\lfloor\delta\rfloor$ to describe its integer part and $\tilde\delta := \delta - \lfloor\delta\rfloor$ its fractional part.
For a (multi-)set $X$ and element $x \in X$ we denote with \mult{x}{X} the multiplicity of $x$ in $X$, i.e., how many times $x$ is contained in $X$.
If $X$ is a set, then $\mult{x}{X} \in \{0, 1\}$, and if $X$ is a multiset, then $\mult{x}{X} \in \mathbb N_0$.

\subparagraph{Graph Theory}
We consider simple undirected graphs whose edges have unit length, unless explicitly stated otherwise.
A graph $G = (V,E)$ consists of a vertex set $V$ and an edge set $E$.
To make associations clearer, we denote the vertex set of a graph $G$ by $V(G)$ and the edge set by $E(G)$.

For a set $U \subseteq V(G)$ we denote with $G[U]$ the subgraph of $G$ induced by the vertices of $U$.
For a vertex $u \in V(G)$ we define its \emph{open neighborhood} as $N(u) \coloneqq \{ v \in V(G) \mid \{u, v\} \in E(G) \}$ and its \emph{closed neighborhood} as $N[u] \coloneqq N(u) \cup \{u\}$.
Similarly, for a set $U \subseteq V(G)$, we define the closed neighborhood of $U$ as $N[U] \coloneqq \bigcup_{u \in U} N[u]$ and the open neighborhood of $U$ as $N(U) \coloneqq N[U] \setminus U$.
We further generalize the closed neighborhood concept to distance-$d$ neighborhood, for a set $U \subseteq V(G)$, where the distance-$1$ neighborhood is defined by $N_1[U] := N[U]$ and the distance-$d$ neighborhood is defined by $N_d[U] = N[N_{d-1}[U]]$.
For a vertex $v \in V(G)$, we denote with $\text{deg}(v) = |N(v)|$ the degree of $v$.

For a graph $G$ and integer $\ell \in \mathbb N$, we denote with $G_\ell$ the $\ell$-subdivision of $G$, that is every edge in $G$ is replaced by a path with $\ell$ edges.

We use the notation of~\cite{DBLP:journals/algorithmica/GrigorievHLW21} to describe the metric space of a graph:
For an edge $\{u, v\}$ and a rational $\lambda \in [0,1]$, we denote by $p(u, v, \lambda)$ the point on the edge $\{u, v\}$ that has distance $\lambda$ to $u$.
Thus $p(u, v, \lambda) = p(v, u, 1 - \lambda)$, and we will often assume that $\lambda \leq 1/2$.
Further, $p(u, v, 0) = u$ and $p(u, v, 1) = v$ and we also use $u$ to refer to the point on $u$.
We denote the continuum set of all points on the graph $G$ with $P(G)$.
With the word \emph{point} we refer to the elements of $P(G)$, while we use \emph{vertex} in the graph-theoretic sense.
The distance between two points $p, q \in P(G)$, denoted as $dist(p,q)$, is defined as the length of the shortest path from $p$ to $q$ in the underlying metric space.
The \emph{open ball} with radius $r$ around point $p$, denoted as \oball{p}{r}, is the set of points $q$ with distance $dist(q,p) < r$ from $p$.
The \emph{closed ball} with radius $r$ around point $p$, denoted as \cball{p}{r}, is the set of points $q$ with distance $dist(q,p) \leq r$ from $p$.

\subparagraph{Complexity Theory}
A language is a set $L \subseteq \{0,1\}^*$.
A language $L$ is contained in $\Sigma^P_k$ iff there exists dome polynomial-time computable function $V$, which we also call the verifier, and $m_1,m_2,\ldots,m_k = |w|^{O(1)}$ such that for all $w \in \{0,1\}^*$
$$
    w \in L \iff \exists y_1 \in \{0,1\}^{m_1} \forall y_2 \in \{0,1\}^{m_2} \ldots Q y_k \in \{0,1\}^{m_k} : V(w,y_1,y_2, \ldots, y_k) = 1,
$$
where $Q = \exists$, if $k$ is odd, and $Q = \forall$, if $k$ is even.
The class of all polynomial-time computable languages $\sf P$ coincides with $\Sigma^P_0$ and the class of all non-deterministically polynomial-time computable languages $\sf NP$ coincides with $\Sigma^P_1$.
An introduction to the classes $\Sigma^P_k$ can be found in the book by Papadimitriou \cite{DBLP:books/daglib/0072413}.

A many-one reduction or Karp reduction from a language $L$ to a language $L'$ is a map $f : \{0,1\}^* \to \{0,1\}^*$ such that $w \in L$ iff $f(w) \in L'$ for all $w \in \{0,1\}^*$.
A language $L$ is $\Sigma^P_k$-hard, if every $L' \in \Sigma^P_k$ can be reduced to $L$ with a polynomial-time many-one reduction.
If $L$ is both $\Sigma^P_k$-hard and contained in $\Sigma^P_k$, it is $\Sigma^P_k$-complete.

\subsection{Problem Definitions}

This paper connects different problems in the realm of facility location on graphs that have a metric space.
We, therefore, introduce all existing relevant problems for this paper and shortly discuss their properties.
We conclude this section by stating the problems {\sc Defensive $\delta$-Covering} and its variants, which we will examine in this paper.

We start with a variant of continuous facility location, which is known as {\sc $\delta$-Covering}.
This problem serves as one foundation for this paper.
This setting has been used to generalize the classic problems {\sc Vertex Cover} and {\sc Dominating Set} into a continuous setting \cite{DBLP:journals/mp/HartmannLW22,DBLP:conf/waoa/HartmannJ24,MegiddoT1983,Tamir1991,Tamir87}, which leads to the following problem.
For a set $S \subseteq P(G)$, we say that $S$ is a $\delta$-cover of $G$, if for every point $p \in P(G)$ there exists a point $q \in S$ such that $d(p,q) \leq \delta$.

\begin{definition}[$\delta$-Covering \cite{DBLP:journals/mp/HartmannLW22}]
    \label{def:deltaCovering}
    Given a graph $G$ and $k \in \mathbb N$, decide if there exists $S \subseteq P(G)$ with $|S| \leq k$ such that $S$ is a $\delta$-cover of $G$.
\end{definition}

The other foundational problem for our paper is defensive domination or defensive dominating set.
Defensive domination is a two-player game between a defender and an attacker.
The game board is a graph.
The game consists of a move of the defender followed by a move of the attacker.
First, the defender places $\ell$ tokens on the vertices.
Second, the attacker places $k$ tokens on the vertices.
The defender wins, if he is able to match a defender token to all attacker tokens, while each defender token on vertex $v$ can only be matched to attacker tokens in the neighborhood $N[v]$ of $v$.
Otherwise, the attacker wins.

Based on this game, we make the following definitions.
A multiset of $k$ attacker tokens $A$ is called a $k$-attack.
A multiset of $\ell$ defender tokens $D$ is called an $\ell$-defense.
We say that an $\ell$-defense \emph{counters} a $k$-attack, if it is possible to match a unique defender token $d \in D$ to each attacker token $a \in A$ such that $v_a$ is reachable by $v_d$.

The original variant of defensive domination considers $A$ and $D$ to be subsets of the vertices of the graph, i.e. defender and attacker are able to place at most one token on every vertex of the graph.
Further, a defender token is able to reach the closed neighborhood $N[v_d]$.
Accordingly, {\sc Defensive Dominating Set} is defined as follows.

\begin{definition}[{\sc Defensive Dominating Set}]
    \label{def:defensiveDominatingSet}
    Given a graph $G$ and $k, \ell \in \mathbb N$, decide whether $G$ admits an $\ell$-defense $D \subseteq V(G)$ that counters every possible $k$-attack $A \subseteq V(G)$ in $G$.
\end{definition}

We generalize the Defensive Dominating Set Problem to the continuous setting, where the attacks and defenses may be points in $P(G)$ instead of just vertices.
A defense $D$ then counters an attack $A$, if for every point $p_a \in A$ we can match a unique point $p_d \in D$ to $p_a$ such that $d(p_a, p_d) \leq \delta$.
We also use the defender token synonymously with the point (resp. vertex) it is placed on.

\begin{definition}[{\sc Defensive $\delta$-Covering With Point Attack}]
    \label{def:defensiveDeltaCover}
    Given a graph $G$ and $k, \ell \in \mathbb N$, decide whether $G$ admits an $\ell$-defense $D \subseteq P(G)$ that counters every possible $k$-attack $A \subseteq P(G)$ in $G$.
\end{definition}

Since this is the most general problem variant, we simply write {\sc Defensive $\delta$-Covering} instead of {\sc Defensive $\delta$-Covering With Point Attack}.
An important observation for defensive dominating sets and defensive $\delta$-covers is that the matching between attack tokens and defender tokens can be computed as a matching over a bipartite graph.
The vertices on the one side are the attacker tokens and on the other side are the defender tokens.
An edge between an attacker token $a \in A$ and defender token $d \in D$ exists if and only if $d$ reaches $a$, i.e., $a \in B^\leq(d,\delta)$.
Hence, we can state the following variant of Hall's marriage theorem, which was already observed by the authors of \cite{DBLP:conf/mfcs/ChaplickGK25} and \cite{DBLP:journals/dm/EkimFP20}. 
We define $count_D(A) = |A \cap B^\leq(D,\delta)|$ for the given $\delta$.
\begin{observation}
    A defense $D$ counters every $k$-attack in $G$ if and only if for every $k$-attack $A$, we have $|A| \leq count_D(A)$.
\end{observation}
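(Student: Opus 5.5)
We read $count_D(A)$ as the number of defender tokens that can reach at least one token of $A$, i.e.\ $|D \cap B^\leq(A,\delta)|$, counted with multiplicity when $D$ is a multiset. This is the quantity Hall's condition needs. Taken literally, $|A \cap B^\leq(D,\delta)|$ counts attack tokens and is too weak: two attack tokens next to a single defender would satisfy it without being counterable. The plan is to reduce the statement to Hall's marriage theorem on the finite bipartite graph $H_{A,D}$ described just before the observation. Its two sides are the tokens of $A$ and the tokens of $D$, with tokens of a multiset treated as distinct copies. There is an edge $\{a,d\}$ iff $dist(a,d)\le\delta$. Because $A$ and $D$ are finite, $H_{A,D}$ is an ordinary finite bipartite graph even though the points lie in the continuum $P(G)$. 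By definition, $D$ counters $A$ exactly when $H_{A,D}$ has a matching saturating $A$.

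\emph{Necessity.} Assume $D$ counters every $k$-attack and fix a $k$-attack $A$. Take a saturating matching $f\colon A\to D$. Its image consists of $|A|$ distinct defender tokens, and each lies within distance $\delta$ of its partner in $A$. Hence $|A| \le |D\cap B^\leq(A,\delta)| = count_D(A)$.

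\emph{Sufficiency.} Assume $|A'|\le count_D(A')$ for every $k$-attack $A'$, and fix a $k$-attack $A$. For every sub(multi)set $A'\subseteq A$, the neighbourhood of $A'$ in $H_{A,D}$ is exactly the set of defender tokens in $B^\leq(A',\delta)$, so its size is $count_D(A')$. The key point is that $A'$ is itself an admissible attack. It has at most $k$ tokens, it lies in $P(G)$ (or in $V(G)$ for the vertex-attack variants), and it is a set whenever $A$ is one. Since attacks are of size \emph{at most} $k$, as in the problem descriptions, the hypothesis applies to $A'$ and gives $|A'|\le|N_{H_{A,D}}(A')|$. Hall's theorem then yields a matching saturating $A$, so $D$ counters $A$.

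The only step needing care is this closure of the family of attacks under taking subsets. If a $k$-attack were read as having exactly $k$ tokens, the hypothesis would only be available for $|A'|=k$. The fix is to state the observation for attacks of size at most $k$, which is the reading used throughout the paper. Separately, we must make sure that multiplicities are handled consistently on both sides, by splitting repeated tokens into distinct copies, so that Hall's theorem applies verbatim in the multiset variants as well.
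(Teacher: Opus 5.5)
Your proof is correct and is the Hall's-theorem argument the paper intends; the paper gives no separate proof, only the remark that countering means a matching saturating $A$ in the bipartite attacker--defender graph. Your reading $count_D(A)=|D\cap B^\leq(A,\delta)|$ (with multiplicities) and your requirement that attacks have size at most $k$ are the right repairs of the literal definitions. The former matches how $count_D$ is actually used in the paper's $\Sigma^P_2$-hardness proof for $\delta\in[\frac{3}{2},2)$, and the latter is needed because with exactly $k$ attacked vertices the sufficiency direction can fail.
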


We start by dividing this problem into different variants which exhibit different behaviors.
The first variant is {\sc Defensive $\delta$-Covering With Vertex Attack}, in which the defender is able to place its tokens onto every point $P(G)$ of the metric space of graph $G$, while the attacker is constrained to only attack vertices, i.e., $A \subseteq V(G)$.

\begin{definition}[{\sc Defensive $\delta$-Covering With Vertex Attack}]
    \label{def:defensiveDeltaCoverWithVertexAttack}
    Given a graph $G$ and $k, \ell \in \mathbb N$, decide whether $G$ admits an $\ell$-defense $D \subseteq P(G)$ that counters every possible $k$-attack $A \subseteq V(G)$ in $G$.
\end{definition}

It is further possible to distinguish between set and multiset defenses or attacks.
In \cite{DBLP:conf/mfcs/ChaplickGK25} the authors also consider the corresponding relaxation of Defensive Dominating Set, where $D$ is allowed to be a multiset, i.e., multiple defenders may be placed on the same vertex.
We refer to that problem as {\sc Defensive Dominating Set with Multiset Defense}. 
The authors of \cite{DBLP:conf/mfcs/ChaplickGK25} indicate that it is $\Sigma^P_2$-complete as well.

Further, we consider the relaxation where $A$ is allowed to be a multiset, i.e., a vertex may be attacked multiple times, and then also has to be defended multiple times.
We refer to that problem as {\sc Defensive Dominating Set with Multiset Attack}. 
The authors of \cite{DBLP:conf/mfcs/ChaplickGK25} mention that this is a different problem, which we prove in the following.

\section{Distance-$d$ Defensive Domination}
\label{sec:distanceDDefensiveDomination}

We start the analysis of the problem family of {\sc Defensive $\delta$-Covering} with an intermediary result, which concentrates on {\sc Defensive Dominating Set}, in which a defender token on a vertex $v$ is not only able to defined the distance-$1$ neighborhood $N[v]$ of a vertex $v$, but the distance-$d$ neighborhood $N_d[v]$.
Accordingly, we define {\sc Distance-$d$ Defensive Dominating Set} as follows.

\begin{definition}[{\sc Distance-$d$ Defensive Dominating Set}]
    \label{def:distanceDefensiveDominatingSet}
    Given a graph $G$ and $k, \ell \in \N$, decide whether $G$ admits an $\ell$-defense $D \subseteq V(G)$ that counters every possible $k$-attack $A \subseteq V(G)$ in $G$.
    A defense $D \subseteq V(G)$ counters an attack $A \subseteq V(G)$, if for every $v_a \in A$ there is a unique $v_d \in D$ such that $dist(v_a, v_d) \leq d$.
\end{definition}

It is easy to see that {\sc Distance-$d$ Defensive Dominating Set} is $\Sigma^P_2$-complete for $d=1$, as then it is equivalent to regular Defensive Dominating Set.
We further prove $\Sigma^P_2$-completeness for all $d \in \N$.

\begin{lemma}
    \label{lemma:distanceDefensiveDominationCompleteness}
    {\sc Distance-$d$ Defensive Dominating Set} is $\Sigma^P_2$-complete for any $d \in \N$.
\end{lemma}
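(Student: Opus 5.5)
Membership in $\Sigma^P_2$ is immediate. The certificate is a defense $D \subseteq V(G)$ with $|D| \leq \ell$. The universally quantified string encodes an attack $A \subseteq V(G)$ with $|A| \leq k$. The verifier computes all pairwise distances by BFS, builds the bipartite graph between $A$ and $D$ with an edge whenever $dist(v_a,v_d) \leq d$, and checks for a matching of size $|A|$ in polynomial time. By the Hall-type observation this decides whether $D$ counters $A$.

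For hardness I reduce from {\sc Defensive Dominating Set}, which is the case $d=1$ and is $\Sigma^P_2$-complete by \cite{DBLP:conf/mfcs/ChaplickGK25}. I use a pendant-path construction. Given $(G,k,\ell)$, build $G'$ by attaching to every vertex $v \in V(G)$ a path $v = v^0, v^1, \dots, v^{m}$ of new vertices. Keep the edges of $G$ between the original vertices, and let $m$ depend on $d$, with a first guess of $m = d-1$ or $2d$.

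The alternative is to subdivide $G$ and add pendant paths so that distance $d$ in $G'$ between "anchor" vertices corresponds to adjacency in $G$. The natural choice uses the $(2d-1)$-subdivision idea only if needed. I would first try the simpler variant: keep $G$'s edges and attach to each $v$ a path of length $d-1$ whose far end $v^{d-1}$ is a "gadget". The point is that a defender reaching $v^{d-1}$ must sit on the path of $v$, on $v$ itself, or within distance $d-(d-1)=1$ of $v$ via $v$. That mirrors $N[v]$, but it also makes $v$'s neighbors' path vertices reach $v$, which breaks the correspondence. To prevent this, I would instead make every original edge a path of length $2d-1$ and place defenders only meaningfully at original vertices.

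Concretely, I take $G' = G_{2d-1}$ plus, at each original vertex $v$, a pendant path of length $d$ ending in a leaf $v^{*}$. I set $k' = k + N$ and $\ell' = \ell + N$, where $N$ counts the extra forced defenders needed so that all subdivision and path vertices are always defensible. The idea is that the attacker gets extra budget that is only useful against the original structure, and the defender must spend a fixed number of tokens on the gadgets. This bookkeeping is the part I would calibrate carefully, possibly by replacing single pendant paths with many parallel copies to force defender positions.

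The correctness argument has two directions. \emph{Completeness}: map a defensive dominating set $D$ of $G$ to $D'$ consisting of the corresponding original vertices plus the forced gadget tokens, and show via Hall's condition that any $k'$-attack is countered. Attacks on gadget or subdivision vertices are absorbed by the dedicated tokens, and attacks concentrated on original vertices reduce to attacks in $G$, since two original vertices are within distance $d$ of a common original vertex only if adjacent or equal in $G$. \emph{Soundness}: first normalize any defense by moving each token to a canonical position (an original vertex or its gadget) without shrinking any $B^\leq(\cdot,d)$ restricted to the critical vertices. Then argue that leaves force enough tokens to sit in their gadgets, and that the remaining at most $\ell$ tokens yield a defensive dominating set of $G$.

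I expect this normalization/exchange step in soundness to be the main obstacle. A token on a subdivision vertex halfway between $u$ and $w$ reaches both of them, and the reduction must show that such a token is never more useful than one on $u$ or $w$. This is where the choice of subdivision length $2d-1$ and the size of the attacker's surplus on gadgets must be tuned. I would try first to make the original vertices the only vertices within distance $d$ of their own leaf $v^*$ outside the gadget, so that attacking all leaves plus original vertices pins the defense down.
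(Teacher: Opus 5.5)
\textbf{There is a genuine gap: the hardness reduction is never pinned down, and the one construction you commit to does not work.} Your membership argument is correct and is the same as the paper's.

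\emph{The subdivision length is wrong.} In $G_{2d-1}$, two vertices adjacent in $G$ are at distance $2d-1>d$ for every $d\ge 2$. So for an original vertex $x$ you get $B^{\leq}(x,d)\cap V(G)=\{x\}$. What the reduction needs is $B^{\leq}(x,d)\cap V(G)=N_G[x]$, and your own stated goal (distance $d$ between anchors corresponds to adjacency) requires subdividing into paths of length exactly $d$. A token on the subdivided edge $\{u,w\}$ at distance $d-1$ or $d$ from $u$ reaches $u$ and $w$ and no other original vertex. Hence on $V(G)$ every token covers at most the two endpoints of a single edge of $G$. This is essentially the tree-factor structure of Theorem~\ref{thm:defensiveDeltaCoverVertexAttackSmallerOneCompleteness}, not closed neighbourhoods.

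Two consequences follow. Your completeness direction breaks, because a token on $u$ no longer defends $N_G(u)$. The exchange step you flag as the main obstacle is not merely delicate but false, because the mid-edge token is strictly more useful than a token on $u$ or on $w$.

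\emph{The budget choice is also harmful.} The forcing gadget and $N$ are never fixed, and raising the attack budget to $k'=k+N$ works against you. Forced gadget tokens must not reach original vertices, or they interfere with the correspondence. So the attacker can spend the surplus on $V(G)$, and the defense there would have to counter $(k+N)$-attacks rather than $k$-attacks.

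\emph{What the paper does instead.} It keeps the budget at $k'=k$, using the fact that a defense must counter every attack separately, so each gadget can be pinned down by its own $k$-attack.
\begin{itemize}
\item It subdivides every edge into a path of length exactly $d$. Then $B^{\leq}(x,d)\cap V(G)=N_G[x]$, since non-adjacent original vertices are at distance at least $2d$.
\item For each edge $\{u,v\}$ it attaches a path from the middle of the subdivided edge to a clique $C_k$, then a path of $d-1$ vertices, then an independent set $I_k$ of $k$ vertices.
\item The lengths are chosen so that every vertex of $C_k$ is within distance $d$ of all internal gadget vertices, but at distance $d+1$ from $u$ and $v$. Also, everything within distance $d$ of $I_k$ lies in $C_k$, $I_k$, or the path between them.
\item The $k$-attack on $I_k$ therefore forces $k$ tokens into that region for each edge, giving $\ell'=\ell+k|E(G)|$. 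These tokens never reach $V(G)$, so Hall's condition splits between the gadgets and the original vertices.
\item A leftover token inside the gadget of $\{u,v\}$ reaches at most $u$ and $v$ among the original vertices. The paper therefore replaces it by a token on $u$, whose ball contains $N_G[u]\ni v$.
\end{itemize}
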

\begin{proof}

    We prove the containment in $\Sigma^P_2$ by providing an $\exists\forall$-witness of polynomial length.
    With the $\exists$-quantified string, we encode the defense and with the $\forall$-quantified string, we encode the attack, each as a subset of $V(G)$.
    Then, the verifier computes a matching in the bipartite graph consisting of the attacked vertices on one side and the defender tokens on the other side; an edge exists between the attacked vertex and the defender token if and only if the defender token is in distance $d$ to the attacked vertex.
    If and only if the matching covers all attacked vertices, the given instance is a yes-instance.

    To show hardness, let $(G,k,\ell)$ be the instance of {\sc Defensive Domination}, where $G$ is the graph, $k$ is the number of attackers, and $\ell$ the number of defenders.
    We construct an instance $(G', k', \ell')$ instance for {\sc Distance-$d$ Defensive Domination} such that $(G,k,\ell)$ is a yes-instance, if and only if $(G', k', \ell')$ is a yes-instance.

    For the construction, we substitute each edge $\{u,v\} \in E(G)$ by the gadget depicted in \Cref{fig:distanceDefensiveDomination:odd}, if distance $d$ is odd, and in \Cref{fig:distanceDefensiveDomination:even}, if distance $d$ is even.
    We obtain the gadget by subdividing the edge $\{u,v\}$ $d-1$ times and appending a path of $\lfloor \frac{d+1}{2} \rfloor$ vertices in the middle of the subdivided path between $u$ and $v$, followed by a clique $C^{\{u,v\}}_k$ of size $k$, another path of $d-1$ vertices, and an independent set $I^{\{u,v\}}_k$ of size $k$.
    Additionally, we set $k' = k$ and $\ell' = k|E(G)| + \ell$.
    
    We prove that the additional $k|E(G)|$ defender tokens need to be placed into the cliques $C^{\{u,v\}}_k$ for all edges $\{u,v\} \in E(G)$, while the original $\ell$ defenders need to be placed as in the solution to the original instance $(G,k,\ell)$ of {\sc Defensive Domination}.

    For a contradiction, let us assume that the defender places less than $k$ tokens into the gadget (excluding $u$ and $v$).
    Then, the attacker attacks all vertices in $I^{\{u,v\}}_k$ and wins.
    In order to defend $I^{\{u,v\}}_k$, the defender must place $k$ tokens into $C^{\{u,v\}}_k \cup I^{\{u,v\}}_k$, and the path between $C^{\{u,v\}}_k$ and $I^{\{u,v\}}_k$.
    All other vertices have distance more than $d$ from $I^{\{u,v\}}_k$.
    Thus, none of the defender tokens of the gadget are able to defend $u$ or $v$.    
    Furthermore, to defend all the vertices between $u$, $v$, and $C^{\{u,v\}}_k$, all defender tokens can be placed into $C^{\{u,v\}}_k$.
    Thus, by placing $k$ of its tokens into $C^{\{u,v\}}_k$, the defender is able to defend all possible attacks that include any vertices of the gadget (excluding $u$ and $v$).
    In conclusion, by the above strategy, all original vertices $V(G)$ are not defended, while all newly introduced vertices are defended.

    For the forward direction, consider the original $\ell$-defense $D$ in graph $G$.
    Using the $(k|E(G)|+\ell)$-defense $D' = D \cup \bigcup_{\{u,v\} \in E(G)} C^{\{u,v\}}_k$, all vertices in the edge gadgets are defended by the corresponding $C^{\{u,v\}}_k$.
    Furthermore, since each edge in $G$ is subdivided into a path of length $d$ in $G'$, the distance between vertices $u, v$ is $d$ if $\{u,v\} \in E(G)$.
    Thus, each distance $\ell$-defense to every $k$-attack in $G$ defends all original vertices $V(G)$ in $G'$
    In conclusion, each $\ell$-defense $D$ that counters every $k$-attack in $G$ can be extended to an $(k|E(G)|+\ell)$-defense $D'$ that counters every $k$-attack in $G'$.
    
    For the backward direction, consider a $(k|E(G)|+\ell)$-defense $D'$ that counters every $k$-attack in $G'$. 
    By the above argument $k|E(G)|$ defender tokens need to be placed into $C^{\{u,v\}}_k$ for each $\{u,v\} \in E(G)$.
    This leaves the original vertices $v \in V(G)$ undefended.
    Therefore, the remaining $\ell$ defender tokens need to defend at least the vertices $V(G)$ in $G'$.
    We argue that these defender tokens can be placed onto $V(G)$.
    Since each edge in $G$ is subdivided into a path of length $d$ in $G'$, the distance between vertices $u, v$ is $d$ if $\{u,v\} \in E(G)$ and greater than $d$, if $\{u,v\} \notin E(G)$.
    Therefore, any defender token that is placed on vertices within a gadget of some edge $\{u,v\}$ can defend at most defend $u$ and $v$ but no $w \in V(G) \setminus \{u,v\}$ in $G'$.
    However, by placing the defender token on $u$ or $v$ is always a better strategy because the distance between $u$ and $v$ is $d$ and the neighborhood of $u$ or $v$ in the original $V(G)$ in $G'$ can be defended.
    In conclusion, each distance-$d$ $\ell'$-defense to every $k$-attack in $G'$ restricted to $V$ is equivalent to an $\ell$-defense to every $k$-attack in $G$.
\end{proof}
\begin{figure}[t]
    \begin{subfigure}{0.48\textwidth}
        \centering
        \scalebox{0.36}{
            \begin{tikzpicture}[
                cross/.style={%
                    path picture={%
                        \draw[thick] (path picture bounding box.south west) -- (path picture bounding box.north east);
                        \draw[thick] (path picture bounding box.north west) -- (path picture bounding box.south east);
                    }
                }
            ]            
                \node[circle,draw,thick,minimum size=0.75cm] (u) at (-1,8) {}; \node[] (ul) at (-1.75,8) {\Huge $u$};
                \node[circle,draw,thick,minimum size=0.75cm] (v) at (-1,4) {}; \node[] (vl) at (-1.75,4) {\Huge $v$};
                \node[circle,draw] (1) at (-1,6.75) {};
                \node[circle,draw] (3) at (-1,5.25) {};
        
                \draw[black] (u) -- (1);
                \draw[black] (v) -- (3);
                \draw[black] (1) -- (3);
        
                \node[circle,draw] (21) at (0,6) {};
                \node[circle,draw] (22) at (1,6) {};
        
                \draw[black] (1) -- (21);
                \draw[black] (3) -- (21);
                \draw[black] (21) -- (22);
        
                \node[circle,draw,thick,minimum size=3cm] (C) at (4,6) {}; \node[] (Cl) at (5,8) {\Huge $C_k$};
                \node[circle,draw,thick,minimum size=0.75cm,cross] (v) at (4,6) {};
                \node[circle,draw,thick,minimum size=0.75cm,cross] (v) at (4,7) {};
                \node[circle,draw,thick,minimum size=0.75cm,cross] (v) at (3.45,5.2) {};
                \node[circle,draw,thick,minimum size=0.75cm,cross] (v) at (4.55,5.2) {};
                \node[circle,draw,thick,minimum size=0.75cm,cross] (v) at (4.9,6.25) {};
                \node[circle,draw,thick,minimum size=0.75cm,cross] (v) at (3.1,6.25) {};
                \node[] (Cl1) at (2.85,5) {};
                \node[] (Cl2) at (2.5,6) {};
                \node[] (Cl3) at (2.85,7) {};
                \node[] (Cr1) at (5.15,5) {};
                \node[] (Cr2) at (5.5,6) {};
                \node[] (Cr3) at (5.15,7) {};         
                    
                \draw[draw=black] (10,3.5) rectangle ++(1,5); \node[] (Il) at (11.5,8) {\Huge $I_k$};
                \node[circle,draw,thick,minimum size=0.75cm] (v) at (10.5,4) {};
                \node[circle,draw,thick,minimum size=0.75cm] (v) at (10.5,5) {};
                \node[] (v) at (10.5,6) {\vdots};
                \node[circle,draw,thick,minimum size=0.75cm] (v) at (10.5,7) {};
                \node[circle,draw,thick,minimum size=0.75cm] (v) at (10.5,8) {};
                \node[] (Il1) at (10,4.75) {};
                \node[] (Il2) at (10,6) {};
                \node[] (Il3) at (10,7.25) {};
                    
                \draw[lightgray,thick] (Cl1) -- (22);
                \draw[lightgray,thick] (Cl2) -- (22);
                \draw[lightgray,thick] (Cl3) -- (22);
    
                \node[circle,draw] (C21) at (7,6) {};
                \node[circle,draw] (C23) at (8,6) {};
    
                \draw[black] (C21) -- (C23);
    
                \draw[lightgray,thick] (Cr1) -- (C21);
                \draw[lightgray,thick] (Cr2) -- (C21);
                \draw[lightgray,thick] (Cr3) -- (C21);
                    
                \draw[lightgray,thick] (C23) -- (Il1);
                \draw[lightgray,thick] (C23) -- (Il2);
                \draw[lightgray,thick] (C23) -- (Il3);
    
                \draw[decoration={brace,raise=70pt, amplitude=10pt},decorate] ([xshift=-2pt]21.west) -- node[above=80pt] {\huge $\frac{d
                +1}{2}$} ([xshift=2pt]22.east);
    
                \draw[decoration={brace, amplitude=10pt,raise=20pt},decorate] (-1,5) -- node[left=35pt] {\huge $d-1$} (-1,7);
    
                \draw[decoration={brace,raise=70pt, amplitude=10pt},decorate] ([xshift=-2pt]C21.west) -- node[above=80pt] {\huge $d-1$} ([xshift=2pt]C23.east);
            \end{tikzpicture}
        }
        \caption{Gadget for odd distances (here: $d=3$).}
        \label{fig:distanceDefensiveDomination:odd}
    \end{subfigure}
    \hfill
    \begin{subfigure}{0.48\textwidth}
        \centering
        \scalebox{0.36}{
            \begin{tikzpicture}[
                cross/.style={%
                    path picture={%
                        \draw[thick] (path picture bounding box.south west) -- (path picture bounding box.north east);
                        \draw[thick] (path picture bounding box.north west) -- (path picture bounding box.south east);
                    }
                }
            ]            
                \node[circle,draw,thick,minimum size=0.75cm] (u) at (-1,8) {}; \node[] (ul) at (-1.75,8) {\Huge $u$};
                \node[circle,draw,thick,minimum size=0.75cm] (v) at (-1,4) {}; \node[] (vl) at (-1.75,4) {\Huge $v$};
                \node[circle,draw] (1) at (-1,7) {};
                \node[circle,draw] (2) at (-1,6) {};
                \node[circle,draw] (3) at (-1,5) {};
        
                \draw[black] (u) -- (1);
                \draw[black] (v) -- (3);
                \draw[black] (1) -- (2);
                \draw[black] (2) -- (3);
        
                \node[circle,draw] (21) at (0,6) {};
                \node[circle,draw] (22) at (1,6) {};
        
                \draw[black] (2) -- (21);
                \draw[black] (21) -- (22);
        
                \node[circle,draw,thick,minimum size=3cm] (C) at (4,6) {}; \node[] (Cl) at (5,8) {\Huge $C_k$};
                \node[circle,draw,thick,minimum size=0.75cm,cross] (v) at (4,6) {};
                \node[circle,draw,thick,minimum size=0.75cm,cross] (v) at (4,7) {};
                \node[circle,draw,thick,minimum size=0.75cm,cross] (v) at (3.45,5.2) {};
                \node[circle,draw,thick,minimum size=0.75cm,cross] (v) at (4.55,5.2) {};
                \node[circle,draw,thick,minimum size=0.75cm,cross] (v) at (4.9,6.25) {};
                \node[circle,draw,thick,minimum size=0.75cm,cross] (v) at (3.1,6.25) {};
                \node[] (Cl1) at (2.85,5) {};
                \node[] (Cl2) at (2.5,6) {};
                \node[] (Cl3) at (2.85,7) {};
                \node[] (Cr1) at (5.15,5) {};
                \node[] (Cr2) at (5.5,6) {};
                \node[] (Cr3) at (5.15,7) {};         
                    
                \draw[draw=black] (11,3.5) rectangle ++(1,5); \node[] (Il) at (12.5,8) {\Huge $I_k$};
                \node[circle,draw,thick,minimum size=0.75cm] (v) at (11.5,4) {};
                \node[circle,draw,thick,minimum size=0.75cm] (v) at (11.5,5) {};
                \node[] (v) at (11.5,6) {\vdots};
                \node[circle,draw,thick,minimum size=0.75cm] (v) at (11.5,7) {};
                \node[circle,draw,thick,minimum size=0.75cm] (v) at (11.5,8) {};
                \node[] (Il1) at (11,4.75) {};
                \node[] (Il2) at (11,6) {};
                \node[] (Il3) at (11,7.25) {};
                    
                \draw[lightgray,thick] (Cl1) -- (22);
                \draw[lightgray,thick] (Cl2) -- (22);
                \draw[lightgray,thick] (Cl3) -- (22);
    
                \node[circle,draw] (C21) at (7,6) {};
                \node[circle,draw] (C22) at (8,6) {};
                \node[circle,draw] (C23) at (9,6) {};
    
                \draw[black] (C21) -- (C22);
                \draw[black] (C22) -- (C23);
    
                \draw[lightgray,thick] (Cr1) -- (C21);
                \draw[lightgray,thick] (Cr2) -- (C21);
                \draw[lightgray,thick] (Cr3) -- (C21);
                    
                \draw[lightgray,thick] (C23) -- (Il1);
                \draw[lightgray,thick] (C23) -- (Il2);
                \draw[lightgray,thick] (C23) -- (Il3);
    
                \draw[decoration={brace,raise=70pt, amplitude=10pt},decorate] ([xshift=-2pt]21.west) -- node[above=80pt] {\huge $\frac{d}{2}$} ([xshift=2pt]22.east);
    
                \draw[decoration={brace, amplitude=10pt,raise=20pt},decorate] (-1,4.75) -- node[left=35pt] {\huge $d-1$} (-1,7.25);
    
                \draw[decoration={brace,raise=70pt, amplitude=10pt},decorate] ([xshift=-2pt]C21.west) -- node[above=80pt] {\huge $d-1$} ([xshift=2pt]C23.east);
            \end{tikzpicture}
        }
        \caption{Gadget for even distances (here: $d=4$).}
        \label{fig:distanceDefensiveDomination:even}
    \end{subfigure}
    \caption{Gadgets replacing the edges for the construction of \Cref{lemma:distanceDefensiveDominationCompleteness}.}
\end{figure}

\begin{corollary}
    \label{cor:distanceDefensiveDominationMultisetDefenseCompleteness}
    {\sc Distance-$d$ Defensive Domination with Multiset Defense} is $\Sigma^P_2$-complete for any $d \in \N$.
\end{corollary}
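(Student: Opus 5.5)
The plan is to follow the proof of \Cref{lemma:distanceDefensiveDominationCompleteness} almost verbatim, now starting from {\sc Defensive Dominating Set with Multiset Defense}. The paper notes that Chaplick et al.~\cite{DBLP:conf/mfcs/ChaplickGK25} indicate this problem is $\Sigma^P_2$-complete, and it coincides with the case $d=1$.

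Containment in $\Sigma^P_2$ is immediate. The $\exists$-string encodes the defense as a vector of multiplicities $\mult{v}{D} \le \ell$, one for each $v \in V(G)$; it suffices to have $\ell \le k|V(G)|$, since otherwise the instance is trivially a yes-instance. The $\forall$-string encodes an attack $A \subseteq V(G)$ of size at most $k$. The verifier checks for a saturating matching in the bipartite graph, where a defended vertex $v$ is expanded into $\mult{v}{D}$ copies. Equivalently, it runs a flow computation with capacities $\mult{v}{D}$ on the defender side, so only polynomially many capacity units are involved.

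For hardness, I would reuse exactly the gadget construction of \Cref{lemma:distanceDefensiveDominationCompleteness}, with $k'=k$ and $\ell'=k|E(G)|+\ell$, and re-verify both directions when defenses are multisets.

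\textbf{Forward direction.} Take a multiset defense $D$ of $G$ and add one token to each vertex of every clique $C^{\{u,v\}}_k$. Original vertices at distance exactly $d$ in $G'$ are precisely the neighbours in $G$, so $D$ still counters every attack on $V(G)$. The clique tokens counter every attack inside the gadgets, and the combined attacks decompose because gadget tokens cannot reach $V(G)$ and vice versa.

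\textbf{Backward direction.} Attacking $I^{\{u,v\}}_k$ again forces at least $k$ tokens, counted with multiplicity, into the part of each gadget within distance $d$ of $I^{\{u,v\}}_k$. That part is disjoint from the $d$-balls of the original vertices, so at most $\ell$ tokens remain that can defend $V(G)$. The remaining token placements then project to $V(G)$ as in the lemma: a token inside the gadget of $\{u,v\}$ reaches at most $\{u,v\}$ among the original vertices, and moving it onto $u$ only enlarges its reach. Repeated moves produce a multiset, which is allowed here. Restricted to attacks on $V(G)$, this yields a multiset defense of size at most $\ell$ countering all $k$-attacks in $G$.

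\textbf{Main obstacle.} The delicate step is the counting argument in the backward direction. Attacks could mix gadget vertices with original vertices, and a token on a subdivision vertex near $u$ might serve both a gadget attack and an original attack. I would handle this with the Hall-type observation. Apply it to the attack consisting of the $I$-vertices of a gadget together with the chosen original vertices. This shows that the tokens reserved for $I^{\{u,v\}}_k$ can be assumed to lie in or near $C^{\{u,v\}}_k$, and moving them into $C^{\{u,v\}}_k$ does not hurt any attack, since $C^{\{u,v\}}_k$ dominates the whole gadget within distance $d$. Multiplicities make no difference to this exchange argument, so the proof transfers unchanged.
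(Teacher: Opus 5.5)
Your proposal is correct and takes the paper's route: containment via a polynomially bounded multiplicity encoding of the defense, and hardness by reusing the gadget reduction of \Cref{lemma:distanceDefensiveDominationCompleteness} from {\sc Defensive Dominating Set with Multiset Defense}, where moving gadget tokens onto $u$ or $v$ is harmless because the defense may be a multiset. One caveat: the Hall-type fix in your ``main obstacle'' paragraph is unnecessary, and it is also not legitimate as stated, because the attack on $I^{\{u,v\}}_k$ together with original vertices has more than $k'=k$ tokens. Your backward-direction paragraph already settles the count on its own: the $d$-ball around $I^{\{u,v\}}_k$ consists only of $I^{\{u,v\}}_k$, the path leading to the clique, and $C^{\{u,v\}}_k$, and all of these lie at distance greater than $d$ from $V(G)$.
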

\begin{proof}

    For the containment, note that $|V|$ defender tokens are enough to defend all possible attacks since the attacker can attack each vertex only once.
    Thus, the number of defender tokens is polynomial and we can reuse the $\exists\forall$-witness and verifier from above.

    The construction and proof of \Cref{lemma:distanceDefensiveDominationCompleteness} can be reused.
    We reduce from {\sc Defensive Domination with Multiset Defense}. 
    If we consider each gadget locally, the defender still needs to place $k$ of its tokens into $C_k$, $I_k$, and the path between $C_k$ and $I_k$ to defend each $k$-attack on $I_k$.    
    Since the clique $C_k$ has enough space to inhabit $k$ defender tokens, it does not help to place more than one token onto one vertex.
    Thus, every distance-$d$ multiset $\ell'$-defense in $G'$ restricted to $V$ is equivalent to a multiset $\ell$-defense in $G$.
\end{proof}

\subsection{Multiset Attack}

A variant of {\sc Distance-$d$ Dominating Set} emerges, if we consider multiset attacks.
As we will prove, this variant is equivalent to the problem {\sc Distance-$d$ $k$-tuple Dominating Set}, which is defined as follows.

\begin{definition}[{\sc Distance-$d$ $k$-tuple Dominating Set}]
    \label{def:kTupleDominatingSet}
    Given a graph $G$ and $k, \ell \in \mathbb N$, decide whether there is a (multi-)set $D \subseteq V(G)$ of size at most $\ell$ such that for every vertex $v \in V(G)$ it holds that $\sum_{u \in N[v]} \mult{u}{D} \geq k$.
\end{definition}

\begin{lemma}
    \label{lemma:kTupleDominatingSetEquivalence}
    {\sc Defensive Dominating Set with Multiset Attack} is equivalent to {\sc $k$-tuple Dominating Set}, regardless whether the defense may be a multiset or not.
\end{lemma}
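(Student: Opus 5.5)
We show that, for fixed $k$ and $\ell$, a (multi)set $D \subseteq V(G)$ counters every multiset $k$-attack if and only if $\sum_{u \in N[v]} \mult{u}{D} \geq k$ for every $v \in V(G)$. The two problems then have the same yes-instances on the same input $(G,k,\ell)$, so the identity map is the reduction. The argument does not depend on whether $D$ is a set or a multiset, provided we compare the set-defense version with the set version of $k$-tuple domination, and likewise for multisets. Both directions go through the multiset form of the Hall-type observation stated in the preliminaries.

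\emph{Necessity.} Suppose some vertex $v$ has $\sum_{u \in N[v]} \mult{u}{D} < k$. The attacker places all $k$ tokens on $v$, which is allowed because the attack is a multiset. Each attacker token can only be matched to a defender token on a vertex of $N[v]$. There are fewer than $k$ such tokens, so no matching saturates the attack, and $D$ does not counter it.

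\emph{Sufficiency.} Suppose every vertex has at least $k$ defender tokens in its closed neighbourhood, and let $A$ be any multiset $k$-attack. We apply Hall's theorem to the bipartite graph whose sides are the attacker tokens and the defender tokens (copies counted separately). Let $S$ be any subset of the attacker tokens. Then $|S| \leq k$, and $S$ contains at least one token, say on vertex $v$. The neighbourhood of $S$ in the bipartite graph contains every defender token on $N[v]$, of which there are at least $k \geq |S|$. Hall's condition therefore holds, so a saturating matching exists and $D$ counters $A$.

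The only delicate point is sufficiency. A general Hall condition would ask for $|N(S)| \geq |S|$ over arbitrary multisets of attacked vertices, which could in principle be a complicated, non-local requirement. It collapses here because every attack has size at most $k$, so the neighbourhood of a single attacked vertex already supplies enough defender tokens. This is also the difference from the set-attack version: there an attack cannot stack $k$ tokens on one vertex, so necessity fails and the problem is harder.
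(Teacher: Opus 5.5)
Your proof is correct and follows the paper's argument essentially verbatim. Necessity comes from stacking all $k$ attack tokens on a vertex with fewer than $k$ defenders in its closed neighbourhood, and sufficiency from Hall's theorem, since any nonempty set of at most $k$ attack tokens already sees the at least $k$ defender tokens in the closed neighbourhood of one of its vertices, including your caveat that the set/multiset status of the defense must match on both sides.
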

\begin{proof}
    Let $D$ be an $\ell$-defense that counters any $k$-attack $A$ in $G$, where $A$ may be a multiset.
    If $D$ is not a {\sc $k$-tuple Dominating Set}, there is a vertex $v \in V(G)$ such that $\sum_{u \in N[v]} \mult{u}{D} \leq k-1$.
    Then $D$ does not defend the attack placing $k$ tokens on $v$, contradiction.

    On the other hand, let $D$ be a {\sc $k$-tuple Dominating Set} of size $\ell$.
    For an arbitrary $k$-attack $A$, we build the following bipartite graph $G' = (A \cup B, E')$.
    One partition is $A$, and for each $v_a \in A$ we add all vertices of $D$ to $B$ that dominate $v_a$ and connect them to $v_a$.
    If a such a dominating vertex is contained multiple times in $D$, we add the respective number of copies to $B$.
    Since $D$ is a $k$-tuple Dominating Set, each vertex $v_a$ has degree $k$ in $G'$.
    Further, $|A| = k$ and thus for every set $A' \subseteq A$ it holds that $|N(A')| \geq |A'|$ in $G'$.
    Therefore, by Hall's Theorem, there is a matching of size $k$ in $G'$, and thus $D$ counters $A$.
    Thus $D$ is also an $\ell$-defense that counters every $k$-attack.

    The arguments above work as long as both or neither the $\ell$-defense and the {\sc $k$-tuple Dominating Set} are allowed to be a multiset.
\end{proof}

It is easy to prove that {\sc Distance-$d$ $k$-tuple Dominating Set} is \NP-complete by a reduction from {\sc Distance-$d$ Dominating Set}.

\begin{lemma}
    \label{lemma:kTupleDominatingSetNPComplete}
    {\sc Distance-$d$ $k$-tuple Dominating Set} is \NP-complete. 
\end{lemma}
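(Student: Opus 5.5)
Membership in \NP{} is immediate: the certificate is the (multi-)set $D$, listed as at most $\ell$ vertices with multiplicities. We may assume $\ell \leq k|V(G)|$; otherwise, placing $k$ copies on every vertex is a trivial yes-witness when multisets are allowed, and in the set version the answer is just whether $N_d[v]$ has at least $k$ vertices for every $v$. The verifier then computes $N_d[v]$ for every $v$ by BFS and checks that $\sum_{u \in N_d[v]} \mult{u}{D} \geq k$. Note that in the distance-$d$ variant the sum in \Cref{def:kTupleDominatingSet} should range over $N_d[v]$.

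For hardness I reduce from {\sc Distance-$d$ Dominating Set}, which is the case $k=1$ and is classically \NP-complete for every fixed $d$. Given an instance $(G,\ell)$ and a target $k$, I build $G'$ by attaching to every vertex $v$ a pendant clique $K^v$ of $k-1$ new vertices, all adjacent to $v$ only. I set $\ell' = \ell + (k-1)|V(G)|$.

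Forward direction: if $D$ is a distance-$d$ dominating set of $G$, then take $D' = D \cup \bigcup_v K^v$. Every original vertex $v$ is within distance $d$ of its own $k-1$ clique vertices and of some vertex of $D$, so it is dominated $k$ times. A clique vertex $x \in K^v$ sees all of $K^v$ (distance $\leq 1$), and it reaches $v$'s dominator because $N_d$ of $x$ contains $N_{d-1}[v]$. This is the delicate point: $x$ may not reach a dominator of $v$ at distance exactly $d$.

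To fix this, I instead attach pendant paths and cliques so that the gadget vertices are forced and dominated internally. Concretely, I hang from $v$ a clique $K^v$ of size $k$. The vertices of $K^v$ dominate each other $k$ times and also dominate $v$ $k$ times, which is too much. So instead I attach to each $v$ a path of length $d$ ending in a clique $Q^v$ of size $k$, whose far end must absorb $k$ tokens. This mirrors the gadget of \Cref{lemma:distanceDefensiveDominationCompleteness} and adds $k$ forced tokens per vertex that also reach $v$ once or not at all, depending on the path length.

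The cleanest alternative is to choose the attachment distance exactly $d$, so that the forced tokens (placed on the clique) dominate $v$ $k$ times together with the whole path. Then duplicate the original graph: $G'$ is $G$ plus these gadgets, and we require $k$-fold domination only relevantly at the original vertices, while the gadget vertices are handled by their own clique.

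The main obstacle is calibrating the gadget so that it contributes exactly $k-1$ to every original vertex and is self-sufficient, while a solution can never profit from moving gadget tokens. I would first try a pendant path $v - p_1 - \dots - p_{d}$ with a clique $Q^v$ of size $k-1$ attached at $p_d$, plus an independent set of $k-1$ leaves beyond $Q^v$ at distance $\leq d$ to force $k-1$ tokens there. Whenever an exchange argument shows tokens in the gadget can be shifted to $Q^v$ without loss, the remaining $\ell$ tokens are then shifted onto $V(G)$ as in \Cref{lemma:distanceDefensiveDominationCompleteness}, and they must form a distance-$d$ dominating set. If exact calibration fails, fall back to the simpler reduction for multiset $D$: from $k$-tuple domination with $D$ a multiset, taking $k$ copies of a distance-$d$ dominating set shows the optimum is exactly $k\gamma_d(G)$. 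This is because any $k$-tuple multiset can be decomposed greedily, so the sum over $v$ bounds it, and that gives hardness directly with $\ell' = k\ell$.
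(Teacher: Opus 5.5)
Your membership argument is fine, but the hardness part has a genuine gap. The reduction you actually pursue is one for a prescribed target $k$, and you never finish it. You refute the pendant-clique gadget yourself. The later path-and-clique variants are left with uncalibrated distances and no correctness proof.

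The fallback you rely on is false. A minimum $k$-tuple (multiset) distance-$d$ dominating set need not have size $k\gamma_d(G)$, and it need not decompose into $k$ dominating sets. Take $C_5$ with $d=1$ and $k=3$: one token on every vertex is a $3$-tuple dominating set of size $5<6=3\gamma(C_5)$. Consequently $(G,\ell)\mapsto(G,k,k\ell)$ is not a reduction. Let $G$ be three disjoint copies of $C_5$. Then $(G,5)$ is a no-instance of Dominating Set, since $\gamma(G)=6$. Yet $(G,3,15)$ is a yes-instance, even in the set version (one token per vertex).

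The missing step is the one you wrote down and then did not use. In \Cref{def:kTupleDominatingSet}, $k$ is part of the input. So the instances with $k=1$ are exactly the instances of {\sc Distance-$d$ Dominating Set}, and $(G,\ell)\mapsto(G,1,\ell)$ is already a polynomial-time reduction. That is the paper's entire hardness argument; no gadget is needed.

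Hardness for every fixed $k\geq 2$ is a stronger statement than the lemma claims. If you want it, you would need a different, fully verified construction, and neither your gadget sketches nor the multiplication-by-$k$ fallback provides one.
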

\begin{proof}
   The \NP-hardness easily follows by reduction from {\sc Distance-$d$ Dominating Set} as {\sc Distance-$d$ $k$-tuple Dominating Set} is equivalent to {\sc Distance-$d$ Dominating Set} for $k = 1$.
   For containment in \NP, we can simply encode the elements of $D$ as well as their multiplicity.
   It is easy to see, that the multiplicity of each element is at most $k$, therefore the length of the certificate is polynomially bounded.
   To verify the certificate, we check the size of $D$, and compute $\sum_{u \in N[v]} \mult{u}{D}$ for every $v \in V(G)$ and accept if every result is at least $k$.
   Both can be done in polynomial time as the encoding length of the occurring numbers is at most $2k$, which completes the proof.
\end{proof}

From the previous two lemmas, we can infer the following.

\begin{corollary}
    \label{corollary:defensiveDominationMultisetBothNPComplete}
    {\sc Defensive Dominating Set with Multiset Attack} is \NP-complete, regardless whether the defense may be a multiset or not.
\end{corollary}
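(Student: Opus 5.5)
The plan is to combine \Cref{lemma:kTupleDominatingSetEquivalence} with \Cref{lemma:kTupleDominatingSetNPComplete} for $d=1$. By \Cref{lemma:kTupleDominatingSetEquivalence}, an $\ell$-defense $D$ counters every (multiset) $k$-attack if and only if $D$ is a $k$-tuple dominating set of size at most $\ell$. This holds whether or not $D$ may be a multiset, provided the same convention is used on both sides. So the identity map on instances $(G,k,\ell)$ is a polynomial-time reduction in both directions between {\sc Defensive Dominating Set with Multiset Attack} and {\sc $k$-tuple Dominating Set} (the latter with set or multiset $D$ accordingly).

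For hardness, I reduce from {\sc Dominating Set}, which is the case $k=1$. With $k=1$, a multiset attack is a single vertex, so multiplicities of the attack play no role. Repeated defenders are also useless when $k=1$, so the set and multiset-defense versions coincide with {\sc Dominating Set}. This gives \NP-hardness in all cases. This is exactly the hardness argument of \Cref{lemma:kTupleDominatingSetNPComplete} specialized to $d=1$.

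For containment in \NP, I use the certificate from \Cref{lemma:kTupleDominatingSetNPComplete}: each vertex of $D$ together with its multiplicity, which may be capped at $k$ without loss of generality. The verifier checks $|D|\le \ell$ and checks $\sum_{u\in N[v]}\mult{u}{D}\ge k$ for every $v$. By the equivalence, this certifies that $D$ counters all multiset $k$-attacks. Attacks need not be enumerated, which is why the problem drops from $\Sigma^P_2$ to \NP.

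The only point needing care is the multiset-defense case, because $\ell$ and $k$ are given in binary. A multiset of size $\ell$ could have large multiplicities, so I must justify the cap at $k$. Reducing any multiplicity above $k$ down to $k$ keeps every vertex $k$-tuple dominated and does not increase the size, so a capped certificate exists whenever any solution exists. For $k>|V|$ in the set-defense case the instance is trivially a no-instance, since the attack on $k$ copies of a single vertex cannot be countered by a set of at most $|V|$ defenders. I expect no real obstacle; the corollary is essentially bookkeeping on top of the two lemmas.
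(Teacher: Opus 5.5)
Your proposal is correct and follows the paper's route: the paper obtains the corollary directly by chaining the equivalence with $k$-tuple domination (\Cref{lemma:kTupleDominatingSetEquivalence}) with the \NP-completeness of that problem (\Cref{lemma:kTupleDominatingSetNPComplete}) at $d=1$. Your explicit justification for capping multiplicities at $k$, which matters because $k$ and $\ell$ are encoded in binary, is a more careful version of the paper's certificate-size argument, and your remark about $k>|V|$ in the set-defense case is correct but not needed.
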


\section{Defensive $\delta$-Covering with Vertex Attack}
\label{sec:defensiveDeltaCoverWithVertexAttack}

In this section, we study the problem {\sc Defensive $\delta$-Covering With Vertex Attack}, in which the attacker is only allowed to attack elements of $V(G)$.
We start this section by establishing structural observations that help us to discretize the possible placements of defender tokens onto points in $P(G)$.
The defenses admit different structures for different $\delta$, which we can group into three groups, where $\tilde\delta = 0$ or $\tilde\delta = \frac{1}{2}$, $\tilde\delta \in (0,\frac{1}{2})$, and $\tilde\delta \in (\frac{1}{2},1)$.

The following lemma shows that for $\delta$ with a fractional part $\tilde\delta < \half$, an optimal defense may place all tokens on vertices, or very close to vertices if the defense should not be a multiset.

\begin{lemma}
    \label{lemma:defensiveDeltaCoverVertexAttackFractionalPartSmallerHalf}
    For any graph $G$ and any $\delta$ with $\tilde\delta < \half$, it holds that 
    \begin{enumerate}
        \item for any point $p = p(u,v,\lambda) \in P(G)$ with $\lambda \leq \half$, $\cball{p}{\delta} \cap V(G) \subseteq \cball{p(u,v,0)}{\delta} \cap V(G).$
        \item for any two points $p = p(u,v,0), q = p(u,v,\lambda) \in P(G)$ with $\lambda \leq \tilde\delta$, we have $\cball{p}{\delta} \cap V(G) = \cball{q}{\delta} \cap V(G).$
    \end{enumerate}
\end{lemma}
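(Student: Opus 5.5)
The plan is to compute distances from points on an edge to vertices explicitly, and then use that vertex-to-vertex distances are integers. Fix the edge $\{u,v\}$ and a point $p = p(u,v,\lambda)$. Any shortest path from $p$ to a vertex $w$ leaves the edge through $u$ or through $v$. Hence
\[ dist(p,w) = \min\{\lambda + dist(u,w),\ 1-\lambda + dist(v,w)\}. \]
Both $dist(u,w)$ and $dist(v,w)$ are integers, and $|dist(u,w)-dist(v,w)|\le 1$. So, writing $a = dist(u,w)$, we have $dist(v,w)\in\{a-1,a,a+1\}$. The key fact is that $w \in \cball{p}{\delta}$ iff $dist(p,w)\le\delta$, and that comparing $\lambda$ plus an integer with $\delta$ depends only on how $\lambda$ compares with $\tilde\delta$. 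Precisely, for an integer $m$ and $\lambda\in[0,1)$, we have $m+\lambda\le\delta$ iff $m<\lfloor\delta\rfloor$, or $m=\lfloor\delta\rfloor$ and $\lambda\le\tilde\delta$.

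For item 1, take a vertex $w$ with $dist(p,w)\le\delta$. We must show $dist(u,w)\le\delta$, i.e.\ $dist(u,w)\le\lfloor\delta\rfloor$, since $dist(u,w)$ is an integer.
\begin{itemize}
\item If the minimum is attained through $u$, then $dist(u,w)\le dist(p,w)\le\delta$ immediately.
\item Otherwise $1-\lambda+dist(v,w)\le\delta$. Since $\lambda\le\half$, we get $1-\lambda\ge\half>\tilde\delta$. Writing $dist(v,w)=m$, this forces $m+(1-\lambda)\le\delta$ with a fractional addend exceeding $\tilde\delta$, so $m\le\lfloor\delta\rfloor-1$. The argument needs a little care when $\lambda=0$: then $1-\lambda=1$, and the inequality reads $m+1\le\delta$, giving the same conclusion.
\item Then $dist(u,w)\le m+1\le\lfloor\delta\rfloor\le\delta$.
\end{itemize}
This case split is the only real content. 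The step to get right is the passage from $m+(1-\lambda)\le\delta$ with $1-\lambda>\tilde\delta$ to $m+1\le\lfloor\delta\rfloor$. Here $\tilde\delta<\half$ is used essentially: without it, the route through $v$ could reach vertices at distance $\lfloor\delta\rfloor+1$ from $u$.

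For item 2, let $q=p(u,v,\lambda)$ with $\lambda\le\tilde\delta<\half$. Item 1 already gives $\cball{q}{\delta}\cap V(G)\subseteq\cball{p}{\delta}\cap V(G)$. For the reverse inclusion, take $w$ with $dist(u,w)\le\delta$. Integrality gives $dist(u,w)\le\lfloor\delta\rfloor$, so $dist(q,w)\le\lambda+dist(u,w)\le\tilde\delta+\lfloor\delta\rfloor=\delta$. I expect this direction to be routine. No earlier results from the paper are needed beyond the definition of the metric on $P(G)$.
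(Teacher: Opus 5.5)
Your proof is correct and takes essentially the same approach as the paper. Both arguments split on whether a shortest path from the point on the edge to the vertex leaves through $u$ or through $v$, and both use the integrality of vertex-to-vertex distances together with $\tilde\delta + \lambda < 1$. The paper phrases the latter as $\lfloor \delta - 1 + \lambda \rfloor = \lfloor\delta\rfloor - 1$. The only cosmetic difference is that you get one inclusion of item 2 directly from item 1, whereas the paper re-derives it.
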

\begin{proof}
    We start with the proof of the first statement.
    Consider an arbitrary vertex $v^*$ that is $\delta$-covered by $p$.
    Then, $v^*$ has distance at most $\lfloor\delta - \lambda\rfloor \leq \lfloor\delta\rfloor$ to $u$ or at most $\lfloor\delta - 1 + \lambda\rfloor \leq \lfloor\delta\rfloor - 1$ to $v$, since $\tilde\delta + \lambda < 1$.
    Thus $u$ has distance at most $\lfloor\delta\rfloor$ to $v^*$ and thus $u$ $\delta$-covers $v^*$.

    For the second statement, we consider an arbitrary vertex $v^*$ that is $\delta$-covered by $q$.
    Then, $v^*$ has distance distance at most $\lfloor\delta - \lambda\rfloor$ to $u$ or at most $\lfloor\delta - 1 + \lambda\rfloor$ to $v$.
    Furthermore, since $\lambda \leq \tilde \delta < \frac{1}{2}$, $\lfloor\delta - 1 + \lambda\rfloor = \lfloor\delta\rfloor - 1$.
    That is, $v^*$ must have distance at most $\lfloor\delta-1\rfloor$ to $v$ or distance at most $\lfloor\delta\rfloor$ to $u$, and thus it has distance at most $\lfloor\delta\rfloor$ to $u$ in both cases and any vertex $\delta$-covered by $q$ is also $\delta$-covered by $p$.
    Now consider an arbitrary vertex $v^{**}$ that is $\delta$-covered by $p$.
    Then $v^{**}$ has distance at most $\lfloor\delta\rfloor$ to $u$ and thus also distance at most $\delta$ to any point in $\cball{p}{\tilde\delta}$, which concludes the proof.
\end{proof}

Next, we show an equivalent lemma for $\delta$ with a fractional part $\tilde\delta\geq\half$.

\begin{lemma}
    \label{lemma:defensiveDeltaCoverVertexAttackFractionalPartLargerHalf}
    For any graph $G$ and any $\delta$ with $\tilde\delta \geq \half$, it holds that 
    \begin{enumerate}
        \item for any point $p = p(u,v,\lambda) \in P(G)$ we have $\cball{p}{\delta} \cap V(G) \subseteq \cball{p(u,v,\half)}{\delta} \cap V(G)$.
        \item for any two points $p = p(u,v,\half), q = p(u,v,\lambda) \in P(G)$ with $1-\tilde\delta \leq \lambda \leq \tilde\delta$ we have $\cball{p}{\delta} \cap V(G) = \cball{q}{\delta} \cap V(G)$.
    \end{enumerate}
\end{lemma}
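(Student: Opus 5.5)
The proof mirrors that of \Cref{lemma:defensiveDeltaCoverVertexAttackFractionalPartSmallerHalf}. It rests on one basic fact: for a point $q = p(u,v,\lambda)$ and a vertex $v^*$,
\[
dist(q, v^*) = \min\{\lambda + dist(u,v^*),\ 1-\lambda + dist(v,v^*)\}.
\]
Since vertex-to-vertex distances are integers, $q$ $\delta$-covers $v^*$ if and only if $dist(u,v^*) \leq \lfloor \delta - \lambda \rfloor$ or $dist(v,v^*) \leq \lfloor \delta - 1 + \lambda\rfloor$. The whole proof is therefore bookkeeping of these two integer thresholds as functions of $\lambda$, using $\tilde\delta \geq \half$.

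For the first statement, first handle the midpoint $p(u,v,\half)$. Because $\tilde\delta \geq \half$, we have $\lfloor \delta - \half \rfloor = \lfloor\delta\rfloor$. Hence the midpoint covers exactly the vertices with $dist(u,v^*) \leq \lfloor\delta\rfloor$ or $dist(v,v^*) \leq \lfloor\delta\rfloor$.

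Now take an arbitrary $\lambda \in [0,1]$. The thresholds satisfy $\lfloor\delta-\lambda\rfloor \leq \lfloor\delta\rfloor$ and $\lfloor \delta-1+\lambda\rfloor \leq \lfloor\delta\rfloor$, since $\lambda \geq 0$ and $1-\lambda \geq 0$. So every vertex covered by $p(u,v,\lambda)$ is within distance $\lfloor\delta\rfloor$ of $u$ or of $v$, and is therefore covered by the midpoint. No assumption $\lambda \leq \half$ is needed; the argument is symmetric in $u$ and $v$.

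For the second statement, let $1-\tilde\delta \leq \lambda \leq \tilde\delta$. The inclusion $\cball{q}{\delta}\cap V(G) \subseteq \cball{p}{\delta}\cap V(G)$ is exactly the first statement.

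For the reverse inclusion, I show both thresholds of $q$ equal $\lfloor\delta\rfloor$. Since $\lambda \leq \tilde\delta$, we get $\delta - \lambda \geq \lfloor\delta\rfloor$, and since $\lambda \geq 0$, also $\delta - \lambda \leq \delta < \lfloor\delta\rfloor+1$. Thus $\lfloor\delta-\lambda\rfloor = \lfloor\delta\rfloor$. The symmetric condition $1-\lambda \leq \tilde\delta$, which is the lower bound on $\lambda$, gives $\lfloor \delta - (1-\lambda)\rfloor = \lfloor\delta\rfloor$. So a vertex within distance $\lfloor\delta\rfloor$ of $u$ or of $v$ is covered by $q$, which matches the description of the midpoint's coverage above and gives equality.

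The only delicate point is the boundary case $\tilde\delta = \half$, where the interval for $\lambda$ collapses to $\{\half\}$ and the claim is trivial. Otherwise I just need to check the inequalities carefully, including the endpoint cases $\lambda = \tilde\delta$ and $\lambda = 1 - \tilde\delta$, where the floors are still attained because the covering ball is closed. I expect no real obstacle; the only care is stating the vertex-distance formula for interior points precisely and handling $\lambda = 0$ or $1$, where $q$ is a vertex and the formula still holds.
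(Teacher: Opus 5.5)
Your proof is correct and follows essentially the same route as the paper. Both arguments reduce coverage of a vertex by $p(u,v,\lambda)$ to the integer thresholds $\lfloor\delta-\lambda\rfloor$ and $\lfloor\delta-1+\lambda\rfloor$, bound these by $\lfloor\delta\rfloor$ for the first statement, and for the reverse inclusion in the second statement note that a vertex within distance $\lfloor\delta\rfloor$ of $u$ or $v$ stays within $\delta$ of every point with $1-\tilde\delta\le\lambda\le\tilde\delta$; the paper phrases this as the ball $\cball{p}{\tilde\delta-\half}$ around the midpoint, while you make explicit details such as $\lfloor\delta-\half\rfloor=\lfloor\delta\rfloor$ and the distance formula for interior points.
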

\begin{proof}
    We first prove the first statement.
    Consider an arbitrary vertex $v^*$ that is $\delta$-covered by $p$.
    Then, $v^*$ either has distance at most $\lfloor\delta - 1 + \lambda\rfloor \leq \lfloor\delta\rfloor$ to $v$ or distance at most $\lfloor\delta - \lambda\rfloor \leq \lfloor\delta\rfloor$ to $u$.
    Since $\tilde\delta\geq\half$, it follows that $v^*$ has distance at most $\delta$ to $p(u,v,\half)$.

    For the second statement, we consider an arbitrary vertex $v^*$ that is $\delta$-covered by $q$.
    By the same argument as for the first case, $v^*$ must have distance at most $\lfloor\delta\rfloor$ to $u$ or $v$ and thus also distance at most $\delta$ to $p$.
    Now consider an arbitrary vertex $v^{**}$ that is $\delta$-covered by $p$.
    Then $v^{**}$ has distance at most $\lfloor\delta\rfloor$ to $u$ or $v$ and thus also distance at most $\delta$ to any point in $\cball{p}{\tilde\delta-\half}$, which concludes the proof.
\end{proof}

At last, we consider the special cases for $\tilde\delta = 0$ and $\tilde\delta = \half$,

\begin{lemma}
    \label{lemma:defensiveDeltaCoverVertexAttackFractionalPartZeroOrHalf}
    For any graph $G$ and any $\delta$ with $\tilde\delta = 0$ or $\tilde\delta = \half$, it holds that for any two points $p = p(u,v,\lambda_1), q = p(u,v,\lambda_2) \in P(G)$ with $0 < \lambda_1 \leq \lambda_2 < \half$ we have $\cball{p}{\delta} \cap V(G) = \cball{q}{\delta} \cap V(G)$.
\end{lemma}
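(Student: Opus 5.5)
We argue by computing, for a fixed vertex $w \in V(G)$, the distance from a point $p(u,v,\lambda)$ with $0<\lambda<\half$ to $w$, and by showing that the truth of $dist(p(u,v,\lambda), w) \leq \delta$ does not depend on $\lambda$ in the open interval $(0,\half)$. Since the distance of an interior point of an edge to a vertex is realised by leaving the edge through one of its endpoints, we have
\[
dist(p(u,v,\lambda), w) = \min\{\lambda + dist(u,w),\; 1-\lambda + dist(v,w)\},
\]
where $dist(u,w), dist(v,w) \in \N_0$ are graph distances. Hence $w \in \cball{p(u,v,\lambda)}{\delta}$ if and only if $dist(u,w) \leq \delta - \lambda$ or $dist(v,w) \leq \delta - 1 + \lambda$. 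Because the left-hand sides are integers, these conditions become $dist(u,w) \leq \lfloor \delta - \lambda \rfloor$ or $dist(v,w) \leq \lfloor \delta - 1 + \lambda \rfloor$.

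It therefore suffices to show that both $\lfloor \delta - \lambda\rfloor$ and $\lfloor \delta - 1 + \lambda\rfloor$ are constant for $\lambda \in (0,\half)$. We split into the two cases of the statement.

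\textbf{Case $\tilde\delta = 0$.} Here $\delta$ is an integer, so $\delta - \lambda \in (\delta - \half, \delta)$ and $\lfloor\delta-\lambda\rfloor = \delta - 1$. Similarly, $\delta - 1 + \lambda \in (\delta-1, \delta - \half)$, so $\lfloor\delta-1+\lambda\rfloor = \delta-1$.

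\textbf{Case $\tilde\delta = \half$.} Write $\delta = \lfloor\delta\rfloor + \half$. Then $\delta - \lambda \in (\lfloor\delta\rfloor, \lfloor\delta\rfloor+\half)$, giving floor $\lfloor\delta\rfloor$. Also $\delta - 1 + \lambda \in (\lfloor\delta\rfloor - \half, \lfloor\delta\rfloor)$, giving floor $\lfloor\delta\rfloor - 1$.

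In both cases, the membership condition for $w$ is identical for $\lambda_1$ and $\lambda_2$, so the two vertex sets coincide.

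The only step needing care is the distance formula for interior points; one must make sure no shorter route exists. For instance, if $w \in \{u,v\}$, the formula still holds, since $dist(u,u)=0$ is covered by the same expression. The strictness of the bounds $0<\lambda$ and $\lambda<\half$ is exactly what keeps the arguments away from the integer breakpoints, and that is the crux of the argument.
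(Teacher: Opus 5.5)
Your proof is correct and takes the same approach as the paper: you reduce membership of a vertex in $\cball{p(u,v,\lambda)}{\delta}$ to the two integer thresholds $\lfloor\delta-\lambda\rfloor$ and $\lfloor\delta-1+\lambda\rfloor$, then check that both are constant for $\lambda\in(0,\half)$ in each of the two cases. Your if-and-only-if formulation gives both inclusions at once, and your values $\lfloor\delta\rfloor$ and $\lfloor\delta\rfloor-1$ for the case $\tilde\delta=\half$ are the correct ones; the paper's write-up states both floors there as $\delta$, which is a slip.
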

\begin{proof}

    Consider an arbitrary vertex $v^*$ that is $\delta$-covered by $p$.
    Then, $v^*$ has distance at most $\lfloor \delta-\lambda_1 \rfloor$ to $u$ or at most $\lfloor \delta-1+\lambda_1 \rfloor$ to $v$.
    For the first case, let $\tilde\delta = 0$.
    Since $0 < \lambda_1 \leq \lambda_2 < \half$, we have $\lfloor \delta-\lambda_1 \rfloor = \lfloor \delta-\lambda_2 \rfloor = \delta - 1$.
    Further, we have $\lfloor \delta-1+\lambda_1 \rfloor = \lfloor \delta-1+\lambda_2 \rfloor = \delta-1$.
    For the second case, let $\tilde\delta = \frac{1}{2}$.
    Since $0 < \lambda_1 \leq \lambda_2 < \half$, we have $\lfloor \delta-\lambda_1 \rfloor = \lfloor \delta-\lambda_2 \rfloor = \delta$.
    Further, we have $\lfloor \delta-1+\lambda_1 \rfloor = \lfloor \delta-1+\lambda_2 \rfloor = \delta$, since $\tilde\delta+\lambda_1 < \tilde\delta+\lambda_1 < \half + \half = 1$.
    In conclusion, in both cases, $v^*$ is covered by both $p$ and $q$.
\end{proof}

With these three lemmas, we can show that it is always possible to find an $\ell$-defense for {\sc Defensive $\delta$-Covering With Vertex Attack} or {\sc Defensive $\delta$-Covering With Vertex Attack And Multiset Defense} that has a very regular structure.
In particular, we call an $\ell$-defense $D$ \emph{neat}, if for every $p \in D$ with $p = p(u,v,\lambda)$ for some vertices $u,v \in V(G)$ such that $\lambda \leq \half$ and 
\begin{itemize}
    \item for $\delta$ with $0 < \tilde\delta < \half$ it holds that $\lambda = \frac{i}{\ell-1}\cdot\tilde\delta$ for $i \in \{0, \dots, \ell-1\}$.
    \item for $\delta$ with $\tilde\delta > \half$ it holds that $\lambda = \half - \frac{i}{\ell-1}\cdot(\tilde\delta-\half)$ for $i \in \{0, \dots, \ell-1\}$. 
    \item for $\delta$ with  $\tilde\delta \in \{0, \frac{1}{2}\}$, it holds that $\lambda = \frac{i}{2(\ell-1)}$ for $i \in \{0, \dots, \ell-1\}$.
\end{itemize}
Further, we call an $\ell$-defense $D$ \emph{nice}, if for every $p \in D$ with $p = p(u,v,\lambda)$ for some vertices $u,v \in V(G)$ such that $\lambda \leq \half$ and 
\begin{itemize}
    \item for $\delta$ with $\tilde\delta < \half$ it holds that $\lambda = 0$.
    \item for $\delta$ with $\tilde\delta\geq\half$ it holds that $\lambda = \half$.
\end{itemize}
Thus any defense that is nice is also neat.

From \Cref{lemma:defensiveDeltaCoverVertexAttackFractionalPartSmallerHalf,lemma:defensiveDeltaCoverVertexAttackFractionalPartLargerHalf} we can directly follow the next statements, which will be useful for showing containment in various complexity classes.

\begin{corollary}\label{corollary:defensiveDeltaCoverVertexAttackNeatNice}
    The following holds.
    \begin{enumerate}
        \item For {\sc Defensive $\delta$-Covering With Vertex Attack}, if a graph $G$ admits an $\ell$-defense that counters every $k$-attack, then $G$ also admits a neat $\ell$-defense that counters every $k$-attack.
        \item For {\sc Defensive $\delta$-Covering With Vertex Attack And Multiset Defense}, if a graph $G$ admits an $\ell$-defense that counters every $k$-attack, then $G$ also admits a nice $\ell$-defense that counters every $k$-attack.
    \end{enumerate}
\end{corollary}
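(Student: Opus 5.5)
The plan is a token-by-token replacement argument. Moving a defender token to a point that $\delta$-covers a superset of vertices never destroys a matching for a vertex attack. Concretely, take an $\ell$-defense $D$ countering every $k$-attack $A \subseteq V(G)$, and a replacement point $p'$ for a token $p$ with $\cball{p}{\delta} \cap V(G) \subseteq \cball{p'}{\delta} \cap V(G)$. In the bipartite attack–defense graph, every edge at $p$ survives at $p'$, so every matching saturating $A$ survives. Hence $(D \setminus \{p\}) \cup \{p'\}$ still counters every $k$-attack, and it has the same size as a multiset. We apply this token by token, with $p = p(u,v,\lambda)$ always written so that $\lambda \leq \half$.

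For the nice case (multiset defense), the replacement is direct. If $\tilde\delta < \half$, the first part of \Cref{lemma:defensiveDeltaCoverVertexAttackFractionalPartSmallerHalf} lets us move $p$ to $u = p(u,v,0)$. If $\tilde\delta \geq \half$, the first part of \Cref{lemma:defensiveDeltaCoverVertexAttackFractionalPartLargerHalf} lets us move $p$ to $p(u,v,\half)$. Collisions are harmless because the defense may be a multiset. So we obtain a nice defense of the same size, which proves the second statement.

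For the neat case (set defense), we must avoid collisions, and this is the main obstacle. The idea is to use the equality parts of the lemmas: each edge carries a whole interval of positions that all cover exactly the same vertex set as the nice position. This interval contains $\ell$ distinct neat positions, so up to $\ell$ tokens can be spread over it.

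\begin{itemize}
\item \textbf{$0<\tilde\delta<\half$:} first move every token to its nice endpoint $u$ as above. The tokens that land on $u$ this way are then placed at distinct points $p(u,v,\frac{i}{\ell-1}\tilde\delta)$ along the incident edges. By the second part of \Cref{lemma:defensiveDeltaCoverVertexAttackFractionalPartSmallerHalf}, each such point covers exactly $\cball{u}{\delta}\cap V(G)$. There are $\ell$ values of $i$ per edge-end and at most $\ell$ tokens overall, so distinctness is achievable. One caveat: $i=0$ on different edges at $u$ is the same point $u$, so the tokens around $u$ are assigned to distinct $i$, or distinct (edge, $i$) pairs with $i \ge 1$ plus at most one token on $u$ itself. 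A further subtlety is that a point $p(u,v,\lambda)$ with $\lambda$ near $\tilde\delta$ must also not collide with a token coming from $v$'s side; this holds since $\tilde\delta<\half$.
\item \textbf{$\tilde\delta>\half$:} use the second part of \Cref{lemma:defensiveDeltaCoverVertexAttackFractionalPartLargerHalf} in the same way. Tokens moved to $p(u,v,\half)$ are spread over the distinct points $p(u,v,\half - \frac{i}{\ell-1}(\tilde\delta-\half))$, which all lie in $[1-\tilde\delta,\half]$. At most $\ell$ tokens lie on one edge, so they fit.
\item \textbf{$\tilde\delta\in\{0,\half\}$:} there is no full interval around the nice position. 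Instead, \Cref{lemma:defensiveDeltaCoverVertexAttackFractionalPartZeroOrHalf} says all interior points with $0<\lambda<\half$ on an edge cover the same vertex set. Tokens at vertices or midpoints stay where they are. The tokens strictly inside the open segments $(0,\half)$ of a given edge-half are moved to distinct points $\frac{i}{2(\ell-1)}$ with $1\le i\le \ell-2$, which lie strictly between $0$ and $\half$. At most $\ell$ tokens are involved, so these suffice; the edge case $\ell\le 2$ is handled trivially.
\end{itemize}

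The delicate part, which I expect to be the main obstacle, is this collision bookkeeping in the set case: shared endpoints among edges and the $i=0$ or $i=\ell-1$ positions coinciding with vertices or midpoints. Counting shows at most $\ell$ tokens ever compete for the $\ell$ available neat slots in any class of equivalent positions, and this is what the proof has to make precise.
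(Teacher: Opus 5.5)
Your replacement principle, the nice case, and your slot counting for $0<\tilde\delta<\half$ and $\tilde\delta>\half$ are correct. This is also the route the paper intends: it gives no proof beyond saying the corollary follows directly from \Cref{lemma:defensiveDeltaCoverVertexAttackFractionalPartSmallerHalf,lemma:defensiveDeltaCoverVertexAttackFractionalPartLargerHalf}.

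The step that fails is the case $\tilde\delta\in\{0,\half\}$. There you keep every token that lies strictly inside an open half-edge on that same half-edge, at a position $\frac{i}{2(\ell-1)}$ with $1\le i\le \ell-2$. That gives only $\ell-2$ slots per half-edge, and none at all when $\ell=2$, so that case is not trivial. But up to $\ell$ tokens can lie in one open half-edge, so your claim that these slots suffice is false, and so is your closing claim that $\ell$ equivalent slots are always available.

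Concretely, take $\delta=1$ (or $\delta=\frac32$) on a single edge $\{u,v\}$. The points $p(u,v,0.1)$, $p(u,v,0.2)$, $p(u,v,0.3)$ form a $3$-defense that counters every $2$-attack. Yet your scheme offers the single target $p(u,v,\frac14)$ for all three tokens.

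The fix needs targets outside the open half-edge, which \Cref{lemma:defensiveDeltaCoverVertexAttackFractionalPartZeroOrHalf} alone does not supply.
\begin{itemize}
\item[] \textbf{Case $\tilde\delta=0$.} A direct computation shows that every interior point of $\{u,v\}$, including the midpoint and the other half, covers exactly $\{w : dist(u,w)\le\delta-1 \text{ or } dist(v,w)\le\delta-1\}$. So for $\ell\ge 3$ the interior tokens of an edge can be spread over its $2(\ell-2)+1=2\ell-3\ge\ell$ interior neat positions. For $\ell=2$, a second interior token goes to an endpoint. That endpoint dominates by part~1 of \Cref{lemma:defensiveDeltaCoverVertexAttackFractionalPartSmallerHalf}, and it is unoccupied because both tokens lie inside the edge.
\item[] \textbf{Case $\tilde\delta=\half$.} Every $p(u,v,\lambda)$ with $0\le\lambda<\half$ covers exactly $\cball{u}{\delta}\cap V(G)$. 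These tokens may therefore use $u$ together with the interior neat positions of all half-edges at $u$, which is $1+(\ell-2)\deg(u)$ slots. This falls short only if all $\ell$ tokens are of this kind, which forces $\deg(u)=1$ or $\ell=2$. In that situation exactly one token is left over, and no token sits on a midpoint. It goes to the midpoint of an incident edge, which dominates by part~1 of \Cref{lemma:defensiveDeltaCoverVertexAttackFractionalPartLargerHalf}.
\end{itemize}
With this bookkeeping your argument goes through.
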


\begin{corollary}
    \label{corollary:defensiveDeltaCoverVertexAttackNeatNiceMultiset}
    For any $\delta$ with $\tilde\delta \neq 0$ and $\tilde\delta \neq \half$, the minimum size of a nice $\ell$-defense that is a multiset and a neat $\ell$-defense that is not a multiset are identical.
\end{corollary}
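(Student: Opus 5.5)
The plan is to prove the two inequalities between the two minima separately, for a fixed graph $G$ and attack size $k$. We work in {\sc Defensive $\delta$-Covering With Vertex Attack}, because all the structural lemmas concern only $\cball{p}{\delta} \cap V(G)$. Throughout, only the vertex coverage set of a defender token matters. By the Observation (Hall's condition), whether a defense counters all $k$-attacks depends only on the multiset of coverage sets $\{\cball{p}{\delta}\cap V(G) : p \in D\}$. Hence two defenses whose tokens can be put in bijection with identical coverage sets are equivalent.

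\textbf{Neat set $\Rightarrow$ nice multiset.} Take a neat $\ell$-defense $D$ that is a set and counters every $k$-attack. Move each token $p=p(u,v,\lambda)$ with $\lambda\le\half$ to its nice position.
\begin{itemize}
\item If $\tilde\delta<\half$, the nice position is $p(u,v,0)$. By part 1 of \Cref{lemma:defensiveDeltaCoverVertexAttackFractionalPartSmallerHalf}, the coverage set can only grow.
\item If $\tilde\delta>\half$, the nice position is $p(u,v,\half)$. By part 1 of \Cref{lemma:defensiveDeltaCoverVertexAttackFractionalPartLargerHalf}, the coverage set again can only grow.
\end{itemize}
Several tokens may now collide, which is why the result is a multiset. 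Hall's condition is monotone under enlarging neighborhoods, so the resulting nice multiset defense of the same size still counters every $k$-attack. This shows that the nice multiset minimum is at most the neat set minimum.

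\textbf{Nice multiset $\Rightarrow$ neat set.} This is the direction needing the hypothesis $\tilde\delta\notin\{0,\half\}$. Take a nice multiset defense $D$ of size $\ell$ that counters every $k$-attack.
\begin{itemize}
\item If $0<\tilde\delta<\half$, a point $u$ of multiplicity $m$ carries $m$ tokens. Spread them to distinct points $p(u,w,\frac{i}{\ell-1}\tilde\delta)$ on edges incident to $u$, with $i\in\{0,\dots,\ell-1\}$. By part 2 of \Cref{lemma:defensiveDeltaCoverVertexAttackFractionalPartSmallerHalf}, all of these have exactly the coverage set of $u$. We may use a single incident edge $\{u,w\}$, since only $\lambda\le\tilde\delta<\half$ is used. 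Points at distance less than $\half$ from $u$ are closer to $u$ than to any other vertex, so tokens spread from different vertices never collide.
\item If $\tilde\delta>\half$, a midpoint $p(u,v,\half)$ of multiplicity $m$ is spread to $p(u,v,\half-\frac{i}{\ell-1}(\tilde\delta-\half))$. These lie in the interval $[1-\tilde\delta,\half]$, so part 2 of \Cref{lemma:defensiveDeltaCoverVertexAttackFractionalPartLargerHalf} gives identical coverage sets. Since $\tilde\delta-\half<\half$, the points stay strictly inside edge $\{u,v\}$, so different edges give disjoint points.
\end{itemize}
In both cases $m\le\ell$, so the $\ell$ available distinct grid values suffice. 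The values $\frac{i}{\ell-1}\tilde\delta$ for distinct $i$ are distinct because $\tilde\delta>0$, and likewise for the second case because $\tilde\delta>\half$. The result is a set of size $\ell$ with the same multiset of coverage sets, which is neat by definition. Hence it counters every $k$-attack, and the neat set minimum is at most the nice multiset minimum.

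\textbf{Main obstacle.} The delicate step is checking distinctness and positivity of the spreading offsets. This requires exactly $\tilde\delta\ne0$ in the first case and $\tilde\delta\ne\half$ in the second. When $\ell=1$ the grid $\frac{i}{\ell-1}$ is degenerate, but then no multiplicity exceeds $1$ and nothing needs spreading. An isolated vertex $u$ has no incident edge to spread along, but it has coverage set $\{u\}$ and so only ever needs multiplicity $1$. These edge cases need only a remark.
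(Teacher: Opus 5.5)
Your proof is correct and follows the paper's route. The paper gives no separate argument; it derives this corollary directly from \Cref{lemma:defensiveDeltaCoverVertexAttackFractionalPartSmallerHalf,lemma:defensiveDeltaCoverVertexAttackFractionalPartLargerHalf}, using part~1 to collapse tokens onto nice positions (coverage only grows) and part~2 to spread a nice position of multiplicity $m \le \ell$ over the neat grid without changing coverage, which is what you do, while also making explicit the distinctness checks and the degenerate cases ($\ell = 1$, isolated vertices) that the paper leaves implicit.
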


\subsection*{Complexity Results}

We start by analyzing the complexity of {\sc Defensive $\delta$-Covering With Vertex Attack} when the attacker may attack each vertex only once.
There are three ranges for $\delta$ with a different resulting complexity for {\sc Defensive $\delta$-Covering With Vertex Attack}:
If $\delta < \half$, the problem is polynomial time solvable;
if $\half \leq \delta < 1$, the problem is \NP-complete for $k \geq 2$ and polynomial time solvable for $k = 1$;
if $1 \leq \delta$, the problem is $\Sigma^P_2$-complete.

\paragraph*{The Case $\delta < \frac{1}{2}$}

\begin{lemma}\label{lemma:defensiveDeltaCoverVertexAttackSmallerHalfInP}
    {\sc Defensive $\delta$-Covering With Vertex Attack} is in $\sf P$ for any $\delta < \frac{1}{2}$ regardless whether the defense may be a multiset or not.
\end{lemma}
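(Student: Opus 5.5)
For $\delta < \half$ the answer can be read off directly: I would show that the instance $(G,k,\ell)$ is a yes-instance if and only if $\ell \geq |V(G)|$ when $k \geq 1$, and that it is trivially a yes-instance when $k = 0$. This condition is checkable in polynomial time.

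The key structural fact concerns the closed ball $\cball{p}{\delta}$ for a point $p = p(u,v,\lambda)$ with $\lambda \leq \half$. I claim it contains at most one vertex. It contains $u$ only if $\lambda \leq \delta$, and $v$ only if $1-\lambda \leq \delta$. The second condition would force $\lambda \geq 1-\delta > \half$, which is impossible. Any other vertex has distance at least $1 - \lambda \geq \half > \delta$ from $p$. Hence each defender token can defend at most one vertex, namely the nearest endpoint $u$, and only when $\lambda \leq \delta$. This is consistent with \Cref{lemma:defensiveDeltaCoverVertexAttackFractionalPartSmallerHalf}, since $\tilde\delta = \delta$ here. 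So, by the first part of that lemma, we may also move each token onto its vertex without losing coverage.

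Next I would prove necessity by contradiction. Suppose some vertex $w$ is covered by no defender token. Then the attack consisting of $w$ together with any $k-1$ further vertices (or all vertices, if $k > |V(G)|$) cannot be countered. So every vertex must be covered by some token. Since each token covers at most one vertex, the defense needs at least $|V(G)|$ tokens. This holds whether or not the defense is a multiset, because multiplicities only count additional tokens.

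For sufficiency, suppose $\ell \geq |V(G)|$. Place one token on each vertex; this is a set, so it is valid in both variants. Any attack $A \subseteq V(G)$ is then countered by matching each attacked vertex to its own token.

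The only mildly delicate point is the ball computation. In particular, for $\lambda$ near $\half$ neither endpoint is covered, which only strengthens the argument. I also need to handle $k > |V(G)|$ correctly: attacks are sets, so any attack has size at most $|V(G)|$. For $k \geq 1$ the criterion is therefore exactly $\ell \geq |V(G)|$.
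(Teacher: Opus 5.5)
Your proof is correct and follows the same route as the paper: for $\delta<\half$ each token covers at most one vertex, every vertex must be covered (otherwise the attacker attacks it), and one token per vertex suffices, so the criterion is $\ell \geq |V(G)|$. Your explicit ball computation and your handling of the edge cases $k=0$ and $k>|V(G)|$ make rigorous what the paper states only informally.
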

\begin{proof}
    Because $\delta < \frac{1}{2}$, each token of the defender is able to defend only its nearest vertex.
    The attacker can attack each vertex at most once.
    Thus placing a defender's token on each vertex is enough.
    If one of the vertices is not defended, the attacker can attack this very vertex.
    Accordingly, the defender has to place exactly one token on each vertex.
    Thus for any graph $G$ the smallest $\ell$-defense that counters every $k$-attack for {\sc Defensive $\delta$-Covering with Vertex Attack} and {\sc Defensive $\delta$-Covering with Vertex Attack And Multiset Defense} is to place one token on each element of $V(G)$.
\end{proof}

\paragraph*{The Case $\frac{1}{2} \leq \delta < 1$}

This case is related to the question, whether a given graph $H$ is a factor of another given graph $G$. 
A \emph{factor} of a graph $G$ is a spanning subgraph of $G$ with the same vertex set $V(G)$.
For an arbitrary graph $H$, an $H$-factor of $G$ is a factor of $G$ consisting of disjoint copies of $H$.
Given two graphs $G$ and $H$, deciding whether there exists an $H$-factor of $G$ is {\sf NP}-complete if and only if $H$ has a component with at least three vertices \cite{DBLP:journals/siamcomp/KirkpatrickH83}.
Further, for a family of graphs $\mathcal H$, an $\mathcal H$-factor of $G$ is a factor of $G$ where each connected component of the factor is a member of $\mathcal H$.
In the following, we use the family of trees with at least $k$ edges, denoted with $\mathcal T_k$.

\begin{theorem}\label{thm:defensiveDeltaCoverVertexAttackSmallerOneCompleteness}
    {\sc Defensive $\delta$-Covering with Vertex Attack} is $\NP$-complete for any $\frac{1}{2} \leq \delta < 1$ regardless whether the defense may be a multiset or not.
\end{theorem}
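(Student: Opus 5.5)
By \Cref{lemma:defensiveDeltaCoverVertexAttackFractionalPartLargerHalf} and \Cref{corollary:defensiveDeltaCoverVertexAttackNeatNice}, the plan is to translate every defense into a multigraph on $V(G)$ and reduce both directions to a purely combinatorial condition. For $\half\le\delta<1$ we have $\delta=\tilde\delta\ge\half$. A token at a vertex $u$ covers only $u$, since $\delta<1$. A token at any other point of an edge $\{u,v\}$ covers a subset of $\{u,v\}$, and moving it to the midpoint $p(u,v,\half)$ (or close to it, if the defense must be a set and $\delta>\half$) covers exactly $\{u,v\}$. So we may assume a defense is a multiset $F$ of edges of $G$ plus loops, where a token on a vertex is a loop. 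For $\delta=\half$ and set defenses, parallel edges are not available; the hardness construction below never uses them. An attack $A\subseteq V(G)$ with $|A|\le k$ is countered iff, by Hall (the Observation), every $S\subseteq V(G)$ with $|S|\le k$ satisfies $e_F(S)\ge|S|$. Here $e_F(S)$ is the number of elements of $F$ incident to $S$.

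\emph{Key structural claim.} $F$ counters all $k$-attacks iff every vertex is covered and each connected component $C$ of the multigraph $(V(G),F)$ satisfies one of the following:
\begin{enumerate}
  \item $|V(C)|>k$, or
  \item $|V(C)|\le k$ and $|F(C)|\ge|V(C)|$, i.e.\ $C$ contains a cycle, a loop or a parallel edge.
\end{enumerate}
Necessity: attack all of a small component (or a single uncovered vertex). Sufficiency: split $S$ by components. In a connected component with $S\cap V(C)\neq V(C)$, orient a spanning tree towards a vertex outside $S$; this gives $|S\cap V(C)|$ distinct incident edges. If $S\supseteq V(C)$, then $C$ is small and has at least $|V(C)|$ edges. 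This claim gives containment in \NP. We guess a neat (or nice, for multisets) defense; it has polynomial encoding length, since each token is an edge together with an index $i\le\ell-1$. We then check the component condition in polynomial time, which avoids quantifying over all attacks.

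\emph{Hardness.} Since $k$ is part of the input, fix $k=2$ and reduce from $P_3$-factor, which is \NP-complete by Kirkpatrick--Hell \cite{DBLP:journals/siamcomp/KirkpatrickH83}. Given $G$ with $n=3m$ vertices, set $\ell=n-m$. Counting via the claim, any valid $F$ satisfies
\[
  |F|\;\ge\;\sum_{C}\bigl(|V(C)|-[\,|V(C)|>k\,]\bigr)\;\ge\;n-\frac{n}{k+1},
\]
with equality iff every component is a tree on exactly $k+1=3$ vertices, i.e.\ a $P_3$. Conversely, a $P_3$-factor placed on edge midpoints is a valid $\ell$-defense, and it uses no parallel tokens, so the set version works even for $\delta=\half$. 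Hence $G$ has a $P_3$-factor iff an $\ell$-defense exists, for set and multiset defenses alike.

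The main obstacle is making the discretization rigorous in the set case. For $\delta>\half$, distinct tokens on the same edge must occupy distinct points in $[1-\delta,\delta]$, which the neat positions guarantee. For $\delta=\half$, only one token per edge can cover both endpoints. I will check that the counting lower bound still holds when tokens are not at midpoints: such a token covers at most one endpoint and so behaves like a loop, which only strengthens the inequality.
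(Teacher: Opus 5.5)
Your proof is correct. It uses the same reduction as the paper ($P_3$-factor, $k=2$, $\ell=\frac23|V(G)|$) and the same sufficiency argument: root a spanning tree at an unattacked vertex and match each attacked vertex to its parent edge.

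The difference is the key lemma. The paper proves that an $\ell$-defense exists iff $G$ has a $\mathcal T_k$-factor with at most $\ell$ edges. To do so it turns a defense into a forest: each cyclic component is replaced by a spanning tree and, if it has at most $k$ vertices, glued to another component via an edge of $G$. NP-membership then uses the factor as certificate, and hardness uses pigeonhole on its trees. You instead characterize valid token multigraphs directly: each component has more than $k$ vertices or at least as many tokens as vertices. You certify the defense itself and get hardness from the bound $|F|\ge n-\frac{n}{k+1}$ together with its equality case.

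Your route is more robust. The paper's gluing step needs an edge of $G$ leaving the component, and no such edge exists when $G$ itself has a component on at most $k$ vertices. For example, $G=K_2\cup P_{3m}$ with $k=2$ has a $(2m+2)$-defense but no $\mathcal T_2$-factor. Your characterization and counting handle this case with no extra work. The paper's route buys a clean identification with a known graph-factor problem.

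One small omission: for your certificate to have polynomial length, note, as the paper does, that $\ell\ge|V(G)|$ is a trivial yes-instance (one token per vertex). You may then assume $\ell<|V(G)|$; otherwise listing $\ell$ tokens need not be polynomial when $\ell$ is given in binary.
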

\begin{proof}
    While \Cref{corollary:defensiveDeltaCoverVertexAttackNeatNice} allows to easily show containment in $\Sigma^P_2$, we need a slightly more elaborate argument to show containment in \NP.
    For that, we use the following statement, which we will also use for the correctness of the \NP-hardness proof.

    \begin{claim}
        \label{claim:graphFactorDefenseEquivalence}
        For any graph $G$, $\frac{1}{2} \leq \delta < 1$ and $\ell < |V(G)|$, the following two statements are equivalent:
        \begin{enumerate}
            \item $G$ admits an $\ell$-defense that counters every $k$-attack.
            \item $G$ has a $\mathcal T_k$-factor with at most $\ell$ edges.
        \end{enumerate}
    \end{claim}
    \begin{claimproof}
        We first show $1. \Rightarrow 2.$:
        Let $D$ be an $\ell$-defense for $G$ that counters every $k$-attack.
        We define the following multigraph $G'$ with $V(G') = V(G)$:
        We begin with $E(G') = \emptyset$.
        For each token $D$ places on an edge $e$, we add a copy of $e$ to $E(G')$.
        For each token $D$ places on a vertex $v$, we add an edge $\{v, v\}$ to $E(G')$.
        Now we consider each component $C$ of $G'$ that is not a tree.
        If $C$ contains at least $k+1$ vertices, we replace the edges in $G'[C]$ by a spanning tree on $C$, reducing the total number of edges in $E(G')$.
        Otherwise, there exists an edge $e$ in $E(G)$ that connects $C$ to some other component $C'$ of $G'$. 
        We replace the edges of $G'[C]$ by a spanning tree on $C$ and connect $C$ to $C'$ using $e$.
        Since the spanning tree on $C$ has at least one edge less than vertices in $C$, this in total either reduces the total number of edges in $E(G')$ or leaves it unchanged.
        Thus after these modifications, $G'$ contains no cycle and has at most $\ell$ edges in total.
        Further, there is no component $C$ of $G'$ with at most $k$ vertices: Since neither modification can create a tree with at most $k$ vertices, it must have already existed initially.
        Consider an attack exactly on the vertices of $C$. 
        As $\delta < 1$, no defender token in $P(G \setminus C)$ can defend the vertices of $C$.
        By definition of $G'$, there are no defender tokens on the edges between $C$ and $V(G) \setminus C$.
        Further, there are only $k-1$ defender tokens in total on the edges of $C$, thus at least one vertex of $C$ cannot be defended.
        This is a contradiction to $D$ being an $\ell$-defense that counters every $k$-attack.
        Thus the modified graph $G'$ is a $\mathcal T_k$-factor of $G$.

        Now we show $2. \Rightarrow 1.$:
        Let $G'$ be a $\mathcal T_k$-factor of $G$ with at most $\ell$ edges.
        Define $D = \{p(u,v,\half) \mid \{u,v\} \in E(G')\}$.
        Consider any $k$-attack $A$.
        Then $A$ contains at most $k$ vertices of any component $C$ of $G'$.
        We then root $C$ at an arbitrary vertex not contained in $A$ and match the defender token on each edge to the child in the tree it is incident to.
        Thus $D$ defends $A$, and thus $D$ is an $\ell$-defense.
    \end{claimproof}

    The $\mathcal T_k$-factor problem is in \NP, since we can just use the edges of the $\mathcal T_k$-factor as the certificate.
    We verify the certificate by checking that no connected component induced by those edges contains a cycle in polynomial time and each component has at least $k+1$ vertices.
    Thus \Cref{claim:graphFactorDefenseEquivalence} also shows that {\sc Defensive $\delta$-Covering with Vertex Attack} is contained in \NP{} for $\frac{1}{2} \leq \delta < 1$, as cases with $\ell \geq |V(G)|$ are trivial yes-instances.

    By $P_3$ we denote the path on $3$ vertices.
    We now show that {\sc Defensive $\delta$-Covering with Vertex Attack} is \NP-hard by reduction from the $P_3$-factor problem.
    Given a graph $G$, if $3$ does not divide $|V(G)|$ without remainder, it is a trivial no-instance and we map to a trivial no-instance of {\sc Defensive $\delta$-Covering with Vertex Attack}.
    Otherwise, we map it to $G, k = 2, \ell = \frac{2}{3}|V(G)|$, which can be done in polynomial time.
    
    For the correctness, assume $G$ has a $P_3$-factor. 
    Since $P_3$ is a path with two edges and three vertices, a $P_3$-factor of a graph $G$ must contain exactly $\frac{2}{3}|V(G)|$ edges.
    Thus by \Cref{claim:graphFactorDefenseEquivalence}, $G$ admits a $\frac{2}{3}|V(G)|$-defense for $k = 2$, that counters every $2$-attack.

    On the other hand, if $G$ admits a $\frac{2}{3}|V(G)|$-defense that counters every $2$-attack, by \Cref{claim:graphFactorDefenseEquivalence}, $G$ has a $\mathcal T_2$-factor with at most $\frac{2}{3}|V(G)|$ edges.
    By the pigeon hole principle, this means that every tree in that $\mathcal T_2$-factor must have exactly $2$ edges, and thus that $\mathcal T_2$-factor is also a $P_3$-factor.

    Since the proof of \Cref{claim:graphFactorDefenseEquivalence} does not change if the defense is allowed to be a multiset, we also get \NP-completeness for {\sc Defensive $\delta$-Covering with Vertex Attack And Multiset Defense}.
\end{proof}

In the above proof, we showed \NP-hardness for $k = 2$, and thus \NP-hardness if $k$ is in the input, which is the standard definition.
However, we want to note that {\sc Defensive $\delta$-Covering with Vertex Attack} for $\frac{1}{2} \leq \delta < 1$ and $k = 1$ is equivalent to {\sc Edge Cover}.
Given a graph $G$, {\sc Edge Cover} asks for a minimal set of edges that is incident to all vertices of $G$, which is solvable in polynomial time \cite{schrijver2003combinatorial}.

\paragraph*{The Case $\delta \geq 1$}

In the last case fo $\delta \geq 1$, {\sc Defensive $\delta$-Covering with Vertex Attack} behaves similar as {\sc Distance-$d$ Defensive Dominating Set}.
We establish the following theorem, which we prove in the following \Cref{lem:defensiveDeltaCoverVertexAttackMultisetContainment,lem:defensiveDeltaCoverVertexAttackContainment,lem:defensiveDeltaCoverVertexAttackHardness,cor:defensiveDeltaCoverVertexAttackMultisetDefenseHardness}.

\begin{theorem}\label{thm:defensiveDeltaCoverVertexAttackGreaterOneCompleteness}
\label{thm:defensiveDeltaCoverVertexAttackMultisetDefenseCompleteness}
    {\sc Defensive $\delta$-Covering with Vertex Attack} and {\sc Defensive $\delta$-Covering with Vertex Attack and Multiset Defense} are $\Sigma^P_2$-complete for any $\delta \geq 1$.
\end{theorem}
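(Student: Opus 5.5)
We prove membership and hardness separately, and treat both defense variants (set and multiset) in each part.

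\emph{Containment in $\Sigma^P_2$.} By \Cref{corollary:defensiveDeltaCoverVertexAttackNeatNice}, a yes-instance always has a neat $\ell$-defense, and in the multiset case even a nice one. A neat defense places every token on an edge at one of at most $\ell$ prescribed offsets $\lambda$, and these offsets have polynomial encoding length. We may also assume $\ell \le |V(G)|$, since $|V(G)|$ tokens placed on the vertices already counter every vertex attack. Hence the $\exists$-string lists the defense as pairs (edge, index $i$), which is polynomial size. The $\forall$-string encodes an attack $A \subseteq V(G)$ with $|A| \le k$. The verifier computes, for each token $d$ and each attacked vertex $a$, whether $dist(d,a) \le \delta$. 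Since $d = p(u,v,\lambda)$, this is $\min(\lambda + dist(u,a),\, 1-\lambda+dist(v,a)) \le \delta$, which BFS decides in polynomial time. It then checks for a bipartite matching saturating $A$.

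\emph{Hardness.} We reduce from {\sc Distance-$d$ Defensive Dominating Set} with $d = \lfloor\delta\rfloor$, which is $\Sigma^P_2$-complete by \Cref{lemma:distanceDefensiveDominationCompleteness} and, for multiset defense, by \Cref{cor:distanceDefensiveDominationMultisetDefenseCompleteness}.

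If $\tilde\delta < \half$, we keep the graph unchanged. By \Cref{lemma:defensiveDeltaCoverVertexAttackFractionalPartSmallerHalf}(1), each token at $p(u,v,\lambda)$ with $\lambda\le\half$ covers a subset of the vertices that $u$ covers, and $u$ covers exactly $N_d[u]$. So moving every token to its nearer endpoint keeps every attack countered, and the instances are equivalent. In the set-defense case two tokens may collide on one vertex. We resolve this with a neat defense, which uses distinct offsets $\frac{i}{\ell-1}\tilde\delta$ with the same coverage by part 2. This works directly only when $\tilde\delta>0$, so for $\tilde\delta=0$ we must argue separately. There we use that the source instance may be taken to be the gadget graph of \Cref{lemma:distanceDefensiveDominationCompleteness}, where multiset defenses are no better.

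If $\tilde\delta \ge \half$, tokens at edge midpoints cover $N_d[u]\cup N_d[v]$. This is strictly more than a vertex token covers, so the unchanged graph does not work. We fix this by subdividing: take $G_2$, the 2-subdivision, and reduce from distance-$(2d+1)$ defensive domination, building the instance on a graph where the attack is restricted to the original vertices. To prevent attacks on the subdivision vertices from mattering, we attach pendant structures forcing extra tokens, in the style of the clique/independent-set gadget, so that those vertices are always defended.

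The main obstacle is this case $\tilde\delta\ge\half$ together with the corner cases $\tilde\delta\in\{0,\half\}$. There, neatness is only given by the spacing of \Cref{lemma:defensiveDeltaCoverVertexAttackFractionalPartZeroOrHalf}, and we must show that the extra coverage gained by placing tokens on edges can be simulated by vertex placements in the source graph, or is neutralised by the gadgets. The first thing to try is to redo the edge gadget of \Cref{lemma:distanceDefensiveDominationCompleteness} directly in the metric setting, with path lengths chosen relative to $\delta$ so that original vertices adjacent in $G$ are at distance exactly $\delta$ in the new graph. Then we prove, as in that lemma, that a defense token inside a gadget is dominated by one on $u$ or $v$, or, for $\tilde\delta\ge\half$, by one on the corresponding midpoint.
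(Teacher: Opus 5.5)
Your containment argument is fine and is essentially the paper's. The genuine gap is the hardness for $\tilde\delta\ge\half$. There you give only a plan, and it fails at concrete points.

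First, original vertices cannot be placed at distance ``exactly $\delta$'', because vertex distances are integers and $\delta$ is not. What works for $\delta\ge 2$, and what the paper does, is to subdivide each edge $\lfloor\delta\rfloor-1$ times, so that $G$-neighbours end up at distance $\lfloor\delta\rfloor$. You then have to verify that a token near a midpoint next to $u$ picks up no original vertex besides $u$ and the $G$-neighbour $v$ at the other end of that path. Such a token reaches $N_{\lfloor\delta\rfloor}[s]$ through the far endpoint $s$ of its edge. This check needs $\lfloor\delta\rfloor\ge 2$; otherwise $s=v$ and the token covers $N[u]\cup N[v]$.

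The forcing gadget also has to be recalibrated for point defenses. The paper joins two $k$-cliques $C^\ell_k, C^r_k$ by a perfect matching. The $k$ forced tokens then sit on the matching edges and defend the gadget without reaching $u$ or $v$.

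More seriously, for $\delta\in[\tfrac32,2)$ there is nothing to subdivide. $G$-edges stay edges, a midpoint token covers $N[u]\cup N[v]$, and no local gadget removes that extra reach.

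Your $G_2$ idea does not help. With $\lfloor\delta\rfloor=1$, every token in $G_2$ reaches at most the two endpoints of one original edge. On the original vertices you therefore recover the edge-cover/$\mathcal T_k$-factor structure of the case $\half\le\delta<1$, which the paper shows is in \NP. For general $\delta$, a vertex token in $G_2$ reaches original vertices only up to $G$-distance $\lfloor\delta/2\rfloor$, which has nothing to do with ``distance-$(2d+1)$''. Tokens at or beside subdivision vertices still act as edge tokens covering balls around both endpoints, which is the very effect you meant to remove. This is why the paper treats $[\tfrac32,2)$ by a separate reduction from {\sc Minimum Cardinality Clique Interdiction}, adapted from Chaplick et al. In that reduction, tokens on the edges $e''_v$ of $W$ encode the interdicted vertices.

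As a secondary point, your $\tilde\delta=0$ patch (``on the gadget graph multiset defenses are no better'') is only asserted. The multiset-defense corollary shows that the gadgets gain nothing from multisets, not that the core {\sc Defensive Domination} instance gains nothing. Moreover, for $\tilde\delta=0$ interior tokens cover only $N_{\delta-1}[u]\cup N_{\delta-1}[v]\subseteq N_\delta[u]$, so they are dominated by vertex tokens but not equivalent to them, and further argument is needed. The paper's Case 1 does not address set-versus-multiset collisions either. For $0<\tilde\delta<\half$, the clean route is to reduce from the multiset-defense source problem to both target variants.
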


We start the proof of \Cref{thm:defensiveDeltaCoverVertexAttackGreaterOneCompleteness} by establishing the containment of {\sc Defensive $\delta$-Covering with Vertex Attack} and {\sc Defensive $\delta$-Covering with Vertex Attack and Multiset Defense} in $\Sigma^P_2$ using the structural observations that we have made earlier in \Cref{lemma:defensiveDeltaCoverVertexAttackFractionalPartSmallerHalf,lemma:defensiveDeltaCoverVertexAttackFractionalPartLargerHalf,lemma:defensiveDeltaCoverVertexAttackFractionalPartZeroOrHalf}.

\begin{lemma}\label{lem:defensiveDeltaCoverVertexAttackMultisetContainment}
    {\sc Defensive $\delta$-Covering with Vertex Attack and Multiset Defense} is in $\Sigma^P_2$ for any $\delta \geq 1$.
\end{lemma}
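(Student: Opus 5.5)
We give an $\exists\forall$-certificate of polynomial length together with a polynomial-time verifier, following the proof of \Cref{lemma:distanceDefensiveDominationCompleteness}. The existential part encodes the defense and the universal part encodes the attack. Two things must be settled: the defense needs a finite, polynomially describable form, and its size must be polynomially bounded even though it may be a multiset.

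First, we bound the size. The attacker may use each vertex at most once, so any attack has size at most $|V(G)|$. Placing one token on every vertex already counters every attack, because $\delta \geq 1 > 0$. Hence if $\ell \geq |V(G)|$ we accept immediately. Otherwise $\ell < |V(G)|$, and the defense has polynomially many tokens.

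Second, we discretize. By \Cref{corollary:defensiveDeltaCoverVertexAttackNeatNice}(2), if some multiset $\ell$-defense counters every $k$-attack, then a nice one does too. In a nice defense every token is on a vertex (if $\tilde\delta<\half$) or on an edge midpoint (if $\tilde\delta\geq\half$). This corollary rests on \Cref{lemma:defensiveDeltaCoverVertexAttackFractionalPartSmallerHalf,lemma:defensiveDeltaCoverVertexAttackFractionalPartLargerHalf}, and its first statements give what we need: moving a token to the nice position only enlarges the set of vertices it reaches. Replacing every token this way therefore keeps every matching valid. The existential certificate is then a list of at most $\ell$ vertices or edges (with repetition), or equivalently a multiplicity for each vertex or edge bounded by $\ell$. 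Its encoding length is polynomial.

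Third, we specify the universal part and the verifier. The universal string encodes an attack $A \subseteq V(G)$ with $|A|\leq k$, as a subset of $V(G)$. The verifier computes the reach of each token. A token on vertex $u$ reaches $v^*$ iff $dist(u,v^*)\leq \delta$, i.e.\ $dist(u,v^*)\leq\lfloor\delta\rfloor$. A token on midpoint $p(u,v,\half)$ reaches $v^*$ iff $\min(dist(u,v^*),dist(v,v^*))+\half\leq\delta$. Both conditions follow from BFS distances, since $\delta$ is a fixed constant. The verifier then builds the bipartite graph between $A$ and the multiset $D$, with one node per token copy, and checks for a matching saturating $A$. This takes polynomial time.

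The only step that needs care is the discretization. Applying the corollary directly is fine, but we must confirm that the moved tokens still counter every attack. We argue this by monotonicity of the reached vertex sets: each nice position reaches a superset of the vertices the original position reached, so the multiset of tokens after the replacement dominates the original one in the bipartite matching graph.
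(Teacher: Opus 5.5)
Your proposal is correct and follows essentially the same route as the paper. Both discretize to a nice multiset defense via \Cref{corollary:defensiveDeltaCoverVertexAttackNeatNice}(2), bound the number of tokens by $|V(G)|$ because each vertex can be attacked at most once, and let the verifier check each universally quantified vertex-set attack by a bipartite matching. Your explicit handling of $\ell \geq |V(G)|$, the monotonicity justification and the reach computation for vertex and midpoint tokens only spell out details the paper leaves implicit.
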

\begin{proof}

    We provide a polynomially-sized $\exists\forall$-witness and start with the $\exists$-quantified part, in which we encode the $\ell$-defense.
    We first argue that every defense can be polynomially encoded.
    By \Cref{corollary:defensiveDeltaCoverVertexAttackNeatNice}.2, if there is an $\ell$-defense $D$ in graph $G$, then there is also a nice $\ell$-defense $D'$.
    We encode such a nice defense $D'$ as a multiset by specifying for each possible element its multiplicity.
    Since the attacker is only able to attack each vertex once, it is enough to place a maximum of $|V(G)|$ defender tokens.
    Thus, the multiset $D'$ is always polynomially encodable.
    With the $\forall$-quantified part, we encode all possible $k$-attacks, which are subsets of vertices. 
    
    We verify the $\exists\forall$-witness by constructing a bipartite graph consisting of the defenders token on one side and the attacked vertices on the other.
    An edge exists if and only if a defenders token is able to reach an attacked vertex.
    A matching in this graph correspond to moving the defenders to the attacked vertices.
    If and only if the maximum matching has size $k$, the attack can be defended and we have a yes-instance.

\end{proof}

\begin{lemma}\label{lem:defensiveDeltaCoverVertexAttackContainment}
    {\sc Defensive $\delta$-Covering with Vertex Attack} is in $\Sigma^P_2$ for any $\delta \geq 1$.
\end{lemma}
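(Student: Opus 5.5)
We prove membership in $\Sigma^P_2$ with an $\exists\forall$-witness, as in \Cref{lem:defensiveDeltaCoverVertexAttackMultisetContainment}, but use neat defenses instead of nice ones. Since the defense must now be a set, $\ell \le |V(G)|$ tokens suffice whenever a solution exists at all. Indeed, one token on each vertex counters every vertex attack because $\delta \geq 1$, so instances with $\ell \geq |V(G)|$ are trivial yes-instances. Hence we may assume $\ell < |V(G)|$, which keeps the defense polynomially small.

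By \Cref{corollary:defensiveDeltaCoverVertexAttackNeatNice}.1, if $G$ admits an $\ell$-defense countering every $k$-attack, then it also admits a neat one. Every token of a neat defense has the form $p(u,v,\lambda)$ with $\{u,v\}\in E(G)$ and $\lambda \leq \half$, where $\lambda$ lies in an explicit set of $\ell$ values depending on $\tilde\delta$. In the case $0<\tilde\delta<\half$ these values are $\frac{i}{\ell-1}\tilde\delta$; in the case $\tilde\delta>\half$ they are $\half-\frac{i}{\ell-1}(\tilde\delta-\half)$; in the case $\tilde\delta\in\{0,\half\}$ they are $\frac{i}{2(\ell-1)}$. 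Since $\delta$ is a fixed constant, the candidate positions form a polynomial-size set of $O(|E(G)|\cdot\ell)$ points, each described by an edge, an endpoint orientation and an index $i$. The $\exists$-part therefore encodes the neat defense as a subset of these candidate positions, or as a list of at most $\ell$ triples $(u,v,i)$, and checks that the listed points are distinct. The $\forall$-part encodes a $k$-attack as a subset of $V(G)$ of size at most $k$.

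The verifier first checks the witness format and that the defense has at most $\ell$ points. It then builds the bipartite graph between attacked vertices and defender tokens, with an edge whenever the token reaches the attacked vertex. It accepts iff a maximum matching saturates the attack.

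Adjacency is decided by $dist(p(u,v,\lambda),w)=\min(\lambda+dist(u,w),\,1-\lambda+dist(v,w)) \le \delta$. This needs only BFS distances in $G$ and a comparison of rationals; the constant $\delta$ is handled through $\lfloor\delta\rfloor$ and $\tilde\delta$. Correctness follows from the Hall-type observation together with \Cref{corollary:defensiveDeltaCoverVertexAttackNeatNice}.1.

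The step I expect to need the most care is the justification that restricting to neat defenses loses nothing, especially the distinctness requirement when $\tilde\delta\in\{0,\half\}$. That case rests on \Cref{lemma:defensiveDeltaCoverVertexAttackFractionalPartZeroOrHalf}: all interior points with $0<\lambda<\half$ on the same edge cover the same vertex set. So up to $\ell-1$ distinct positions $\frac{i}{2(\ell-1)}$ with $1\le i\le \ell-2$, together with $0$ and $\half$, suffice to host the tokens of any defense without collisions. Since that corollary is already available, the remaining work is the bookkeeping of the encoding.
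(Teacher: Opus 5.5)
Your proposal is correct and follows essentially the paper's route: an $\exists$-quantified neat defense justified by \Cref{corollary:defensiveDeltaCoverVertexAttackNeatNice}.1 (after disposing of the trivial case $\ell \geq |V(G)|$), a $\forall$-quantified vertex subset as the attack, and a bipartite-matching verifier. The only difference is the encoding: you list the neat positions explicitly as distinct points, whereas the paper records multiplicities per vertex, edge, or position group, so in your version soundness is immediate because any accepted witness is literally a valid defense.
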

\begin{proof}    
    
    We first consider the two cases $\delta$ with $\tilde\delta < \frac{1}{2}$ and $\tilde\delta > \frac{1}{2}$.
    We use the same witness and verifier as in \Cref{lem:defensiveDeltaCoverVertexAttackContainment}.
    We argue, why each $\ell$-defense $D$ can be encoded in polynomial size.
    Again, since each vertex can be attacked at most once, $|V(G)|$ defenders are always enough to defend any possible attack.
    For $\tilde\delta < \frac{1}{2}$, we encode the $\ell$-defense $D$ as a multiset of vertices.
    By \Cref{corollary:defensiveDeltaCoverVertexAttackNeatNice}.1 and \Cref{lemma:defensiveDeltaCoverVertexAttackFractionalPartSmallerHalf}, it is possible to move the defender tokens from the vertices on the incident edges to obtain a neat non-multiset $\ell$-defense such that each defender token covers the same set of vertices.
    For $\tilde\delta > \frac{1}{2}$, we encode the $\ell$-defense $D$ as a multiset of edges.
    By \Cref{corollary:defensiveDeltaCoverVertexAttackNeatNice}.1 and \Cref{lemma:defensiveDeltaCoverVertexAttackFractionalPartLargerHalf}, it is possible to move the defender tokens from the midpoint of the edge to obtain a neat non-multiset $\ell$-defense such that each defender token covers the same set of vertices.
    
    In the cases $\tilde\delta \in \{0,\frac{1}{2}\}$, we also use \Cref{corollary:defensiveDeltaCoverVertexAttackNeatNice}.1.
    In the case $\tilde\delta=0$, by \Cref{lemma:defensiveDeltaCoverVertexAttackFractionalPartSmallerHalf}, we can group the defender tokens on an edge $\{u,v\}$ into three groups such that all vertices within the same group can reach the same set of vertices:
    defender token on $u$, defender token on $v$, defender tokens on $p(u,v,\lambda)$ with $\lambda \in (0,1)$.
    In the cases $\tilde\delta \in \{0,\frac{1}{2}\}$, we also use \Cref{corollary:defensiveDeltaCoverVertexAttackNeatNice}.1.
    In the case $\tilde\delta=\frac{1}{2}$, by \Cref{lemma:defensiveDeltaCoverVertexAttackFractionalPartLargerHalf}, we can group the defender tokens on an edge $\{u,v\}$ into three groups such that all vertices within the same group can reach the same set of vertices:
    defender token on $p(u,v,\frac{1}{2})$, defender tokens that are on $p(u,v,\lambda)$ with $\lambda \in [0,\frac{1}{2})$, and defender tokens that are on $p(u,v,\lambda)$ with $\lambda \in (\frac{1}{2},1]$.
    In both cases, we can build an $\exists\forall$-witness by encoding the neat defense $D$ using the above groups per edge as multisets.

    In all cases, since each vertex can be attacked at most once, $|V(G)|$ defender tokens are always enough to defend any possible attack.
    Therefore, the defense $D$ can be encoded polynomially.
    We now encode all possible $k$-attacks, which are subsets of vertices, with the $\forall$-quantified part.
    The verifier checks whether there is a matching in the bipartite graph based on the defense and attack.
\end{proof}

Now, we prove the $\Sigma^P_2$-hardness of {\sc Defensive $\delta$-Covering with Vertex Attack} and {\sc Defensive $\delta$-Covering with Vertex Attack and Multiset Defense}.

\begin{lemma}\label{lem:defensiveDeltaCoverVertexAttackHardness}
    {\sc Defensive $\delta$-Covering with Vertex Attack} is $\Sigma^P_2$-hard for any $\delta \geq 1$.
\end{lemma}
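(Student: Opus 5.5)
We reduce from {\sc Distance-$d$ Defensive Dominating Set}, which is $\Sigma^P_2$-complete for every fixed $d \in \N$ by \Cref{lemma:distanceDefensiveDominationCompleteness}. Fix $\delta \geq 1$ and write $\delta = \lfloor\delta\rfloor + \tilde\delta$. The basic idea is to turn the continuous placement freedom into a discrete one using the neat/nice structure of \Cref{corollary:defensiveDeltaCoverVertexAttackNeatNice} and \Cref{lemma:defensiveDeltaCoverVertexAttackFractionalPartSmallerHalf,lemma:defensiveDeltaCoverVertexAttackFractionalPartLargerHalf}. With vertex attacks, a defender token then behaves exactly like a token on a vertex or on an edge midpoint.

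\textbf{Case $\tilde\delta < \half$.} Set $d = \lfloor\delta\rfloor \geq 1$ and map an instance $(G,k,\ell)$ of {\sc Distance-$d$ Defensive Dominating Set} to the identical instance $(G,k,\ell)$. A vertex token covers exactly $N_d[v]$. By part 1 of \Cref{lemma:defensiveDeltaCoverVertexAttackFractionalPartSmallerHalf}, any point $p(u,v,\lambda)$ with $\lambda \le \half$ covers a subset of the vertices covered by $u$.

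\begin{itemize}
\item Given a continuous defense, we move every token to its nearer endpoint. Its covered vertex set only grows, so every attack is still countered.
\item If two tokens now land on the same vertex, we must repair this, since Distance-$d$ Defensive Domination forbids multisets. I expect this to be the main obstacle.
\item The first thing I would try is a preprocessing step on $G$: attach to each vertex a pendant path, or a bunch of pendant gadgets like the clique/independent-set gadget of \Cref{lemma:distanceDefensiveDominationCompleteness}, so that a second token at $v$ can be relocated to a fresh neighbour $v'$ that covers a superset of what $v$ covers on original vertices.
\item Alternatively, I would reduce from the multiset-defense variant, \Cref{cor:distanceDefensiveDominationMultisetDefenseCompleteness}, which is also $\Sigma^P_2$-hard. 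A continuous defense then maps directly to a multiset defense of the same power.
\item In the other direction, a multiset vertex defense can be realised as a set defense by spreading coinciding tokens to $p(v,w,\lambda)$ with distinct $\lambda \le \tilde\delta$, using part 2 of \Cref{lemma:defensiveDeltaCoverVertexAttackFractionalPartSmallerHalf}. This only works when $\tilde\delta > 0$; for $\tilde\delta = 0$ I would use \Cref{lemma:defensiveDeltaCoverVertexAttackFractionalPartZeroOrHalf} together with the subdivision trick below.
\end{itemize}

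\textbf{Case $\tilde\delta \geq \half$.} Here nice tokens sit at edge midpoints, and a midpoint token covers $N_{\lfloor\delta\rfloor}[\{u,v\}]$, which is not a ball around a vertex. To fix this I would apply the reduction to the $2$-subdivision $G_2$ of a graph, with attacks on subdivision vertices neutralised by gadgets, and take $d = 2\lfloor\delta\rfloor+1$ in the source problem.

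\begin{itemize}
\item A midpoint of an edge of $G_2$ sits at half-integer distance in the scaled metric. Alternatively, we can route the reduction so that the distance-$d$ neighbourhood of each original vertex $x$ equals the set of vertices within distance $\delta$ of the midpoint of some private pendant edge attached to $x$.
\item The pendant construction: attach to each vertex $x$ a pendant edge $\{x,x'\}$, and make $x'$ unattractive to attack. We do this with enough forced defenders, forced via the independent-set-and-clique gadgets of \Cref{lemma:distanceDefensiveDominationCompleteness} with $k$ extra tokens each and $\ell$ increased accordingly.
\item The point $p(x,x',\half)$ then covers exactly $N_{\lfloor\delta\rfloor}[x] \cup \{x'\}$, the analogue of a vertex token in Distance-$\lfloor\delta\rfloor$ Defensive Domination.
\item Distinct copies of a multiset defense are again realised by spreading tokens within $[1-\tilde\delta,\tilde\delta]$, using part 2 of \Cref{lemma:defensiveDeltaCoverVertexAttackFractionalPartLargerHalf}. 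For $\tilde\delta = \half$, where this interval degenerates, I would instead use \Cref{lemma:defensiveDeltaCoverVertexAttackFractionalPartZeroOrHalf} and place the extra tokens on open half-edges.
\end{itemize}

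\textbf{Correctness.} The forward direction is immediate. For the backward direction we take a nice defense via \Cref{corollary:defensiveDeltaCoverVertexAttackNeatNice} and argue exchange-wise that:

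\begin{itemize}
\item gadget tokens are forced, exactly as in \Cref{lemma:distanceDefensiveDominationCompleteness};
\item every remaining token can be moved to a canonical position (vertex or private pendant midpoint) without shrinking its covered set of original vertices.
\end{itemize}

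The delicate part throughout is the boundary cases $\tilde\delta \in \{0,\half\}$ and the multiset-versus-set conversion.
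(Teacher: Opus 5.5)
\textbf{Where the proposal works.} Your case $0<\tilde\delta<\frac12$ is fine. You reduce from the multiset-defense variant (\Cref{cor:distanceDefensiveDominationMultisetDefenseCompleteness}) by the identity map. You push tokens to the nearer endpoint via part~1 of \Cref{lemma:defensiveDeltaCoverVertexAttackFractionalPartSmallerHalf}, and you spread coinciding tokens over distinct points $p(v,w,\lambda)$ with $0<\lambda\le\tilde\delta$ via part~2. This is a correct reduction.

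\textbf{Main gap: the case $\tilde\delta\ge\frac12$.} Your pendant construction leaves the original edges in place. By \Cref{lemma:defensiveDeltaCoverVertexAttackFractionalPartLargerHalf}, the best position on an original edge $\{x,y\}$ is its midpoint, which covers $N_{\lfloor\delta\rfloor}[x]\cup N_{\lfloor\delta\rfloor}[y]$. That is strictly more than any of your canonical positions. So the backward step ``every remaining token can be moved to a canonical position without shrinking its covered set'' fails, and the reduction is actually incorrect. Take $k=1$ and let $G$ be the path on $2\lfloor\delta\rfloor+2$ vertices. A single token at the midpoint of the central edge covers every original vertex, while distance-$\lfloor\delta\rfloor$ domination needs two vertices.

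The natural fix is to subdivide the original edges, but that is unavailable when $\lfloor\delta\rfloor=1$. Adjacent original vertices would then be at distance at least $2>\delta$, so no token near $x$ reaches $N_G[x]$ any more. Your $G_2$ variant with $d=2\lfloor\delta\rfloor+1$ does not help either. Subdivision shrinks the reachable radius to about $\lfloor\delta\rfloor/2$, and for odd $\lfloor\delta\rfloor$ the subdivision vertex of $\{x,y\}$ again covers a union of two balls.

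This is why the paper gives a completely separate reduction for $\delta\in[\frac32,2)$. It reduces from \textsc{Minimum Cardinality Clique Interdiction}, adapting Chaplick et al.; there defenders are meant to sit on edges, and the two-neighbourhood coverage of a midpoint is exploited rather than fought. For $\delta\ge2$ with $\tilde\delta\ge\frac12$, the paper subdivides each original edge $\lfloor\delta\rfloor-1$ times, so adjacent original vertices are at distance $\lfloor\delta\rfloor$. Among original vertices, a token within $\tilde\delta$ of $u$ then covers exactly $N_G[u]$, and tokens deeper in the path cover only $\{u,v\}$. Clique--matching--clique gadgets absorb attacks on the new vertices.

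\textbf{Second gap: the case $\tilde\delta=0$.} This case contains $\delta=1$ and is also left unproved. Here an interior point of $\{u,v\}$ covers only $N_{\lfloor\delta\rfloor-1}[\{u,v\}]$ among the vertices. So $c$ copies of a token at $v$ cannot in general be realised by $c$ distinct points, and the forward direction of your multiset route breaks. Reducing from the set version instead brings back the collision problem in the backward direction. ``The subdivision trick below'' is never specified, and no subdivision is available for $\delta=1$. The paper's Case~1 simply identifies the problem with Distance-$\lfloor\delta\rfloor$ Defensive Domination via \Cref{lemma:defensiveDeltaCoverVertexAttackFractionalPartSmallerHalf}, without discussing coinciding tokens, so it does not supply this argument either. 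Your proof still needs one.
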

\begin{proof}
    For the hardness, we distinguish between three cases:
    \begin{enumerate}
        \item $\delta$ with $\tilde\delta \in [0, \frac{1}{2})$
        \item $\delta \in [\frac{3}{2},2)$ 
        \item $\delta$ with $\delta \geq 2$ and $\tilde\delta \in [\frac{1}{2},1)$
    \end{enumerate}

    \proofsubparagraph{The Case $\delta$ with $\tilde\delta \in [0, \frac{1}{2})$}
    We start with the easiest case of $\delta$ with $\tilde\delta \in [0, \frac{1}{2})$.
    In this case, each defense can be placed only on vertices by \Cref{lemma:defensiveDeltaCoverVertexAttackFractionalPartSmallerHalf}.
    Furthermore, if a defender token is placed on a vertex $v$, the distance-$d$ neighborhood of $v$ can be defended by this token.
    Therefore, this is equivalent to solving the problem {\sc Distance-$d$ Defensive Domination}.

    \proofsubparagraph{The Case $\delta \in [\frac{3}{2},2)$}
    For the second case, we, unfortunately, cannot use a reduction utilizing subdivision of edges since $\delta < 2$.
    Therefore, we slightly customize the reduction by \cite{DBLP:conf/mfcs/ChaplickGK25} from {\sc Minimum Cardinality Clique Interdiction}, which is $\Sigma^P_2$-complete \cite{DBLP:journals/amai/Rutenburg93,DBLP:conf/mfcs/ChaplickGK25,DBLP:journals/tcs/GruneW26}, to {\sc Defensive Domination} to a reduction to {\sc Defensive $\delta$-Covering} with $\delta \in [\frac{3}{2}, 2)$.
    The overall construction is very similar, so is the correctness proof, however, it differs in technicalities.
    For the sake of completeness, we provide a full proof.

    Let $(G, s, t)$ be {\sc Minimum Cardinality Clique Interdiction} instance, where $G$ is a graph, $s$ is the interdiction budget and $t$ is the clique size threshold.
    We construct a {\sc Defensive $\delta$-Covering} instance $(G', k, \ell)$ as follows partitionend into the following sets, see also \Cref{fig:deltaCover3over2}:
    \begin{description}
        \item[$W$] For each vertex $v \in V(G)$, we introduce four vertices $v'_\ell, v'_r, v''_\ell, v''_r$ into set $W$ and an two edges $e'_v = \{v'_\ell, v'_r\}$, $e''_v = \{v''_\ell, v''_r\}$.
        \item[$F$] For each edge $e$, we introduce a vertex $e$ and four edges $\{e, v'_\ell\}, \{e, v'_r\}, \{e, v''_\ell\}, \{e, v''_r\}$.
        \item[$I_V$] For each vertex $v$, we introduce an independent sets $I_{v}$ of size $t \choose 2$.
        The union of $I_{v}$ over all $v \in V(G)$ is ${I_V}$.
        We connect all vertices in $I_v$ to $v'_r$.
        \item[$I'_V$] For each vertex $v$, we introduce two independent sets $I'_{v,\ell}$ and $I'_{v,r}$ of size $t$.
        We connect each vertex $I'_{v,\ell}$ with exactly one edge in $I'_{v,r}$.
        The union of $I'_{v,\ell}$ (resp. $I'_{v,r}$) over all $v \in V(G)$ is ${I'_V,\ell}$ (resp. $I'_{V,r}$).
        The union of ${I'_V,\ell}$ and $I'_{V,r}$ is $I'_V$.
        We connect $I_v$ to all vertices $I'_{v,\ell}$.
        \item[$I_1, I_4, I_5$] For $j \in \{1,4,5\}$, $I_j$ is an independent set of size $|V(G)|+s$.
        \item[$Q_1, Q_4, Q_5$] For $j \in \{1,4,5\}$, $Q_j$ consists of two cliques $Q_{j,\ell}$ and $Q_{j,r}$ of size $|V(G)|+s$.
        Additionally, each vertex of $Q_{j,l\ell}$ is connected to exactly one vertex in $Q_{j,r}$, we identify the newly introduced edge set by $E(Q_j)$.
        We connect the independent set $I_j$ to all vertices $Q_{j,\ell}$.
        We connect $Q_4$ to all vertices in $V_r$.
        We connect $Q_5$ to all vertices in $I'_{V,\ell}$.
        \item[$I_2$] This set is an independent set of size $|V(G)|+s-{t \choose 2}$.
        \item[$Q_2$] This set consists of two cliques $Q_{2,\ell}$ and $Q_{2,r}$ of size $|V(G)|+s-(t+1)$.
        Additionally, each vertex of $Q_{2,l\ell}$ is connected to exactly one vertex in $Q_{2,r}$, we identify the newly introduced edge set by $E(Q_2)$.
        We connect $Q_{1,r}$ to all vertices in $Q_2$.
        We connect the independent set $I_2$ to all vertices $Q_{2,\ell}$.
        We connect $Q_{2,r}$ to all vertices in $F$.
        We connect $Q_{2,r}$ to all vertices in $I_V$.
        \item[$I_3$] $I_3$ is an independent set of size $2(|V(G)|+s)$.
        We connect $V_r$ to all vertices in $I_3$.
    \end{description}
    
    At last, we set $k = |V(G)| + s$ and $\ell = 5(n+s) + nt - (t+1)$.

    In the following we prove that $s$ vertices can be deleted from $G$ such that no clique of size $t$ remains if and only if there is an $\ell$-defense that counters every $k$-attack in $G'$ (\Cref{fig:deltaCover3over2}).

    We start with the forward direction.
    For this, let $(G,s,t)$ be a yes-instance.
    That is there is a blocker $B$ with $|B| = s$ such that $G \setminus B$ does not contain a clique of size at least $t$.
    Based on this, we construct a defense $D$ as follows.
    $$
        D = E(Q_1) \cup E(Q_2) \cup E(Q_4) \cup E(Q_5) \cup \bigcup_{v \in V(G)} E_{I_v, I'_v} \cup \{e'_v : v \in V(G)\} \cup \{e''_v : v \in B\},
    $$
    where $E_{I_v, I'_v} \subseteq \{\{u,v\} : u \in I_v, v\in I'_v\}$ with $|E_{I_v, I'_v}| = t$.

    We show that $D$ counters every $k$-attack.
    All vertices $v$ in the independent sets $I_1, I_3, I_4, I_5$, the sets $Q_1, Q_2, Q_4, Q_5$, and set $V$ are defended, since all have $|V(G)|+s$ defenders in $B^{\leq}(v, \delta)$.
    Therefore, $count_D(A) \geq |V(G)| + s \geq |A|$.
    It follows that $A \subseteq I_2 \cup F \cup I_V \cup I'_V$.
    
    Moreover, the left side $I'_{V,\ell}$ of the vertices in $I'_V$ are defended by $E(Q_5)$.
    If any of these vertices is part of $A$, we have $count_D(A) \geq |V(G)| + s \geq |A|$.
    Further, if a vertex of $I_V$ is attacked, we have $count_D(A) = |E(Q_2)| + |E_{I_v, I'_v}| + |\{v\}| = (|V(G)|+s-(t+1)) + t + 1 = |V(G)| + s$ for some $v \in V(G)$.
    In the following we assume that $A$ is inclusion minimal.
    The right side $I'_{V,r}$ of the vertices in $I'_V$ can be directly defended by $E_{I_v, I'_v}$ because $B^\leq(e,\delta) \setminus I'_{V,r} \subseteq I'_{V,\ell} \cup I_V \cup Q_2 \cup W$, which are not attacked as described above.
    By minimality of $A$, we have $A \cap I'_{V,r} = \emptyset$.
    Thus, $A \subseteq I_2 \cup F$.

    Since $|I_2| \leq |Q_2|$, $A \subsetneq I_2$, otherwise $count_D(A) \geq |A|$ (w.l.o.g. we assume $t \geq 4$).
    Therefore, $A \cap F \neq \emptyset$ and since each vertex in $F$ is adjacent to at least two defenders in $D$
    $$
        |A \cap F| \geq |A| - |I_2| > |V(G)| + s - (t+1) + 2 - (|V(G)|+s-{t \choose 2}) = {t-1 \choose 2}
    $$
    It follows that $|B^\leq(A \cap F, \delta) \cap D| \geq t$, hence every minimal attack has size $|A| = |V(G)| + s$ since otherwise $count_D(A) \geq |V(G)| + s \geq |A|$.
    
    The result is an attack $A$ with $I_2 \subseteq A$ and $|A \cap F| = {t \choose 2}$.
    A subset of ${t \choose 2}$ edges from $E(G)$ in $G$ either are incident to at least $t+1$ vertices from $V(G)$ or induce a clique of size $t$ in $G$.
    We further have $count_{D \cap E(Q_2)}(A) = |V(G)| + s - (t+1)$ and $E(Q_2) \cap E(W) = \emptyset$.
    In the first case, where $A \cap F$ is incident to at least $t+1$ vertices in $G$, $E(W)$ contains at least $t+1$ defenders and thus $count_D(A) \geq |V(G)| + s$.
    In the second case, $E(W)$ contains also at least $t+1$ defenders since $e''_v \in D$ for some $v$ in the clique and thus $count_D(A) \geq |V(G)| + s$.

    For the backward direction, we assume that $(G', k, \ell)$ is a yes-instance, i.e. every $k$-attack can be defended by an $\ell$-defense.
    $D$ needs to counter every attack, particularly
    \begin{enumerate}
        \item For $j \in \{1,4,5\}$, $|V(G)|+s$ defenders are placed in $B^\leq(I_j, \delta)$, since every attack on $I_j$ needs to be defended.
        \item For each $v \in V(G)$, $t$ defenders need to be placed in $B^\leq(I'_v, \delta)$.
        \item In $B^\leq(I_2, \delta)$, there are at least $|V(G)| + s - (t+1)$ defenders.
        \item In $B^\leq(I_3, \delta)$ at least $|V(G)|+s$ defenders need to be placed.
    \end{enumerate}
    We construct a nice $\ell$-defense $D$ as follows in accordance with \Cref{lemma:defensiveDeltaCoverVertexAttackFractionalPartLargerHalf}:
    For $j \in \{1,4,5\}$, $|V(G)|+s$, we place the defenders on $E(Q_j)$, thus, all vertices of $I_j$ and $Q_j$ can be defended. Because $\delta < 2$, we cannot defend more than these vertices and it is an optimal choice to do so.
    
    For each $v \in V(G)$, we place $t$ defenders on $E_{I_v, I'_v}$.
    Since $I'_{v, \ell}$ and $I_v$ build up a complete bipartite graph, we can cover all vertices in $I'_v \cup I_v$.
    We further are able to additionally defend also vertices in $W$ and $Q_2$; placing the defenders within $I'_V$ would not give us this opportunity.
    
    Since $Q_1$ is defended, the defender tokens placed on $E(Q_2)$ reach $I_2,Q_2,I_V,$ and $F$.
    
    There are six options of placing vertices in $B^\leq(I_3, \delta)$: $I_3$, $V_r$, $E(W)$, $E(W,I_3)$, $E(W, I_W)$, $E(W,Q_4)$.
    Placing the defenders on $I_3$ is not defending all vertices in $I_3$.
    Placing the vertices on $W_r$ is also not optimal, since $\delta \geq \frac{3}{2}$ and placing them on the incident edges defends at least $B^\leq(W_r, \delta)$.
    The vertices in $Q_4, I'_{V,\ell}, Q_{2,r}$ are already fully defended since $count_D(A) \geq |V(G)|+s$.
    Therefore, the dominant options $E(W)$ and $E(W,I_V)$ remain.
    By the above argument, we know that for each $v \in I_V$, $|B^\leq(v,\delta) \cap D| = |V(G)|+s-(t+1)+t = |V(G)|+s-1$.
    Since $v \in W_r$ is adjacent to $I_v$, placing the defender on $e'_v$ fully defends $I_V$, while also defending vertices in $F$.
    Accordingly, placing defenders in $E(W)$ is a dominant over placing them in $E(W,I_V)$.
    
    We have a remainder of $s$.
    These could be placed on $E(W), E(W,Q_4),$ or $E(W,I_V)$.
    Again, since $Q_4,I_3,I_V,I'_V$ are defended by at least $|V(G)|+s$ defenders, the dominant choice is to place the defenders on $E(W)$.
    It remains to defend all attacks on $I_2$ and $F$ since all other sets have at least $|V(G)|+s$ in reach.
    The defenders on $E(Q_2)$ reach both the sets $I_2$ and $F$.
    Accordingly, an attack includes at least ${t \choose 2}$ vertices corresponding to an edge set $E' \subseteq E(G)$ such that no more than $t$ vertices in $G$ are incident to $E'$.
    If more than ${t \choose 2}$ vertices in $F$ are attacked, $E'$ is incident to at least $t+1$ vertices, which is already defended by the defenders in $e'_v$ for all $v \in V(G)$.
    If exactly ${t \choose 2}$ vertices in $F$ are attacked, $E'$ is either incident to at least $t+1$ vertices or $E'$ induces a clique $G$.
    The first case can be handled as above.
    In the second case, because $D$ is an $\ell$-defense that counters every $k$-attack, there needs to be at least one vertex in $e''_v$ for some $v \in V(G)$ in the clique in $G$.
    Since $s$ is the budget of the interdictor, all cliques are interdicted.
    
    In conclusion, an $\ell$-defense $D$ exists in $G'$ that counters every $k$-attack if and only if all cliques of size $t$ can be interdicted with an interdiction budget of at most $s$. 
    
    \begin{figure}[!ht]
    \centering
    \scalebox{0.4}{
        \begin{tikzpicture}[
            cross/.style={%
                    path picture={%
                        \draw[thick] (path picture bounding box.south west) -- (path picture bounding box.north east);
                        \draw[thick] (path picture bounding box.north west) -- (path picture bounding box.south east);
                    }
                },
        ]
        
            \node[] (I1) at (-1.25,0) {};
            \node[] (I1label) at ($(I1)+(0,0)$) {\Huge $I_1$};
            \node[] (I1label) at ($(I1)+(-0.5,-0.75)$) {\Large $|V(G)|+s$};
            \draw[draw=black] ($(I1)+(0.75,-0.5)$) rectangle ++(5,1);
            \foreach \x in {0,1,3,4} {
                \node[circle,draw,thick,minimum size=0.75cm] (I1\x) at ($(I1)+(1.25,0)+(\x,0)$) {};
            }
            \node[] () at ($(I1)+(1.25,0)+(2,0)$) {$\cdots$};

            \draw ($(I1)+(3.25,-3.5)$) ellipse (3.25cm and 2cm);
            \node[] at ($(I1)+(-0.75,-3.5)$) {\Huge $Q_1$};
            \node[] at ($(I1)+(-1.25,-4.25)$) {\Large $|V(G)|+s$};
            \foreach \x in {0,1,3,4} {
                \node[circle,draw,thick,minimum size=0.75cm] (Q1\x1) at ($(I1)+(1.25,-2.5)+(\x,0)$) {};
                \node[circle,draw,thick,minimum size=0.75cm] (Q1\x2) at ($(I1)+(1.25,-4.5)+(\x,0)$) {};
            }
            \node[] () at ($(I1)+(1.25,-2.5)+(2,0)$) {$\cdots$};
            \node[] () at ($(I1)+(1.25,-4.5)+(2,0)$) {$\cdots$};
            \foreach \x in {0,1,3,4} {
                \draw[thick] (Q1\x1) -- (Q1\x2);
            }
            \foreach \x in {0,1,3,4} {
                \foreach \y in {0,1,3,4} {
                    \ifnum \x < \y
                        \draw (Q1\x1) edge[bend left] (Q1\y1);
                        \draw (Q1\x2) edge[bend right] (Q1\y2);
                    \fi
                }
            }

            \foreach \x in {0,1,3,4} {
                \foreach \y in {0,1,3,4} {
                    \draw[lightgray,thick] (I1\x) -- (Q1\y1);
                }
            }

            \node[] (I2) at (-2.25,-7.25) {\Huge $I_2$};
            \node[] at ($(I2)+(-2.1,-0.75)$) {\Large $|V(G)|+s-{t \choose 2}$};
            \draw[draw=black] ($(I2)+(-0.5,-0.75)$) rectangle ++(1,-5);
            \foreach \y in {0,1,3,4} {
                \node[circle,draw,thick,minimum size=0.75cm] (I2\y) at ($(I2.center)+(0,-1.25)+(0,-\y)$) {};
            }
            \node[] () at ($(I2.center)+(0,-1.2)+(0,-2)$) {$\vdots$};
            
            \draw ($(I2)+(4.25,-3.25)$) ellipse (2cm and 3.25cm);
            \node[] at ($(I2)+(4,-7.25)$) {\Huge $Q_2$};
            \node[] at ($(I2)+(4,-8)$) {\Large $|V(G)|+s-(t+1)$};
            \foreach \y in {0,1,3,4} {
                \node[circle,draw,thick,minimum size=0.75cm] (Q2\y1) at ($(I2)+(3.25,-1.25)+(0,-\y)$) {};
                \node[circle,draw,thick,minimum size=0.75cm] (Q2\y2) at ($(I2)+(5.25,-1.25)+(0,-\y)$) {};
            }
            \node[] () at ($(I2)+(1.25,-3.2)+(2,0)$) {$\vdots$};
            \node[] () at ($(I2)+(1.25,-3.2)+(4,0)$) {$\vdots$};
            \foreach \y in {0,1,3,4} {
                \draw[thick] (Q2\y1) -- (Q2\y2);
            }
            \foreach \x in {0,1,3,4} {
                \foreach \y in {0,1,3,4} {
                    \ifnum \x < \y
                        \draw (Q2\x1) edge[bend right] (Q2\y1);
                        \draw (Q2\x2) edge[bend left] (Q2\y2);
                    \fi
                }
            }
            
            \foreach \x in {0,1,3,4} {
                \foreach \y in {0,1,3,4} {
                    \draw[lightgray,thick] (I2\x) -- (Q2\y1);
                }
            }

            \foreach \x in {0,1,3,4} {
                \foreach \y in {0,1,3,4} {
                    \draw[lightgray] (Q1\x2) -- (Q2\y1);
                    \draw[lightgray] (Q1\x2) -- (Q2\y2);
                }
            }

            \node[] (I3) at (20.25,0) {\Huge $I_3$};
            \node[] at ($(I3)+(2,-0.75)$) {\Large $2(|V(G)|+s)$};
            \draw[draw=black] ($(I3)+(-0.5,-0.75)$) rectangle ++(1,-5);
            \foreach \y in {0,1,3,4} {
                \node[circle,draw,thick,minimum size=0.75cm] (I3\y) at ($(I3.center)+(0,-1.25)+(0,-\y)$) {};
            }
            \node[] () at ($(I3.center)+(0,-1.2)+(0,-2)$) {$\vdots$};

            \node[] (I4) at (25,-7.25) {};
            \node[] at ($(I4)+(0,-6.5)$) {\Huge $I_4$};
            \node[] at ($(I4)+(0,-7.25)$) {\Large $|V(G)|+s$};
            \draw[draw=black] ($(I4)+(-0.5,-0.75)$) rectangle ++(1,-5);
            \foreach \y in {0,1,3,4} {
                \node[circle,draw,thick,minimum size=0.75cm] (I4\y) at ($(I4.center)+(0,-1.25)+(0,-\y)$) {};
            }
            \node[] () at ($(I4.center)+(0,-1.2)+(0,-2)$) {$\vdots$};
            
            \draw ($(I4)+(-4.25,-3.25)$) ellipse (2cm and 3.25cm);
            \node[] at ($(I4)+(-4,-7.25)$) {\Huge $Q_4$};
            \node[] at ($(I4)+(-4,-8)$) {\Large $|V(G)|+s$};
            \foreach \y in {0,1,3,4} {
                \node[circle,draw,thick,minimum size=0.75cm] (Q4\y1) at ($(I4)-(3.25,1.25)+(0,-\y)$) {};
                \node[circle,draw,thick,minimum size=0.75cm] (Q4\y2) at ($(I4)-(5.25,1.25)+(0,-\y)$) {};
            }
            \node[] () at ($(I4)+(-1.25,-3.2)-(2,0)$) {$\vdots$};
            \node[] () at ($(I4)+(-1.25,-3.2)-(4,0)$) {$\vdots$};
            \foreach \y in {0,1,3,4} {
                \draw[thick] (Q4\y1) -- (Q4\y2);
            }
            \foreach \x in {0,1,3,4} {
                \foreach \y in {0,1,3,4} {
                    \ifnum \x < \y
                        \draw (Q4\x1) edge[bend left] (Q4\y1);
                        \draw (Q4\x2) edge[bend right] (Q4\y2);
                    \fi
                }
            }
            
            \foreach \x in {0,1,3,4} {
                \foreach \y in {0,1,3,4} {
                    \draw[lightgray,thick] (I4\x) -- (Q4\y1);
                }
            }
            
            \node[] (IV) at (9,0) {};
            \node[] () at ($(IV)+(0,0.1)$) {\Huge $I_V$};
            \draw[draw=black] ($(IV)+(-0.75,-0.5)$) rectangle ++(1.5,-5.5);
            \draw[draw=black] ($(IV)+(-0.5,-1.75)$) rectangle ++(1,-3);
            \foreach \y in {0,2} {
                \node[circle,draw,thick,minimum size=0.75cm] (IV\y) at ($(IV.center)+(0,-2.25)+(0,-\y)$) {};
            }
            \node[] () at ($(IV.center)+(0,-1.2)+(-0,-2)$) {$\vdots$};
            \node[] () at ($(IV.center)+(0,-1.2)+(-0.2,-0)$) {$\vdots$};
            \node[] () at ($(IV.center)+(0,-1.2)+(0,-4)$) {$\vdots$};
            \node[] at ($(IV.center)+(0,-1.2)+(0.3,-0)$) {\huge $I_v$};
            \draw[decoration={brace,raise=28pt, amplitude=10pt},decorate] ([xshift=-2pt]IV2.south) -- node[left=40pt] {\huge $t \choose 2$} ([xshift=-2pt]IV0.north);

            \node[] (IprimeV) at (14,0) {};
            \node[] (IprimeVlabel) at ($(IprimeV)+(0,1)$) {\Huge $I'_V$};
            \node[] (Vl) at ($(IprimeVlabel)+(-1,-0.95)$) {\Huge $I'_{V,\ell}$};
            \node[] (Vr) at ($(IprimeVlabel)+(1,-0.95)$) {\Huge $I'_{V,r}$};
            \draw[draw=black] ($(IprimeV)+(-1.75,-0.5)$) rectangle ++(3.5,-5.5);
            \draw[draw=black] ($(IprimeV)+(-1.5,-1.75)$) rectangle ++(3,-3);
            \foreach \y in {-1,1} {
                \node[circle,draw,thick,minimum size=0.75cm] (IprimeV\y1) at ($(IprimeV)+(-1,-3.25)+(0,-\y)$) {};
                \node[circle,draw,thick,minimum size=0.75cm] (IprimeV\y2) at ($(IprimeV)+(1,-3.25)+(0,-\y)$) {};
            }
            \node[] () at ($(IprimeV.center)+(-1,-1.3)+(0,-1.8)$) {$\vdots$};
            \node[] () at ($(IprimeV.center)+(1,-1.3)+(0,-1.8)$) {$\vdots$};
            \node[] () at ($(IprimeV.center)+(0,-1.2)+(0,-0)$) {$\vdots$};
            \node[] () at ($(IprimeV.center)+(0,-1.2)+(0,-4)$) {$\vdots$};
            \foreach \y in {-1,1} {
                \draw[thick] (IprimeV\y1) -- (IprimeV\y2);
            }
            \node[] at ($(IprimeV.center)+(0,-1.2)+(0.675,-0)$) {\huge $I'_v$};
            \draw[decoration={brace,raise=28pt, amplitude=10pt},decorate] ([xshift=-2pt]IprimeV-12.north) -- node[right=40pt] {\huge $t$} ([xshift=-2pt]IprimeV12.south);

            \foreach \x in {0,2} {
                \foreach \y in {-1,1} {
                    \draw[lightgray,thick] (IV\x) -- (IprimeV\y1);
                }
            }
            
            \node[] (V) at (14,-16) {};
            \node[] (Vlabel) at ($(V)+(0,-0.75)$) {\Huge $W$};
            \node[] (Vl) at ($(Vlabel)+(-1,0.75)$) {\Huge $W_\ell$};
            \node[] (Vr) at ($(Vlabel)+(1,0.75)$) {\Huge $W_r$};
            \draw[draw=black] ($(V)+(-1.75,0.75)$) rectangle ++(3.5,7);
            \foreach \y in {0,1,3,4} {
                \node[circle,draw,thick,minimum size=0.75cm] (V\y1) at ($(V)+(-1,2.25)+(0,\y)$) {};
                \node[circle,draw,thick,minimum size=0.75cm] (V\y2) at ($(V)+(1,2.25)+(0,\y)$) {};
                \draw[draw=black] ($(V\y1)+(-0.5,-0.45)$) rectangle ++(3,0.9);
            }
            \node[] () at ($(V.center)+(-1,1.3)+(0,3)$) {$\vdots$};
            \node[] () at ($(V.center)+(1,1.3)+(0,3)$) {$\vdots$};
            \foreach \y in {0,1,3,4} {
                \draw[thick] (V\y1) -- (V\y2);
            }

            \node[] (E) at (9,-16) {\Huge $F$};
            \draw[draw=black] ($(E)+(-0.5,0.75)$) rectangle ++(1,7);
            \foreach \y in {0,1,3,5,6} {
                \node[circle,draw,thick,minimum size=0.75cm] (E\y) at ($(E.center)+(0,1.25)+(0,\y)$) {};
            }
            \node[] () at ($(E.center)+(0,1.3)+(0,2)$) {$\vdots$};
            \node[] () at ($(E.center)+(0,1.3)+(0,4)$) {$\vdots$};

            \foreach \y in {0,1,3,4} {
                \draw[ultra thick] (E3) -- (V\y1);
            }

            \foreach \x in {0,1,3,4} {
                \foreach \y in {0,1,3,4} {
                    \draw[lightgray,thick] (V\x2) -- (I3\y);
                }
            }

            \foreach \x in {0,1,3,4} {
                \foreach \y in {0,1,3,4} {
                    \draw[lightgray,thick] (V\x2) -- (Q4\y2);
                }
            }

            \foreach \x in {4} {
                \foreach \y in {0,2} {
                    \draw[lightgray,thick] (V\x2) -- (IV\y);
                }
            }

            \foreach \x in {0,1,3,5,6} {
                \foreach \y in {0,1,3,4} {
                    \draw[lightgray,thick] (E\x) -- (Q2\y2);
                }
            }

            \foreach \x in {0,1,3,4} {
                \foreach \y in {0,2} {
                    \draw[lightgray,thick] (Q2\x2) -- (IV\y);
                }
            }

            \node[] (I5) at (8,7) {\Huge $I_5$};
            \node[] at ($(I5)+(-0.5,-0.75)$) {\Large $|V(G)|+s$};
            \draw[draw=black] ($(I5)+(0.75,-0.5)$) rectangle ++(5,1);
            \foreach \x in {0,1,3,4} {
                \node[circle,draw,thick,minimum size=0.75cm] (I5\x) at ($(I5)+(1.25,0)+(\x,0)$) {};
            }
            \node[] () at ($(I5)+(1.25,0)+(2,0)$) {$\cdots$};

            \draw ($(I5)+(3.25,-3.5)$) ellipse (3.25cm and 2cm);
            \node[] at ($(I5)+(-0.75,-3.5)$) {\Huge $Q_5$};
            \node[] at ($(I5)+(-1.25,-4.25)$) {\Large $|V(G)|+s$};
            \foreach \x in {0,1,3,4} {
                \node[circle,draw,thick,minimum size=0.75cm] (Q5\x1) at ($(I5)+(1.25,-2.5)+(\x,0)$) {};
                \node[circle,draw,thick,minimum size=0.75cm] (Q5\x2) at ($(I5)+(1.25,-4.5)+(\x,0)$) {};
            }
            \node[] () at ($(I5)+(1.25,-2.5)+(2,0)$) {$\cdots$};
            \node[] () at ($(I5)+(1.25,-4.5)+(2,0)$) {$\cdots$};
            \foreach \x in {0,1,3,4} {
                \draw[thick] (Q5\x1) -- (Q5\x2);
            }
            \foreach \x in {0,1,3,4} {
                \foreach \y in {0,1,3,4} {
                    \ifnum \x < \y
                        \draw (Q5\x1) edge[bend left] (Q5\y1);
                        \draw (Q5\x2) edge[bend right] (Q5\y2);
                    \fi
                }
            }

            \foreach \x in {0,1,3,4} {
                \foreach \y in {0,1,3,4} {
                    \draw[lightgray,thick] (I5\x) -- (Q5\y1);
                }
            }

            \foreach \x in {-1,1} {
                \foreach \y in {0,1,3,4} {
                    \draw[lightgray,thick] (IprimeV\x1) -- (Q5\y2);
                }
            }
            
            \foreach \x in {0,1,3,4} {
                \node[cross,minimum size=0.5cm] at ($(Q1\x1)+(0,-1)$) {};
                \node[cross,minimum size=0.5cm] at ($(Q2\x1)+(1,0)$) {};
                \node[cross,minimum size=0.5cm] at ($(Q4\x1)+(-1,0)$) {};
                \node[cross,minimum size=0.5cm] at ($(Q5\x1)+(0,-1)$) {};
            }
            \foreach \x in {1,3,4} {
                \node[cross,minimum size=0.5cm] at ($(V\x1)+(1,0)$) {};
            }
            \foreach \x in {-1,1} {
                \node[cross,minimum size=0.5cm] at ($(IprimeV\x1)-(2,0)$) {};
            }

            \node[fill=white,right=of V02.center] {\Huge $e''_w$};
            \node[fill=white,right=of V12.center] {\Huge $e'_w$};
            \node[fill=white,right=of V32.center] {\Huge $e''_v$};
            \node[fill=white,right=of V42.center] {\Huge $e'_v$};
        \end{tikzpicture}
    }
    \caption{The reduction graph $G'$ for {\sc Defensive $\delta$-Cover} for $\delta \in [\frac{3}{2},2)$. The crosses represent defender tokens that need to be placed by the arguments of \Cref{lem:defensiveDeltaCoverVertexAttackHardness}.}
    \label{fig:deltaCover3over2}
\end{figure}

    \proofsubparagraph{The Case $\delta$ with $\delta \geq 2$ and $\tilde\delta \in [\frac{1}{2},1)$}
    In the last case, we provide a reduction from {\sc Defensive Domination}, which shares similarities with the reduction to {\sc Distance-$d$ Defensive Domination}.
    Let $(G, k, \ell)$ be the instance of {\sc Defensive Domination} and $(G', k', \ell')$ be the instance of {\sc Defensive $\delta$-Covering With Vertex Attack}.
    We substitute each edge $\{u,v\}$ by the gadget presented in \Cref{fig:distanceDefensiveDomination:unattackableVertex:odd} for $\lfloor\delta\rfloor$ odd and in \Cref{fig:distanceDefensiveDomination:unattackableVertex:even} for $\lfloor\delta\rfloor$ even.
    The gadget subdivides the edge $\{u,v\}$ $\lfloor \delta \rfloor - 1$ times.
    We append a path of $\lfloor \frac{\lfloor \delta \rfloor+1}{2} \rfloor$ the middle vertex, if $\lfloor \delta \rfloor$ is even, or the two middle vertices, $\lfloor \delta \rfloor$ is odd.
    We connect the end vertex of the path to all vertices of a clique $C^\ell_k$ of size $k$.
    We further introduce another clique $C^r_k$ of size $k$ and connect each vertex in $C^r_k$ to exactly one vertex in $C^\ell_k$.
    We then append a path of length $\lfloor \delta \rfloor - 1$ by connecting the first vertex to all vertices in $C^r_k$.
    At last, we connect the end vertex of the path to an independent set of size $k$.

    Note that all of the vertices in $I_k$ have to be defended and thus $k$ tokens are necessary but also enough.
    The farthest position to place a defender token is on the edges between the two cliques $C^\ell_k$ and $C^r_k$, since $d$ vertices separate the defender token and the vertices from $I_k$.
    Furthermore, these tokens can defend any attack up until vertex $v$ but not any neighbors of $v$ that are not part of the gadget.
    Accordingly, we set $\ell' = \ell + k|E(G)|$ to defend all newly introduced vertices.
    We further set $k' = k$.
    
    \begin{figure}
    \begin{subfigure}{0.48\textwidth}
        \centering
        \scalebox{0.36}{
            \begin{tikzpicture}[
                cross/.style={%
                    path picture={%
                        \draw[thick] (path picture bounding box.south west) -- (path picture bounding box.north east);
                        \draw[thick] (path picture bounding box.north west) -- (path picture bounding box.south east);
                    }
                }
            ]            
                \node[circle,draw,thick,minimum size=0.75cm] (u) at (-1,8) {}; \node[] (ul) at (-1.75,8) {\Huge $u$};
                \node[circle,draw,thick,minimum size=0.75cm] (v) at (-1,4) {}; \node[] (vl) at (-1.75,4) {\Huge $v$};
                \node[circle,draw] (1) at (-1,6.75) {};
                \node[circle,draw] (2) at (-1,5.25) {};
        
                \draw[black] (u) -- (1);
                \draw[black] (v) -- (2);
                \draw[black] (1) -- (2);
        
                \node[circle,draw] (21) at (0,6) {};
                \node[circle,draw] (22) at (1,6) {};
        
                \draw[black] (1) -- (21);
                \draw[black] (2) -- (21);
                \draw[black] (21) -- (22);
    
                \node[circle,draw,minimum size=0.75cm] (Cl1) at (2.5,4.5) {};
                \node[circle,draw,minimum size=0.75cm] (Cl2) at (2.5,6) {};
                \node[circle,draw,minimum size=0.75cm] (Cl3) at (2.5,7.5) {}; \node[] (Cl) at (2.5,8.5) {\Huge $C^\ell_k$};
                \node[circle,draw,minimum size=0.75cm] (Cr1) at (5.5,4.5) {};
                \node[circle,draw,minimum size=0.75cm] (Cr2) at (5.5,6) {};
                \node[circle,draw,minimum size=0.75cm] (Cr3) at (5.5,7.5) {}; \node[] (Cr) at (5.5,8.5) {\Huge $C^r_k$};
    
                \draw[thick] (Cl1) -- (Cl2);
                \draw[thick,bend right] (Cl1) edge (Cl3);
                \draw[thick] (Cl2) -- (Cl3);
                \draw[thick] (Cr1) -- (Cr2);
                \draw[thick,bend left] (Cr1) edge (Cr3);
                \draw[thick] (Cr2) -- (Cr3);
                \draw[thick] (Cl1) -- (Cr1);
                \draw[thick] (Cl2) -- (Cr2);
                \draw[thick] (Cl3) -- (Cr3);
                \node[cross,minimum size=0.5cm] (cross1) at (4,4.5) {};
                \node[cross,minimum size=0.5cm] (cross2) at (4,6) {};
                \node[cross,minimum size=0.5cm] (cross3) at (4,7.5) {};
                
                \draw[draw=black] (10,3.5) rectangle ++(1,5); \node[] (Il) at (11.5,8) {\Huge $I_k$};
                \node[circle,draw,thick,minimum size=0.75cm] (v) at (10.5,4) {};
                \node[circle,draw,thick,minimum size=0.75cm] (v) at (10.5,5) {};
                \node[] (v) at (10.5,6) {\vdots};
                \node[circle,draw,thick,minimum size=0.75cm] (v) at (10.5,7) {};
                \node[circle,draw,thick,minimum size=0.75cm] (v) at (10.5,8) {};
                \node[] (Il1) at (10,4.75) {};
                \node[] (Il2) at (10,6) {};
                \node[] (Il3) at (10,7.25) {};
                    
                \draw[thick] (Cl1) -- (22);
                \draw[thick] (Cl2) -- (22);
                \draw[thick] (Cl3) -- (22);
    
                \node[circle,draw] (C21) at (7,6) {};
                \node[circle,draw] (C22) at (8,6) {};
    
                \draw[black] (C21) -- (C22);
    
                \draw[thick] (Cr1) -- (C21);
                \draw[thick] (Cr2) -- (C21);
                \draw[thick] (Cr3) -- (C21);
                    
                \draw[lightgray,thick] (C22) -- (Il1);
                \draw[lightgray,thick] (C22) -- (Il2);
                \draw[lightgray,thick] (C22) -- (Il3);
    
                \draw[decoration={brace,raise=70pt, amplitude=10pt},decorate] ([xshift=-2pt]21.west) -- node[above=80pt] {\huge $\frac{\lfloor\delta\rfloor+1}{2}$} ([xshift=2pt]22.east);
    
                \draw[decoration={brace, amplitude=10pt,raise=20pt},decorate] (-1,4.75) -- node[left=35pt] {\huge $\lfloor\delta\rfloor-1$} (-1,7.25);
    
                \draw[decoration={brace,raise=70pt, amplitude=10pt},decorate] ([xshift=-2pt]C21.west) -- node[above=80pt] {\huge $\lfloor\delta\rfloor-1$} ([xshift=2pt]C22.east);
            \end{tikzpicture}
        }
        \caption{Gadget for odd distances $\lfloor\delta\rfloor$ (here: $\lfloor\delta\rfloor=3$).}
        \label{fig:distanceDefensiveDomination:unattackableVertex:odd}
    \end{subfigure}
    \hfill
    \begin{subfigure}{0.48\textwidth}
        \centering
        \scalebox{0.36}{
            \begin{tikzpicture}[
                cross/.style={%
                    path picture={%
                        \draw[thick] (path picture bounding box.south west) -- (path picture bounding box.north east);
                        \draw[thick] (path picture bounding box.north west) -- (path picture bounding box.south east);
                    }
                }
            ]            
                \node[circle,draw,thick,minimum size=0.75cm] (u) at (-1,8) {}; \node[] (ul) at (-1.75,8) {\Huge $u$};
                \node[circle,draw,thick,minimum size=0.75cm] (v) at (-1,4) {}; \node[] (vl) at (-1.75,4) {\Huge $v$};
                \node[circle,draw] (1) at (-1,7) {};
                \node[circle,draw] (2) at (-1,6) {};
                \node[circle,draw] (3) at (-1,5) {};
        
                \draw[black] (u) -- (1);
                \draw[black] (v) -- (3);
                \draw[black] (1) -- (2);
                \draw[black] (2) -- (3);
        
                \node[circle,draw] (21) at (0,6) {};
                \node[circle,draw] (22) at (1,6) {};
        
                \draw[black] (2) -- (21);
                \draw[black] (21) -- (22);
    
                \node[circle,draw,minimum size=0.75cm] (Cl1) at (2.5,4.5) {};
                \node[circle,draw,minimum size=0.75cm] (Cl2) at (2.5,6) {};
                \node[circle,draw,minimum size=0.75cm] (Cl3) at (2.5,7.5) {}; \node[] (Cl) at (2.5,8.5) {\Huge $C^\ell_k$};
                \node[circle,draw,minimum size=0.75cm] (Cr1) at (5.5,4.5) {};
                \node[circle,draw,minimum size=0.75cm] (Cr2) at (5.5,6) {};
                \node[circle,draw,minimum size=0.75cm] (Cr3) at (5.5,7.5) {}; \node[] (Cr) at (5.5,8.5) {\Huge $C^r_k$};
    
                \draw[thick] (Cl1) -- (Cl2);
                \draw[thick,bend right] (Cl1) edge (Cl3);
                \draw[thick] (Cl2) -- (Cl3);
                \draw[thick] (Cr1) -- (Cr2);
                \draw[thick,bend left] (Cr1) edge (Cr3);
                \draw[thick] (Cr2) -- (Cr3);
                \draw[thick] (Cl1) -- (Cr1);
                \draw[thick] (Cl2) -- (Cr2);
                \draw[thick] (Cl3) -- (Cr3);
                \node[cross,minimum size=0.5cm] (cross1) at (4,4.5) {};
                \node[cross,minimum size=0.5cm] (cross2) at (4,6) {};
                \node[cross,minimum size=0.5cm] (cross3) at (4,7.5) {};
                
                \draw[draw=black] (11,3.5) rectangle ++(1,5); \node[] (Il) at (12.5,8) {\Huge $I_k$};
                \node[circle,draw,thick,minimum size=0.75cm] (v) at (11.5,4) {};
                \node[circle,draw,thick,minimum size=0.75cm] (v) at (11.5,5) {};
                \node[] (v) at (11.5,6) {\vdots};
                \node[circle,draw,thick,minimum size=0.75cm] (v) at (11.5,7) {};
                \node[circle,draw,thick,minimum size=0.75cm] (v) at (11.5,8) {};
                \node[] (Il1) at (11,4.75) {};
                \node[] (Il2) at (11,6) {};
                \node[] (Il3) at (11,7.25) {};
                    
                \draw[thick] (Cl1) -- (22);
                \draw[thick] (Cl2) -- (22);
                \draw[thick] (Cl3) -- (22);
    
                \node[circle,draw] (C21) at (7,6) {};
                \node[circle,draw] (C22) at (8,6) {};
                \node[circle,draw] (C23) at (9,6) {};
    
                \draw[black] (C21) -- (C22);
                \draw[black] (C22) -- (C23);
    
                \draw[thick] (Cr1) -- (C21);
                \draw[thick] (Cr2) -- (C21);
                \draw[thick] (Cr3) -- (C21);
                    
                \draw[lightgray,thick] (C23) -- (Il1);
                \draw[lightgray,thick] (C23) -- (Il2);
                \draw[lightgray,thick] (C23) -- (Il3);
    
                \draw[decoration={brace,raise=70pt, amplitude=10pt},decorate] ([xshift=-2pt]21.west) -- node[above=80pt] {\huge $\frac{\lfloor\delta\rfloor}{2}$} ([xshift=2pt]22.east);
    
                \draw[decoration={brace, amplitude=10pt,raise=20pt},decorate] (-1,4.75) -- node[left=35pt] {\huge $\lfloor\delta\rfloor-1$} (-1,7.25);
    
                \draw[decoration={brace,raise=70pt, amplitude=10pt},decorate] ([xshift=-2pt]C21.west) -- node[above=80pt] {\huge $\lfloor\delta\rfloor-1$} ([xshift=2pt]C23.east);
            \end{tikzpicture}
        }
        \caption{Gadget for even distances $\lfloor\delta\rfloor$ (here: $\lfloor\delta\rfloor=4$).}
        \label{fig:distanceDefensiveDomination:unattackableVertex:even}
    \end{subfigure}
\end{figure}

    By introduction of the gadgets, the remaining attackable vertices are the original vertices of $G$ with the same attack size of $k$.
    Furthermore, the distance in $G'$ between two neighbors $u,v \in V$ in $G$ is $\lfloor\delta\rfloor$.
    Hence an $\ell$-defense $D$ in $G$ can be easily transformed into an $\ell'$-defense $D'$ in $G'$ by extending $D$ by placing $k$ additional defender tokens in each of the gadgets.
    On the other hand, we transform every $\ell'$-defense $D'$ in $G'$ into an $\ell$-defense $D$ in $G$.
    Since each of the gadgets needs to include $k$ defenders and then all newly introduced vertices are defended, this leaves the original vertices from $G$.
    We argue that the $\ell$ residual defenders form a defense $D$ in $G$.
    For this let $t$ be a defender token that is placed on the subdivided edge between $u,v \in V(G)$.
    We distinguish between the following two cases:
    First, if $t$ has a distance of at most $\tilde\delta$ to $u$ (respectively $v$), then $t$ can be placed onto $u$ (respectively $v$).
    Since the distances between all original vertices is $\lfloor\delta\rfloor$, exactly the original neighborhood of $u$ (respectively $v$) is reached.
    Second, if $t$ has a distance greater than $\tilde\delta$ to $u$ and $v$, then no vertices in the original neighborhoods of $u$ and $v$ except $u$ and $v$ themselves can be defended by $t$.
    Thus $t$ can be placed on $u$ or $v$ instead.
    Because every inclusion minimal $k$-attack in $G'$ attacks only vertices from $V(G)$ (all other lie in a gadget and are defended), a defense $D$ based on $D'$ as above is an $\ell$-defense that counters every $k$-attack in $G$.
    
    In consequence, {\sc Defensive $\delta$-Covering With Vertex Attack} is $\Sigma^p_2$-hard for $\delta$ with $\tilde\delta \in [0, \frac{1}{2})$.
\end{proof}

\begin{corollary}\label{cor:defensiveDeltaCoverVertexAttackMultisetDefenseHardness}
    {\sc Defensive $\delta$-Covering with Vertex Attack and Multiset Defense} is $\Sigma^P_2$-hard for any $\delta \geq 1$.
\end{corollary}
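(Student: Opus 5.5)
The plan is to reuse the three reductions from the proof of \Cref{lem:defensiveDeltaCoverVertexAttackHardness}, now starting from {\sc Defensive Dominating Set with Multiset Defense} (resp.\ its clique-interdiction source), which is $\Sigma^P_2$-complete by \cite{DBLP:conf/mfcs/ChaplickGK25}. In each case I will show that allowing the defense in $G'$ to be a multiset gives no extra power beyond a multiset defense in the source instance. The key tool is \Cref{corollary:defensiveDeltaCoverVertexAttackNeatNice}.2. By it, we may assume a multiset defense in $G'$ is \emph{nice}: every token sits on a vertex if $\tilde\delta<\half$, and on an edge midpoint if $\tilde\delta\ge\half$. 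So the only new freedom is stacking several tokens on the same vertex or midpoint.

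\textbf{Case $\tilde\delta<\half$.} A nice multiset defense is a multiset of vertices, and each token defends exactly $N_{\lfloor\delta\rfloor}[v]$. The problem is therefore literally {\sc Distance-$\lfloor\delta\rfloor$ Defensive Domination with Multiset Defense}, which is $\Sigma^P_2$-hard by \Cref{cor:distanceDefensiveDominationMultisetDefenseCompleteness}.

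\textbf{Case $\lfloor\delta\rfloor\ge 2$, $\tilde\delta\ge\half$.} I reuse the clique/independent-set gadget and argue locally, as in \Cref{cor:distanceDefensiveDominationMultisetDefenseCompleteness}. The attack on $I_k$ still forces $k$ tokens into the gadget at distance at most $\delta$ from $I_k$. These tokens cannot reach $u$ or $v$. The $k$ matching edges between $C^\ell_k$ and $C^r_k$ already offer $k$ distinct, dominant midpoints, so stacking inside a gadget never helps. For the remaining tokens, the same rounding argument moves each one onto an original vertex $u$ or $v$; stacked tokens are simply moved together. The result is a multiset defense of $G$ of size $\ell$. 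Conversely, a multiset defense of $G$ lifts by adding the gadget tokens.

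\textbf{Case $\delta\in[\frac32,2)$.} I would redo the clique-interdiction reduction with the multiset version of the source problem. The counting in the backward direction uses only the quantities $|B^\le(X,\delta)\cap D|$, so it survives when $D$ is a multiset. The exchange arguments that put tokens on $E(Q_j)$, $E_{I_v,I'_v}$, $e'_v$ and $e''_v$ also still apply, because each of these edge sets has at least as many distinct edges as tokens required; stacking on one midpoint is dominated by spreading over distinct edges with the same reach. The one place needing care is the $s$ leftover tokens. Stacking two of them on some $e''_v$ is no better than putting one on $e''_v$ and one on $e''_w$ for another $v\ne w$, so this still yields a blocker of size at most $s$.

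\textbf{Main obstacle.} I expect this last case to be the hardest, since the original argument is delicate. Should a direct adaptation fail, my fallback is to start from {\sc Minimum Cardinality Clique Interdiction} unchanged. Hardness then does not depend on the source problem's multiset variant: I only need to check that in $G'$ every multiset defense can be made a set without loss, using the abundance of parallel equivalent edges.
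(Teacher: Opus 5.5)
Your proposal is correct and takes essentially the same route as the paper, which simply notes that the three reductions from \Cref{lem:defensiveDeltaCoverVertexAttackHardness} never benefit from placing several tokens on one vertex or edge, and therefore still work for multiset defenses. Your version makes that remark explicit: you restrict to nice defenses and take the multiset-defense sources from \Cref{cor:distanceDefensiveDominationMultisetDefenseCompleteness} and \cite{DBLP:conf/mfcs/ChaplickGK25}. One caveat: for $\delta\in[\frac32,2)$ there is no multiset version of clique interdiction to start from, so your ``fallback'' is in fact the argument you need there.
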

\begin{proof}
    All reductions to {\sc Defensive $\delta$-Covering with Vertex Attack} use at most one defender token per vertex or edge in their correctness proofs.
    Thus, multiset defenses are not substantially different to non-multiset defenses and the reductions work nonetheless. 
\end{proof}

\subsection{Multiset Attack}

We further establish complexity results concerning {\sc Defensive $\delta$-Covering With Vertex Attack And Multiset Attack}.
Since this problem is related to the $\sf NP$-complete problem {\sc $k$-tuple Dominating Set}, this problem admits different complexity behavior than its non-multiset attack variant from above.
We distinguish again between the three cases and obtain the following results:
If $\delta < \frac{1}{2}$, the problem is trivially solvable and thus in $\sf P$;
if $\frac{1}{2} \leq \delta < 1$, the problem is closely related to $b$-edge cover and thus in $\sf P$;
if $\delta \geq 1$, the problem is similar to distance-$d$ $k$-tuple dominating set and thus $\sf NP$-complete.
We start with $\delta < \frac{1}{2}$.

\begin{lemma}\label{lemma:defensiveDeltaCoverVertexAttackMultisetSmallerHalfInP}
    {\sc Defensive $\delta$-Covering With Vertex Attack And Multiset Attack} is in $\sf P$ for any $\delta < \frac{1}{2}$ regardless whether the defense may be a multiset or not.
\end{lemma}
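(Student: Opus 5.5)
Since $\delta < \half$, every point $p \in P(G)$ is at distance at most $\delta$ from at most one vertex: two distinct vertices are at distance at least $1 > 2\delta$, so by the triangle inequality no point is within $\delta$ of both. Hence each defender token reaches at most one vertex, namely its nearest vertex if that vertex is within distance $\delta$. The plan is to show that the instance is a yes-instance if and only if $\ell \geq k|V(G)|$. Both directions are elementary, so the decision reduces to one polynomial-time comparison.

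\emph{Necessity.} Fix any $\ell$-defense $D$ countering every multiset $k$-attack, and any vertex $v \in V(G)$. Consider the attack that places all $k$ tokens on $v$. It is countered only if at least $k$ distinct tokens of $D$ lie in $\cball{v}{\delta}$. These balls are pairwise disjoint for distinct vertices, so summing over all vertices gives $|D| \geq k|V(G)|$.

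\emph{Sufficiency.} Suppose $\ell \geq k|V(G)|$. If the defense may be a multiset, place $k$ copies of each vertex $v$. If the defense must be a set, place $k$ distinct points $p(v,w,\lambda_i)$ with $0=\lambda_1<\dots<\lambda_k\leq\delta$ on some edge incident to $v$. The degenerate cases are handled separately: if $\delta=0$, or if $v$ is isolated and $k>1$, only the point $v$ itself is available. In that situation the set-defense instance is a yes-instance only if $k \le 1$ or $V(G)=\emptyset$, and this is again a polynomial check.

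Now let $A$ be any multiset $k$-attack. Each vertex is attacked at most $k$ times and has $k$ private defenders. Matching the attackers on $v$ to the defenders assigned to $v$ covers all of $A$, so $D$ counters $A$. Extra tokens beyond $k|V(G)|$ are harmless.

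The only step needing care is the non-multiset placement: it must be argued that enough distinct points within distance $\delta$ of each vertex exist and that they stay exclusive to that vertex, including the corner cases $\delta = 0$ and isolated vertices. Beyond that, the algorithm just compares $\ell$ with $k|V(G)|$, plus the trivial corner-case checks, in polynomial time.
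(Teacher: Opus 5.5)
Your argument is correct and is essentially the paper's. A $k$-fold attack on a single vertex forces $k$ tokens into each of the pairwise disjoint balls $\cball{v}{\delta}$, while $k$ private tokens per vertex suffice, so the decision reduces to checking $\ell \geq k|V(G)|$. The paper gets the non-multiset placement from its corollary on neat defenses rather than constructing it directly; your explicit handling of $\delta = 0$ and of isolated vertices when $k > 1$ covers corner cases that the paper's proof glosses over.
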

\begin{proof}
    Because $\delta < \frac{1}{2}$, each token of the defender is able to defend only its nearest vertex.
    First, we consider the case where the defense may be a multiset.
    Since the attack may consist of a multiset, the attacker is able to attack each vertex $k$ times.
    If a vertex is defended by less than $k$ tokens, the attacker will attack this very vertex.
    On the other hand, $k$ defender tokens are enough for each vertex.
    Thus, the defender needs to place exactly $k$ tokens on each vertex.
    If the defense may not be a multiset, there is a neat $(k\cdot|V(G)|)$-defense by \Cref{corollary:defensiveDeltaCoverVertexAttackNeatNiceMultiset}.
\end{proof}

\begin{theorem}\label{thm:defensiveDeltaCoverVertexAttackMultisetInP}
    {\sc Defensive $\delta$-Covering with Vertex Attack and Multiset Attack} is in $\sf P$ for any $\frac{1}{2} \leq \delta < 1$, regardless whether the defense may be a multiset or not.
\end{theorem}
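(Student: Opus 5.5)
We reduce the problem to {\sc Capacitated $b$-Edge Cover}, which is solvable in polynomial time \cite{schrijver2003combinatorial}. Since $\tilde\delta = \delta \geq \half$, \Cref{lemma:defensiveDeltaCoverVertexAttackFractionalPartLargerHalf} and \Cref{corollary:defensiveDeltaCoverVertexAttackNeatNice}.2 let us assume for a multiset defense that every token sits at the midpoint $p(u,v,\half)$ of an edge. Because $\delta < 1$, such a token reaches exactly the two vertices $u$ and $v$, and no token reaches any other vertex. A defense is therefore described by a multiplicity $x_e \in \N_0$ for each edge $e$, with total size $\sum_e x_e$.

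The first key step is to characterize when such a defense counters every multiset $k$-attack. We show this holds if and only if $\sum_{e \ni v} \min(x_e, k) \geq k$ for every vertex $v$; it is convenient to cap $x_e \leq k$ from the start, since more than $k$ tokens on one edge are never needed. For necessity, consider the attack placing all $k$ tokens on $v$. For sufficiency, we imitate the Hall argument of \Cref{lemma:kTupleDominatingSetEquivalence}. Take an arbitrary multiset attack $A$ of size $k$ and a subset $A' \subseteq A$. Each attacked copy has at least $k$ incident defender tokens, so $|N(A')| \geq k \geq |A'|$, and Hall's theorem gives the matching. The degenerate case of an isolated vertex leads to a trivial no-instance. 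This reduces the problem to minimizing $\sum_e x_e$ subject to $0 \leq x_e \leq k$ and $\sum_{e \ni v} x_e \geq k$ for all $v$. That is exactly a capacitated $b$-edge cover with $b \equiv k$ and capacity $k$, and it can be solved in polynomial time; strongly polynomial time is needed since $k$ is given in binary. We then compare the optimum with $\ell$.

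For the non-multiset defense, we invoke \Cref{corollary:defensiveDeltaCoverVertexAttackNeatNiceMultiset} for $\tilde\delta > \half$. The $x_e$ tokens on an edge are spread over distinct points $p(u,v,\lambda)$ with $1-\delta \leq \lambda \leq \delta$, and by \Cref{lemma:defensiveDeltaCoverVertexAttackFractionalPartLargerHalf}.2 these points cover the same vertices as the midpoint. The case $\delta = \half$ is the main obstacle, because only the midpoint itself covers both endpoints. Any other point on the edge covers just one endpoint, which behaves like a token placed on a vertex, and tokens placed on vertices may not repeat either. Our plan is to model a distinct token near $u$ on edge $\{u,v\}$ as a \emph{half-edge} that covers only $u$. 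Then each edge carries at most one midpoint token, while arbitrarily many distinct near-endpoint tokens cover a single endpoint; a token on the vertex $u$ itself also covers only $u$. The resulting condition becomes $y_e + z_v \geq k$-style covering constraints, where $y_e \in \{0,1\}$ are capacitated edge variables and the unbounded single-vertex loops are allowed. This is again a capacitated $b$-edge cover, now on the graph augmented with loops (half-edges) at every vertex of positive degree, and it is solved with the same algorithm. Sufficiency via Hall goes through unchanged, since each token still has a well-defined set of at most two covered vertices.
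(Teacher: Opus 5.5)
Your argument is correct, and its skeleton matches the paper's:
\begin{itemize}
\item restrict to midpoint (nice) defenses;
\item observe that countering all multiset $k$-attacks means covering every vertex $k$ times;
\item solve a $b$-edge cover with $b\equiv k$;
\item transfer to set defenses for $\half<\delta<1$ via \Cref{corollary:defensiveDeltaCoverVertexAttackNeatNiceMultiset};
\item handle $\delta=\half$ with a capacitated $b$-edge cover.
\end{itemize}

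The genuine difference is how you encode $\delta=\half$. You model the available tokens exactly: a capacity-$1$ variable per edge for its midpoint, and an unbounded half-edge variable per non-isolated vertex for the tokens in $\oball{u}{\half}$.

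The paper first relies on an exchange argument it leaves implicit. A singleton token at $u$ can be swapped, at no cost, for the midpoint of an unused incident edge. Hence singleton tokens are only needed at vertices of degree less than $k$, and all edges at those vertices may be assumed to carry a midpoint token. The paper then treats maximum degree at most $k$ directly. Otherwise it attaches $k-\deg(v)$ demand-$1$ pendant vertices. This keeps every capacity equal to $1$, and keeps the graph polynomial because $k$ is then smaller than the maximum degree.

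Your route skips the exchange argument and is easier to verify. The price is large capacities, so you need an algorithm polynomial in $\log k$; weakly polynomial suffices, and strong polynomiality is not actually required. You also spell out the Hall argument for sufficiency, which the paper only uses implicitly.

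Two small fixes are needed.
\begin{itemize}
\item A half-edge must add $1$ to the degree of its endpoint. Realize it as an edge to a fresh vertex of demand $0$ rather than as a loop, since standard $b$-edge-cover formulations count loops twice.
\item An isolated vertex is not a trivial no-instance in general. With a multiset defense you simply put $k$ tokens on it. With a set defense it is infeasible only when $k\ge 2$; for $k=1$ it just costs one token.
\end{itemize}
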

\begin{proof}
    We first consider the case where the defense may be a multiset.
    By \Cref{corollary:defensiveDeltaCoverVertexAttackNeatNice} we may restrict ourselves to considering nice $\ell$-defenses.
    Thus any defender token is always placed on an edge.
    Since the attacker can attack every vertex $k$ times, every vertex has to be defended $k$ times, as otherwise the attacker can win by attacking a single vertex that is defended at most $k-1$ times.
    Therefore {\sc Defensive $\delta$-Covering with Vertex Attack and Multiset Attack and Multset Defense} is equivalent to choosing a multiset of edges $E^*$ (of size at most $\ell$), such that every vertex in $V(G)$ is incident to at least $k$ edges in $E^*$.
    This problem is exactly {\sc $b$-edge cover} as defined in \cite{schrijver2003combinatorial} when optimizing for the size of $E^*$ and setting all entries in the vector $b$ to $k$.
    Since {\sc $b$-edge cover} is solveable in polynomial time \cite{schrijver2003combinatorial}, {\sc Defensive $\delta$-Covering with Vertex Attack and Multiset Attack and Multiset Defense} can be solved in polynomial time as well, for any $\frac{1}{2} \leq \delta < 1$.

    Now we consider the case where the defense may not be a multiset.
    For $\half < \delta < 1$, \Cref{corollary:defensiveDeltaCoverVertexAttackNeatNiceMultiset} yields that we may use the above approach as well to find the optimal solution.
    However, the case $\delta = \half$ requires a slightly more careful analysis.
    Since $\delta < 1$ and by \Cref{lemma:defensiveDeltaCoverVertexAttackFractionalPartZeroOrHalf}, we have $B^\leq(p,\frac{1}{2}) \cap V(G) = \{v\}$ for $p \in B^<(v,\half)$.
    
    If the maximum degree of the graph $G$ is at most $k$, the optimal solution is trivial:
    We place one defender token on the midpoint $p(u,v,\half)$ of each edge $\{u,v\}$, and place a number defender tokens in $B^<(u,\half)$ such that the number of defender tokens in $\cball{v}{\half}$ is exactly $k$ for every $u \in V$.
    
    Otherwise, we define the graph $G'$:
    We copy the graph $G$, but for every vertex $v$ with degree less than $k$, we connect $k - \text{deg}(v)$ otherwise isolated vertices to $v$. 
    Each vertex that also exists in $G$ is assigned a demand of $k$, while the added vertices are assigned a demand of $1$.
    Each edge is assigned a capacity of $1$.
    We then solve the problem {\sc Capacitated $b$-Edge Cover} on $G'$, which can be done in polynomial time \cite{schrijver2003combinatorial}.
    To obtain the optimal defense, we place one defender token on $p(u,v,\half)$ for each edge $\{u,v\}$ in the solution to the {\sc capacitated $b$-edge cover} instance, and additionally place defender tokens in $\oball{u}{\half}$ such that the number of defender tokens in $\cball{u}{\half}$ is exactly $k$ for every $u \in V$.
    To see that this is indeed optimal, we note that the any defender token can defend at most two vertices by being placed on the midpoint of an edge, since $\delta = \half$.
    For vertices with degree at most $k$ it is clear that there is an optimal solution that places a defender token on $p(u,v,\half)$ of every incident edge $\{u,v\}$ of $u$ and then additional tokens in $\oball{u}{\half}$ to defend $u$ with $k$ defender tokens.
    By solving the edge cover instance on the modified graph, we compute the lowest number of edges between high degree vertices necessary to cover the entire graph.
\end{proof}

\begin{figure}[!ht]
    \centering
    \scalebox{0.6}{
        \begin{tikzpicture}
            \node[circle,draw,minimum size=0.6cm,label=below:$v^*$] (v) at (0, 0) {};
            \node[] (vpath) at (-1, 0) {$\cdots$};
            \node[circle,draw,minimum size=0.6cm] (vpath2) at (-2, 0) {};
            \node[circle,draw,minimum size=0.6cm,label=below:$w^*$] (w) at (-3, 0) {};

            \draw[] (v) -- (vpath) -- (vpath2) -- (w);

            \draw[decoration={brace,raise=15pt, amplitude=10pt},decorate] ([xshift=-2pt]w.west) -- node[above=25pt] {$2\lfloor\delta\rfloor-1$ vertices} ([xshift=2pt]vpath.east);

            \foreach \x in {1,2,3} {
                \node[circle,draw,minimum size=0.6cm] (s\x) at (2, \x-2) {};
                \draw[] (v) -- (s\x);
            }

            \foreach \x in {1,2,3,4,5} {
                \node[circle,draw,minimum size=0.6cm] (u\x) at (4, 0.7*\x-2.1) {};
                \node[] (upath\x) at (5, 0.7*\x-2.1) {$\cdots$};
                \node[circle,draw,minimum size=0.6cm] (wu\x) at (6, 0.7*\x-2.1) {};
                \draw[] (u\x) -- (upath\x) -- (wu\x);
            }

            \draw[] (s1) -- (u1) (s1) -- (u2) (s1) -- (u4);
            \draw[] (s2) -- (u3) (s2) -- (u5);
            \draw[] (s3) -- (u1) (s3) -- (u4) (s3) -- (u5);

            \draw[decoration={brace,raise=15pt, amplitude=10pt},decorate] ([xshift=-2pt]upath5.west) -- node[above=25pt] {$\lfloor\delta\rfloor-1$ vertices} ([xshift=2pt]wu5.east);
            \draw[draw=black] (5.6,-1.8) rectangle ++(0.8,3.6); \node[] (Il) at (6.9,1) {\huge $W$};
            \draw[draw=black] (3.6,-1.8) rectangle ++(0.8,3.6); \node[] (Il) at (3.3,1.8) {\huge $U$};
            \draw[draw=black] (1.6,-1.4) rectangle ++(0.8,2.8); \node[] (Il) at (1.3,1.3) {\huge $\mathcal S$};


        \end{tikzpicture}
    }
    \caption{Sketch of the reduction graph from \Cref{thm:defensiveDeltaCoverVertexAttackMultisetBothNPComplete}.}
    \label{fig:defensiveDeltaCoverVertexAttackNPHard}
\end{figure}

\begin{theorem}\label{thm:defensiveDeltaCoverVertexAttackMultisetBothNPComplete}
    {\sc Defensive $\delta$-Covering with Vertex Attack and Multiset Attack} is $\sf NP$-complete for any $\delta \geq 1$, regardless whether the defense may be a multiset or not.
\end{theorem}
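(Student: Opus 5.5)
Since the attack may be a multiset, the attacker can put all $k$ tokens on a single vertex. By the Hall-type argument of \Cref{lemma:kTupleDominatingSetEquivalence}, a defense $D$ therefore counters every $k$-attack if and only if every vertex $v$ has at least $k$ tokens of $D$ within distance $\delta$. One direction is forced by the attack that places all $k$ tokens on $v$. For the other direction, each attacked token has at least $k$ neighbours in the bipartite graph, so Hall's condition holds for every subset of an attack of size $k$. The problem thus becomes a continuous $k$-tuple $\delta$-covering of the vertex set, and I would treat containment and hardness separately.

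\textbf{Containment in \NP.} By \Cref{corollary:defensiveDeltaCoverVertexAttackNeatNice}, we may restrict to neat defenses, or to nice ones when the defense may be a multiset. A neat token is described by an edge and an index $i\in\{0,\dots,\ell-1\}$, where only $\ell$ distinct offsets per edge occur. Each vertex needs at most $k$ tokens, so $k|V(G)|$ tokens always suffice and we may assume $\ell\le k|V(G)|$. Hence a certificate lists, for each (edge, offset) pair, a multiplicity of at most $k$, which has polynomial size in the binary encoding as in \Cref{lemma:kTupleDominatingSetNPComplete}. The verifier computes, for every vertex $v$, the number of tokens within distance $\delta$ of $v$ by shortest-path computations and checks that this number is at least $k$. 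If $k$ is given in binary and exceeds $|V(G)|$, the certificate stores multiplicities rather than individual tokens, so this remains polynomial.

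\textbf{Hardness.} I would reduce from {\sc Distance-$d$ $k$-tuple Dominating Set} (\Cref{lemma:kTupleDominatingSetNPComplete}), or directly from {\sc Distance-$d$ Dominating Set} with $k=1$ if that is cleaner, with $d=\lfloor\delta\rfloor$. When $\tilde\delta<\half$, \Cref{lemma:defensiveDeltaCoverVertexAttackFractionalPartSmallerHalf} shows that tokens may be moved onto vertices without losing covered vertices, so the problem coincides with {\sc Distance-$d$ $k$-tuple Dominating Set} on the same graph. For $\tilde\delta\ge\half$, a token at the midpoint of an edge reaches distance $\lfloor\delta\rfloor$ from both endpoints, so plain equivalence fails. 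Here I would use the construction sketched in \Cref{fig:defensiveDeltaCoverVertexAttackNPHard}, which is a reduction from {\sc Set Cover}:
\begin{itemize}
\item a vertex $v^*$ adjacent to every set vertex in $\mathcal S$;
\item set vertices adjacent to the element vertices $U$ they contain;
\item a pendant path of $\lfloor\delta\rfloor-1$ vertices attached to each element, ending in $W$;
\item a path of $2\lfloor\delta\rfloor-1$ vertices hanging off $v^*$, ending in $w^*$.
\end{itemize}
The endpoints in $W$ force a token within distance $\delta$ of each $w\in W$. Such a token can reach an element vertex and, through a set vertex, other elements only if it sits near a set vertex. The long path at $v^*$ ensures that $v^*$ and the path are covered by a separate forced token that cannot help elsewhere. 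With $k=1$, or with the gadget replicated $k$ times, a budget of $\ell=1+$(set cover size) should be tight.

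\textbf{Main obstacle.} The main difficulty is the discretization argument for the backward direction when $\tilde\delta\ge\half$. I must show that any token serving some $w\in W$ can be moved to a set vertex, or to the nearest midpoint towards $\mathcal S$, without uncovering anything. Lemma \ref{lemma:defensiveDeltaCoverVertexAttackFractionalPartLargerHalf} is the tool for this, combined with a careful count of path lengths so that a token covers the endpoints $w$ of exactly the elements in one set. I would verify these distances first for $\delta\in[1,2)$ and then for general $\lfloor\delta\rfloor$. Finally, since the reduction uses at most one token per location, it also works when the defense may be a multiset, as in \Cref{cor:defensiveDeltaCoverVertexAttackMultisetDefenseHardness}.
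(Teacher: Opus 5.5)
\textbf{Gap: the hardness reduction for $\tilde\delta \ge \half$.} Your route is essentially the paper's: a Hall-type reduction to ``$k$ tokens within distance $\delta$ of every vertex'', a discretized certificate, and the {\sc Set Cover} graph of \Cref{fig:defensiveDeltaCoverVertexAttackNPHard} with $k=1$, $\ell=x+1$. For $\tilde\delta<\half$, your identity reduction from {\sc Distance-$\lfloor\delta\rfloor$ Dominating Set} with $k=1$ is a correct and simpler alternative, via part~1 of \Cref{lemma:defensiveDeltaCoverVertexAttackFractionalPartSmallerHalf}.

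The genuine gap is the backward direction for $\tilde\delta\ge\half$. You leave it open, and it actually fails for $\delta\in[\frac{3}{2},2)$. There $\lfloor\delta\rfloor=1$, so $w_u=v_u$ and $w^*$ is adjacent to $v^*$. The single point $p(v^*,v_S,\half)$ is then within distance $\frac{3}{2}\le\delta$ of $w^*$, of every set vertex, and of every $v_u$ with $u\in S$. So the token you describe as ``separate and unable to help elsewhere'' doubles as a set token, and every {\sc Set Cover} instance with optimum $x+1$ is mapped to a yes-instance. In this range you must either attach a path with two vertices to $v^*$, or keep the graph and set $\ell=x$. With the longer path, $d(w^*,v_u)=4>2\delta$, while the token at distance $\delta$ from $w^*$ is within $3-\delta\le\delta$ of every $v_S$. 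The paper's write-up has the same blind spot: its separation inequality is printed in the wrong direction, and the intended $3\lfloor\delta\rfloor>2\delta$ fails exactly on $[\frac{3}{2},2)$.

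For $\lfloor\delta\rfloor\ge 2$, the step you flag as the main obstacle needs no discretization. \Cref{lemma:defensiveDeltaCoverVertexAttackFractionalPartLargerHalf} is the wrong tool anyway, since it moves tokens to midpoints rather than to $v_S$. What is needed is the paper's direct distance count:
\begin{itemize}
\item Take a point $p$ on $\{v_S,v_u\}$ or $\{v^*,v_S\}$, or on the $v_u$--$w_u$ path for some $u\in S$. Every $w_{u'}$ with $u'\notin S$ is at distance at least $\lfloor\delta\rfloor+1>\delta$ from $p$, hence $\cball{p}{\delta}\cap W\subseteq\cball{v_S}{\delta}\cap W$.
\item Points on the $v^*$--$w^*$ path reach no vertex of $W$.
\item $d(w^*,w_u)=3\lfloor\delta\rfloor>2\delta$, so the token covering $w^*$ is distinct from all tokens covering $W$.
\end{itemize}

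\textbf{Smaller gap: certificate size in the containment proof.} Your certificate lists a multiplicity for every (edge, offset) pair, which is $|E(G)|\cdot\ell$ entries. When $k$ is encoded in binary, $\ell$ (up to $k|V(G)|$) is exponential. A defense that must be a set occupies $\ell$ distinct positions, so recording multiplicities does not make the certificate polynomial. The fix is the paper's grouping. By \Cref{lemma:defensiveDeltaCoverVertexAttackFractionalPartSmallerHalf,lemma:defensiveDeltaCoverVertexAttackFractionalPartLargerHalf,lemma:defensiveDeltaCoverVertexAttackFractionalPartZeroOrHalf}, tokens can be assumed to lie in a constant number of classes per edge whose members cover the same vertex set; for instance $\{u\}$, the open interior and $\{v\}$ when $\tilde\delta=0$. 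One binary count per class then suffices, and the verifier simply sums these counts around each vertex.
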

\begin{proof}
    To show containment in \NP, we need to deal with the fact, that the number of attackers and defenders may be exponential in the length of the input.
    By \Cref{lemma:defensiveDeltaCoverVertexAttackFractionalPartSmallerHalf,lemma:defensiveDeltaCoverVertexAttackFractionalPartLargerHalf,lemma:defensiveDeltaCoverVertexAttackFractionalPartZeroOrHalf}, the exact locations of the defender tokens do not matter.
    Instead we group the defender tokens into the following five groups that all can defend the same set of vertices.
    For $\tilde\delta < \frac{1}{2}$, we encode the $\ell$-defense $X$ as a multiset of vertices and, for $\tilde\delta > \frac{1}{2}$, we encode the $\ell$-defense $X$ as a multiset of edges.
    By \Cref{corollary:defensiveDeltaCoverVertexAttackNeatNice}.1 and \Cref{lemma:defensiveDeltaCoverVertexAttackFractionalPartSmallerHalf} or respectively \Cref{lemma:defensiveDeltaCoverVertexAttackFractionalPartLargerHalf}, we can obtain a neat non-multiset $\ell$-defense such that the same vertices are defended in both cases.
    In the cases of $\tilde\delta = 0$ and $\tilde\delta = \half$, it suffices to group the defender tokens as follows.
    For $\tilde\delta = 0$, we use \Cref{lemma:defensiveDeltaCoverVertexAttackFractionalPartSmallerHalf,lemma:defensiveDeltaCoverVertexAttackFractionalPartZeroOrHalf} and group them into the three sets $\{p(u,v,0)\}$, $\{p(u,v,\lambda) : \lambda \in (0,1)\}$, and $\{p(u,v,1)\}$ for each edge $\{u,v\}$.
    For $\tilde\delta = \frac{1}{2}$, we use \Cref{lemma:defensiveDeltaCoverVertexAttackFractionalPartLargerHalf,lemma:defensiveDeltaCoverVertexAttackFractionalPartZeroOrHalf} and group them into the three sets $\{p(u,v,\frac{1}{2})\}$, $\{p(u,v,\lambda) : \lambda \in [0,\frac{1}{2})\}$, and $\{p(u,v,\lambda) : \lambda \in (\frac{1}{2},1]\}$ for each edge $\{u,v\}$.
    Then, we can use \Cref{corollary:defensiveDeltaCoverVertexAttackNeatNice}.1 to obtain a neat non-multiset $\ell$-defense such that the same vertices are defended.
    Thus as a certificate we encode for each group of $X$, how many defender tokens are placed in the corresponding group, which is polynomial in the input size.

    To verify the certificate, we first compute for each $x \in X$ the set of vertices with distance at most $\delta$, which we denote with $V_x \subseteq V(G)$.
    Then we construct a directed graph $F$ with $V(F) = \{s, t\} \cup V(G) \cup X$ and $E(F)$ defined as follows:
    \begin{itemize}
        \item $s$ has an edge to each vertex in $X$ with capacity $\ell \cdot |V(G)|$
        \item each vertex $x \in X$ has an edge to each vertex in $V_x$ with capacity equal to the number of defender tokens on $x$
        \item each vertex $v \in V(G)$ has an edge to $t$ with capacity $k$.
    \end{itemize}
    Then we compute the maximum flow in $F$ and accept if its value is at least $k \cdot |V(G)|$.
    To see the correctness, note that a flow value of $k \cdot |V(G)|$ in $F$ means that there are at least $k$ defender tokens in $\cball{v}{\delta}$ for every $v \in V(G)$, ensuring that every possible attack can be defended.

    To show \NP-hardness, we give a reduction from \textsc{Set Cover}, which is very similar to the folklore reduction of \textsc{Set Cover} to \textsc{Dominating Set}.
    We first consider the case where the defense cannot be a multiset.
    Given an instance of \textsc{Set Cover} $(U, \mathcal S \subseteq 2^U, x)$ we construct an instance of {\sc Defensive $\delta$-Covering with Vertex Attack and Multiset Attack} $(G, k = 1, \ell = x+1)$.
    The graph $G$ is defined as follows:
    For each element of $U$ and $\mathcal S$ we create a vertex $v_u$ (resp. $v_S$) and connect each $v_u$ to $v_S$ if and only if $u \in S$.
    Further we add one vertex $v^*$ that is connected to each $v_S$.
    To each $v_u$ we attach a path with $\lfloor\delta\rfloor-1$ vertices, and we denote the leaf vertex of that path with $w_u$ (we say that $w_u = v_u$ if the path has length $0$).
    We denote the set of all vertices $w_u$ with $W$.
    Additionally we attach a path with $2\lfloor\delta\rfloor-1$ vertices to $v^*$, and denote the leaf vertex of that path with $w^*$.
    This can be done in polynomial time.
    An example for this construction can be seen in \Cref{fig:defensiveDeltaCoverVertexAttackNPHard}.

    For the correctness, assume the \textsc{Set Cover} instance has a solution $S'$ of size $x$.
    We place one defender token on the $w^*$-$v^*$-path with distance $\delta$ to $w^*$ (or on $v^*$, if $1 \leq \delta < \frac{3}{2}$).
    Since $2\lfloor\delta\rfloor-1 + 2 + \lfloor\delta\rfloor-1 < 2\delta$ for all $\delta \geq 1$, this defender does not $\delta$-cover any vertex in $W$.
    On the other hand, since $2\lfloor\delta\rfloor-1 + 1 \leq 2\delta$, this defender always $\delta$-covers all vertices $v_S$.
    Additionally, we place one defender on each vertex $v_S$ with $S \in S'$.
    Since the distance of $w_u$ to $v_S$ is $\lfloor\delta\rfloor-1 + 1 \leq \delta$ if $u \in S$, these defenders together $\delta$-cover all remaining vertices of $G$ (as $\bigcup_{S \in S'}S = U$).
    Thus there is a $(x+1)$-defense on $G$ that can counter any $1$-attack.

    For the other direction, suppose there is a $(x+1)$-defense $D$ on $G$ that can counter any $1$-attack.
    Then there must be one defender token in $\cball{w^*}{\delta}$, and by the above arguments, this defender token cannot $\delta$-cover any vertex in $W$.
    W.l.o.g.\ we may thus assume that this defender token is placed on the $w^*$-$v^*$-path with distance $\delta$ to $w^*$ (or on $v^*$, if $1 \leq \delta < \frac{3}{2}$).
    Further, for any point $p$ on an edge $\{v_S,v_u\}$ or somewhere on a $v_u$-$w_u$-path, it holds that $\cball{p}{\delta} \cap W \subseteq \cball{v_S}{\delta} \cap W$, since $\lfloor\delta\rfloor-1 + 2 > \delta$, $p$ only $\delta$-covers those vertices in $W$ that have distance at most $\lfloor\delta\rfloor$ from $v_S$, which are thus also $\delta$-covered by $v_S$.
    Since a set of points that together $\delta$-covers all vertices in $W$ also $\delta$-covers all other vertices in $G$ that are not $\delta$-covered by the defender token on the $w^*$-$v^*$-path, we may w.l.o.g.\ assume that all the remaining defender tokens are placed on vertices $v_S$.
    Then $S' = \{S \mid v_S \in D\}$ is a solution of size $x$ for the \textsc{Set Cover} instance, since a vertex $w_u$ has distance $\lfloor\delta\rfloor$ to a vertex $v_S$ if and only if $u \in S$, by construction of $G$.
\end{proof}
\section{Defensive $\delta$-Covering with Point Attack}
\label{sec:defensiveDeltaCoverWithPointAttack}

In this section we show that, surprisingly, {\sc Defensive $\delta$-Covering} remains \NP-complete even in most cases where {\sc Defensive Dominating Set} is $\Sigma^P_2$-hard.
The following two lemmas will be used as a tool to show that we can always find a defense that has a regular structure, allowing us to give an \NP-certificate with polynomial length.

\begin{lemma}[\cite{DBLP:journals/mp/HartmannLW22}]
    \label{lemma:deltaCoveringSubdivision}
    For any integer $\ell > 1$, and graph $G$, $\delta\text{-cover}(G) = (\ell \cdot \delta)\text{-cover}(G_\ell)$.
\end{lemma}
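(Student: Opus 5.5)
The plan is to show that the metric spaces $P(G)$ and $P(G_\ell)$ are the same space up to scaling distances by $\ell$. Once this is established, $\delta$-covers of $G$ and $(\ell\cdot\delta)$-covers of $G_\ell$ are in size-preserving bijection, so the two minimum sizes coincide.

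First I define the map $\varphi : P(G) \to P(G_\ell)$. An edge $\{u,v\}$ of $G$ is replaced in $G_\ell$ by a path $u = x_0, x_1, \dots, x_\ell = v$. For $\lambda \in [0,1]$, I send $p(u,v,\lambda)$ to the unique point on this path at distance $\ell\lambda$ from $u$ along the path. Concretely, with $j = \lfloor \ell\lambda \rfloor$ (and $j = \ell - 1$ if $\lambda = 1$), the image is $p(x_j, x_{j+1}, \ell\lambda - j)$. The map is well defined on vertices, because $p(u,v,0) = u$ is sent to $u$ independently of the chosen edge. It is consistent with $p(u,v,\lambda) = p(v,u,1-\lambda)$. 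It is a bijection, since every point of $G_\ell$ lies on exactly one subdivided path, or is an original vertex.

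The key step is to prove $dist_{G_\ell}(\varphi(p),\varphi(q)) = \ell \cdot dist_G(p,q)$ for all $p,q \in P(G)$. I would describe a shortest route between two points of the metric graph as follows:
\begin{itemize}
\item either it stays inside a single edge, with length $|\lambda - \lambda'|$;
\item or it leaves $p$ through an endpoint of its edge, travels along a vertex path in $G$, and enters the edge of $q$ through one of its endpoints.
\end{itemize}
In the second case, the distance equals a minimum over the four endpoint combinations of (partial edge length) $+$ (graph distance between the endpoints) $+$ (partial edge length).

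The same description applies in $G_\ell$. Every subdivision vertex $x_i$ with $0 < i < \ell$ has degree $2$, so a shortest route in $G_\ell$ that enters the interior of a subdivided path must either traverse it entirely or stop at a point on it. Hence routes in $G_\ell$ between original vertices correspond exactly to routes in $G$, and each length is multiplied by $\ell$. The partial segments are also scaled by $\ell$ by the definition of $\varphi$. Taking minima over the same finite set of candidate routes gives the claimed identity.

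With the isometry up to scaling, the rest is immediate. For $S \subseteq P(G)$, every $p$ satisfies $\min_{q \in S} dist(p,q) \le \delta$ if and only if every $\varphi(p)$ satisfies $\min_{q \in S} dist(\varphi(p),\varphi(q)) \le \ell\delta$. Since $\varphi$ is surjective, $S$ is a $\delta$-cover of $G$ if and only if $\varphi(S)$ is an $(\ell\cdot\delta)$-cover of $G_\ell$. Moreover $|\varphi(S)| = |S|$, and $\varphi^{-1}$ transports covers back. Therefore $\delta\text{-cover}(G) = (\ell\cdot\delta)\text{-cover}(G_\ell)$.

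The main obstacle I expect is the distance-preservation step. The delicate point is the claim that shortest routes in $G_\ell$ cannot exploit the subdivision vertices in any way that has no counterpart in $G$. I would handle this with the degree-$2$ argument and by treating separately the case where $p$ and $q$ lie on the same edge, where the direct segment competes with the route around through the endpoints.
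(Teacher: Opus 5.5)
Your argument is correct: $\varphi$ is a bijection $P(G)\to P(G_\ell)$ with $dist_{G_\ell}(\varphi(p),\varphi(q)) = \ell\cdot dist_G(p,q)$, because the degree-$2$ subdivision vertices force shortest routes to cross subdivided paths completely, so the four-endpoint formula simply scales by $\ell$. Hence $S\mapsto\varphi(S)$ is a size-preserving bijection between $\delta$-covers of $G$ and $(\ell\cdot\delta)$-covers of $G_\ell$. The paper gives no proof of its own and only cites \cite{DBLP:journals/mp/HartmannLW22}, where the lemma rests on exactly this observation that $P(G_\ell)$ is $P(G)$ with all distances multiplied by $\ell$. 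Your explicit correspondence is also what the paper actually relies on later, e.g.\ in \Cref{lemma:deltaCoveringSimple}: a $2$-simple cover of $G_b$ is pulled back via $\varphi^{-1}$, which sends position $\mu$ on $\{x_j,x_{j+1}\}$ to position $(j+\mu)/b$ on $\{u,v\}$, giving a $2b$-simple cover of $G$.
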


For a rational number $\delta$, we say that $\delta$ is $b$-simple, if it is equal to $a/b$ for integers $a, b$.
For a set $S \subseteq P(G)$, we say that $S$ is $b$-simple, if for every point $p = p(u_p, v_p, \lambda_p) \in S$ we have that $\lambda_p$ is $b$-simple.
The following lemma was already proven in \cite{DBLP:journals/mp/HartmannLW22}, but we give a more elaborate proof since we reuse the arguments later. 
When we only deal with defenses that are multisets, the discretization that we obtain from \Cref{lemma:deltaCoveringSimple} would already be sufficient, however for non-multiset defenses, we need to avoid moving different points onto the same point, which requires a more elaborate construction.
The arguments why the points can be moved however are very similar.

\begin{lemma}
    \label{lemma:deltaCoveringSimple}
    For any graph $G$ and rational $\delta = a/b$, $a,b \in \mathbb N$, there is an optimal $\delta$-cover $S^*$ that is $2b$-simple.
\end{lemma}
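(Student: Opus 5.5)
Our plan is a discretization argument: starting from an arbitrary optimal $\delta$-cover $S$, we move its points one at a time onto positions $p(u,v,\lambda)$ with $\lambda \in \frac{1}{2b}\mathbb Z$. Each move will keep $S$ a $\delta$-cover and will not change $|S|$.

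First we reduce to $b=1$ or $b=2$ scale. By \Cref{lemma:deltaCoveringSubdivision}, a $\delta$-cover of $G$ corresponds to an $a$-cover of $G_b$ of the same size. A point on an edge of $G_b$ at offset $\mu \in \frac12\mathbb Z$ corresponds to a point of $G$ at offset in $\frac{1}{2b}\mathbb Z$. It therefore suffices to show that for integer covering radius $a$ there is an optimal $a$-cover $S^*$ of a graph $H$ in which every point sits at a vertex or at an edge midpoint, i.e.\ is $2$-simple.

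For integer radius we fix an optimal cover $S$ and a point $p = p(u,v,\lambda) \in S$ with $0<\lambda<\half$. We then decide whether to move $p$ to $u$ or to $p(u,v,\half)$. The idea is to characterize which points of $H$ are covered only by $p$. Since $a$ is an integer, every vertex $w$ satisfies $dist(p,w) = \min(\lambda + d(u,w), 1-\lambda + d(v,w))$. Hence $p$ covers $w$ if and only if $d(u,w)\le a-1$ or $d(v,w)\le a-1$, and this set is independent of $\lambda \in (0,1)$.

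The main obstacle is the covering of edge interiors, where the covered region genuinely depends on $\lambda$. Along an edge $e=\{x,y\}$, the portion reached from $p$ via $u$ has length $a-\lambda-d(u,x)$ past $x$, and via $v$ has length $a-1+\lambda-d(v,x)$. We treat the set of points of $e$ that are not covered by $S\setminus\{p\}$ as a union of at most two intervals at the ends of $e$ (the middle portion is impossible since a single ball's trace on an edge is a union of end-segments). We then argue, edge by edge, that the uncovered remainder forces constraints of the form $\lambda \le c_e$ or $1-\lambda \le c'_e$, where the constants $c_e,c'_e$ are determined by the other points of $S$.

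Here we use induction. We process points in an order such that all other points already lie in $\frac12\mathbb Z$, or we argue by a potential function such as the number of non-simple points. This makes all $c_e$ half-integers, so the feasible set for $\lambda$ is an interval with half-integral endpoints containing the current $\lambda$; in particular it contains $0$ or $\half$. If all constraints involve the two one-sided options monotonically, the feasible set is an interval in $\lambda$, and any interval with half-integral endpoints containing a point of $(0,\half)$ contains $0$ or $\half$.

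We expect this ordering to be the delicate step. The constants come from distances to already-moved points, which are half-integral, and the remaining non-simple points must be handled simultaneously. We would first try a potential-based induction, moving the point whose feasible interval has an endpoint determined solely by simple points. Failing that, we would instead use an LP/vertex argument: the constraints are linear in the $\lambda$'s with coefficients $\pm1$ and half-integral right-hand sides, and they form a network-matrix-type system. An extreme point of the feasible polytope is then half-integral, and it yields the $2b$-simple cover $S^*$ after rescaling.
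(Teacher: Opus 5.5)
There is a genuine gap. Your reduction to an integer radius via \Cref{lemma:deltaCoveringSubdivision} is correct, and so is your observation that the vertices covered by $p(u,v,\lambda)$ do not depend on $\lambda\in(0,1)$. But the step that actually produces $S^*$ is never established.

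Relocating one point at a time to $u$ or to the midpoint fails, because the constraints on $\lambda_p$ come from other points that are themselves non-simple. Take $\delta=1$ and $p=p(u,v,0.2)$. Suppose an edge $\{u,y_1\}$ is covered by $p$'s reach $1-\lambda_p=0.8$ from $u$ together with a reach $0.2$ from $y_1$ of another interior point. Suppose also that an edge $\{v,y_2\}$ is covered by $p$'s reach $\lambda_p=0.2$ from $v$ together with a reach $0.85$ from $y_2$. With the other points fixed, $\lambda_p$ is pinned to $[0.15,0.2]$, so neither $u$ nor the midpoint is admissible. Moreover, if every point of $S$ is non-simple, no point has constraints involving only simple points, so your ordering has nowhere to start.

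The LP fallback is only sketched. The covering conditions are disjunctions (which ball covers which end of which edge), so there is no single polytope until a covering pattern is fixed. The claimed network-matrix structure is also unjustified, since rows like $\lambda_p+\lambda_q\ge 1$ are not of that type. Relatedly, the part of an edge left uncovered by $S\setminus\{p\}$ is a middle interval, not end intervals. The admissible values of $\lambda$ therefore form a union of half-lines, which in general is not an interval.

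The missing idea is that no choice or ordering is needed: move all interior points to their edge midpoints simultaneously, which is exactly the paper's construction. This works because of the following structure for integer $\delta$ and $\lambda\in(0,1)$. Consider the trace of $p(u,v,\lambda)$ on an edge $\{x,y\}\neq\{u,v\}$, seen from the end $x$:
\begin{itemize}
\item it is the whole edge if $\min\{d(u,x),d(v,x)\}\le\delta-2$;
\item it is empty if both distances are at least $\delta$;
\item otherwise it is a partial segment of length $1-\lambda$ (if $d(u,x)=\delta-1$), $\lambda$ (if $d(v,x)=\delta-1$), or the larger of the two (if both).
\end{itemize}
Which case occurs does not depend on $\lambda$. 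A point on a vertex reaches an edge either fully or at most at its endpoint.

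Hence an edge that no ball reaches fully is covered if and only if the largest partial reaches from $x$ and from $y$ sum to at least $1$; in particular, both must exist. At $\lambda=\half$ every partial reach equals $\half$. So after the simultaneous move each such edge is still covered, since $\half+\half=1$. Full reaches stay full, vertex coverage is unchanged, and a point on $\{u,v\}$ still covers $\{u,v\}$ since $\delta\ge 1$. This gives a $2$-simple optimal cover, and rescaling yields the $2b$-simple one.
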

\begin{proof}
    Let $b=1$, such that $\delta$ is integer.
    For some optimal $\delta$-cover $S$, we consider the set $S^*$ constructed as follows:
    \begin{itemize}
        \item for $p = p(u,v,0) \in S$, we let $p \in S^*$
        \item for $p = p(u,v,\lambda) \in S$ with $\lambda \in (0, 1/2]$, we let $p^* = p(u,v,1/2) \in S^*$
    \end{itemize}
    Clearly, $|S^*| \leq |S|$, and since we assume $S$ to be optimal, $|S^*| = |S|$ must hold.
    Further, $S^*$ is $2$-simple.

    To show that $S^*$ is a $\delta$-cover, consider some edge $e = \{x, y\} \in E(G)$.
    We consider three cases.
    \begin{enumerate}
        \item If there is a point $p_e \in S$ with $p_e = p(u,v,0)$ that $\delta$-covers the entire edge $e$, then by construction $p_e \in S^*$ and thus $e$ is $\delta$-covered by $S^*$.
        \item If there is a point $p_e \in S$ with $p_e = p(u,v,\lambda)$ and $\lambda \in (0, 1/2]$ that $\delta$-covers the entire edge $e$, let $p_e^* = p(u,v,1/2)$ be the point that is contained in $S^*$ instead of $p_e$. 
        We consider three sub-cases.
        \begin{enumerate}
            \item $p_e^*$ is closer to $x$ and $y$ than $p_e$. 
            Then trivially $p_e^*$ also $\delta$-covers $e$.
            \item $p_e^*$ is further from $x$ and $y$ than $p_e$.
            It is easy to see that the point on $e$ furthest away from $p_e$ is one of $\{p(x,y,0), p(x,y,1/2), p(x,y,1)\}$ and that $p_e^*$ is further away from all of them than $p_e$.
            If the furthest point from $p_e$ is $p(x,y,0)$ or $p(x,y,1)$, then the distance of that point to $p_e$ is of the form $k + \lambda$ for some integer $k$.
            Since $S$ is a $\delta$-cover and $\delta$ is integer, we have that $\delta \geq k + 1$.
            The distance of the furthest point to $p_e^*$ is $k + 1/2 < k + 1 \leq \delta$, the edge $e$ is still $\delta$-covered by $p_e^*$.
            On the other hand, if the furthest point from $p_e$ is $p(x,y,1/2)$, the distance of that point to $p_e$ is of the form $k + 1/2 + \lambda$ for some integer $k$.
            Again since $S$ is a $\delta$-cover and $\delta$ is integer, we have that $\delta \geq k+1 \geq k + 1/2 + \lambda$.
            Then the distance of $p_e^*$ to the furthest point is $k + 1/2 + 1/2 = k+1 \leq \delta$, and thus $e$ is $\delta$-covered by $p_e^*$.
            \item $p_e^*$ is further from $x$ and closer to $y$ than $p_e$.
            Then the two edges $\{u, v\}$ and $\{x, y\}$ are on a cycle of length at most $2\delta$, as otherwise one of the previous sub-cases would occur.
            Then any point on $\{u, v\}$ $\delta$-covers the entire edge $\{x, y\}$, and in particular also $p_e^*$.
        \end{enumerate}
        \item If there are two points $p_{e_1}, p_{e_2} \in S$ that together $\delta$-cover the entire edge $e$, let $p_{e_1}^* = p(u_1,v_1,1/2)$ and $p_{e_2}^* = p(u_2,v_2,1/2)$ be the points that are contained in $S^*$ instead of $p_{e_1}$ and $p_{e_2}$.
        Note that if there are more than two points $\delta$-covering parts of $e$, we can always find two points that together cover the entire edge, since $\delta$ is integer.
        Further note that neither $p_{e_1}$ nor $p_{e_2}$ are on a vertex, as then we would be in case 1.
        Since neither point covers the entire edge, w.l.o.g.\ $d(x, u_1) = \delta - 1$ and $d(y, u_2) = \delta - 1$.
        Then $p_{e_1}^*$ $\delta$-covers every point $p(x,y,\lambda$ for $\lambda \in [0, 1/2]$ and $p_{e_2}^*$ $\delta$-covers every point $p(y,x,\lambda$ for $\lambda \in [0, 1/2]$.
        Thus together they $\delta$-cover the entire edge $e$.
    \end{enumerate}

    For $\delta = a/b$ with $b > 1$, we can apply the above arguments to $G_b$ and $\delta' = a$ and then use the argument of \Cref{lemma:deltaCoveringSubdivision} to obtain a $2b$-simple $\delta$-cover for $G$.
\end{proof}

Next we introduce the continuous analog to {\sc $k$-Tuple Dominating Set}, which we use as an intermediate step to show containment in \NP{} of {\sc Defensive $\delta$-Covering}.

\begin{definition}[{\sc $k$-Tuple $\delta$-Covering}]
    \label{def:kTupleDeltaCover}
    Given a graph $G$ and $k, \ell \in \mathbb N$, decide whether there is a set $S \subseteq P(G)$ of size at most $\ell$ such that for every point $p \in P(G)$ there are at least $k$ distinct points in $S$ with distance at most $\delta$ to $p$.
\end{definition}

We also consider the variant where the set $S$ is allowed to be a multiset.
In that case, we refer to the problem as {\sc $k$-Tuple Multiset $\delta$-Covering}.

\paragraph*{Defensive $\delta$-Covering with Multiset Defense}

In this section, we analyze the complexity of {\sc Defensive $\delta$-Covering with Multiset Defense}.

\begin{theorem}
    \label{thm:defensiveDeltaCoverMultisetDefenseNPComplete}
    {\sc Defensive $\delta$-Covering with Multiset Defense} is contained in $\NP$ when $\delta$ is a unit fraction, and $\NP$-complete otherwise, regardless whether the attack may be a multiset or not.
\end{theorem}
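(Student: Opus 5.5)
The plan is to reduce {\sc Defensive $\delta$-Covering with Multiset Defense}, for both set and multiset attacks, to {\sc $k$-Tuple Multiset $\delta$-Covering}, and then to settle that problem using the discretization of \Cref{lemma:deltaCoveringSimple}.

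\textbf{Step 1: equivalence with $k$-tuple covering.} I would show that a multiset $D$ counters every $k$-attack exactly when every point $p \in P(G)$ has at least $k$ tokens of $D$, counted with multiplicity, in $\cball{p}{\delta}$.

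For necessity, suppose some $p$ has at most $k-1$ tokens within distance $\delta$. A multiset attack places all $k$ attackers on $p$. A set attack places $k$ distinct points in a tiny neighbourhood of $p$. Since only finitely many tokens exist, a small enough neighbourhood keeps each token's reach unchanged: any token at distance more than $\delta$ from $p$ stays out of reach of all these points. The $k$ attackers then share at most $k-1$ usable tokens, so by Hall's condition (the Observation) the attack is not countered.

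For sufficiency, I would copy the argument of \Cref{lemma:kTupleDominatingSetEquivalence}. Every attacked point sees at least $k$ token copies, so any set of at most $k$ attackers has a neighbourhood of size at least $k$, and Hall's theorem gives the matching.

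\textbf{Step 2: containment in \NP.} This step needs a polynomial certificate, and I expect it to be the main obstacle. The plan is to generalize \Cref{lemma:deltaCoveringSimple} to $k$-tuple covers with multiplicities. For $\delta=a/b$, pass to $G_b$ via \Cref{lemma:deltaCoveringSubdivision} so that $\delta$ becomes an integer. Then move each token in the interior of a half-edge to the edge midpoint, exactly as in that proof. The case analysis must now show that coverage counts never decrease, not merely that coverage is preserved. I would redo the three cases edge by edge for a fixed point $q$: every token that covered $q$ before the move still covers $q$, or else it is paired with a moved token on the other side, as in case~3. A likely cleaner route is to count, for each point $q$, the tokens whose ball contains $q$, and to prove token-by-token that the moved token's ball, restricted to the relevant edges, covers ``at least as much'' in the required sense. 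Once this holds, the optimal multiset defense lies on the $2b$-simple points, which number polynomially many.

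The multiplicities also need bounding. Each point needs at most $k$ tokens nearby, so no location needs more than $k$ copies, and a multiplicity can be written in $O(\log k)$ bits.

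The verifier checks the count condition at polynomially many critical points. For an integer $\delta$ on $G_b$ with tokens on half-integral points, coverage counts along an edge change only at points of the form $i/(4b)$. It therefore suffices to test, on each edge, these critical points together with small open intervals between them, handling open and closed endpoints carefully. This check takes polynomial time.

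\textbf{Step 3: hardness.} When $\delta$ is not a unit fraction, I would reduce from {\sc $\delta$-Covering}, which is \NP-hard by \cite{DBLP:journals/mp/HartmannLW22}. Setting $k=1$ makes a countering defense the same as a $\delta$-cover, and a multiset is useless for covering because a duplicate token can be deleted. Hence the map $(G,\ell)\mapsto(G,k=1,\ell)$ is a valid reduction for both the set and the multiset attack variant.
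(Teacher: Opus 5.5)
\textbf{There is a genuine gap in Step 2.} The rest of your plan matches the paper's: the Hall-type equivalence with {\sc $k$-Tuple Multiset $\delta$-Covering} (your set-attack necessity argument is sound), discretization to $2b$-simple locations with multiplicities at most $k$, verification at finitely many test points, and hardness from {\sc $\delta$-Covering} with $k=1$. The problem is the step you flag yourself and then only assert: that moving every interior token to its edge midpoint never lowers a coverage count. This is exactly the content of \Cref{lemma:kTupleMultisetDeltaCover2bSimple}, and neither of your two routes establishes it.

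The token-by-token route is false. Take the path $v,u,y$ with $\delta=1$. The token $p(u,v,\lambda)$ with $0<\lambda<\tfrac12$ covers $p(u,y,s)$ for all $s\le 1-\lambda$. After the move to $p(u,v,\tfrac12)$ it covers these points only for $s\le\tfrac12$. So one token's coverage of an edge can strictly shrink. The pairing route is the right instinct, but case~3 of \Cref{lemma:deltaCoveringSimple} handles a single pair. For a $k$-fold cover you must show that enough \emph{disjoint} compensating partners exist on each edge, and your sketch gives no reason why they do.

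The missing idea is a per-edge decomposition. Assume $\delta$ is integral and fix an edge $e=\{x,y\}$, parametrized from $x$. The claim is that there are $k_1$ tokens each covering all of $e$, and $k_2$ pairwise disjoint pairs each jointly covering $e$, with $k_1+k_2\ge k$. The single-cover case analysis then applies to each full cover and to each pair separately.

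To obtain the decomposition, classify the tokens that cover some but not all interior points of $e$. Each covers $[0,\mu]$, or $[\nu,1]$, or both end segments but not the middle. A token of the last kind lies in the interior of an edge and becomes a full cover after the move, so count it in $k_1$. Now sort the left segments by decreasing $\mu$ and the right segments by increasing $\nu$. Set $k_2=k-k_1$ and match the $m$-th left with the $(k_2+1-m)$-th right. Suppose some match fails, i.e.\ $\mu_{(m)}<\nu_{(k_2+1-m)}$. Then a point strictly between them is covered at most $k_1+(m-1)+(k_2-m)<k$ times, a contradiction.

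After the move, each matched left covers at least $[0,\tfrac12]$ and each matched right covers at least $[\tfrac12,1]$. Hence every pair still covers $e$. Coverage of vertices is unchanged by the move.

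Alternatively, your per-point counting can be rescued. Take a point $q$ in the half of $e$ next to $y$. Compare its count after the move not with $q$ itself, but with a point of $e$ sufficiently close to $y$ before the move. Every token covering that nearby point covers all of $[\tfrac12,1]$ after the move.
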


We split the proof of this theorem into three parts:
We first show that for {\sc $k$-Tuple Multiset $\delta$-Covering} an analogous discretization argument to \Cref{lemma:deltaCoveringSimple} is possible.
Using this discretization, we then show that {\sc $k$-Tuple Multiset $\delta$-Covering} remains in $\NP$.
Finally, we show that {\sc $k$-Tuple Multiset $\delta$-Covering} and {\sc Defensive $\delta$-Covering with Multiset Defense} are equivalent.

\begin{lemma}
    \label{lemma:kTupleMultisetDeltaCover2bSimple}
    For any graph $G$ and rational $\delta = a/b$, $a,b \in \mathbb N$, there is an optimal $k$-tuple multiset $\delta$-cover that is $2b$-simple.
\end{lemma}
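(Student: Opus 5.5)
We mirror the proof of \Cref{lemma:deltaCoveringSimple}. First we reduce to $b=1$. By the subdivision argument of \Cref{lemma:deltaCoveringSubdivision}, applied now to multisets and to the property ``every point is within distance $\delta$ of at least $k$ tokens counted with multiplicity'', a $k$-tuple multiset $\delta$-cover of $G$ is the same as a $k$-tuple multiset $(b\delta)$-cover of $G_b$. Moreover, a $2$-simple solution on $G_b$ is $2b$-simple on $G$. So it suffices to treat integer $\delta$.

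For integer $\delta$, take an optimal $k$-tuple multiset $\delta$-cover $S$ and apply the same rounding map token by token:
\begin{itemize}
\item a token at $p(u,v,0)$ stays where it is;
\item a token at $p(u,v,\lambda)$ with $\lambda\in(0,\half]$ is moved to $p(u,v,\half)$.
\end{itemize}
This gives a multiset $S^*$ with $|S^*|=|S|$ that is $2$-simple. Because multisets are allowed, several tokens may land on the same midpoint, and this is harmless. This is exactly why the multiset version is easier than the set version mentioned before \Cref{lemma:deltaCoveringSimple}.

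The core step is a per-point counting statement. Fix a point $q=p(x,y,\mu)$. We need at least $k$ tokens of $S^*$ (with multiplicity) within distance $\delta$ of $q$. It is not enough to reuse the edge-wise argument of \Cref{lemma:deltaCoveringSimple}, since coverage is no longer about whole edges. Instead I would prove a local monotonicity claim: for each token $p\in S$, the set $\cball{p}{\delta}$ is contained in $\cball{p^*}{\delta}$, except possibly for points lying on edges that $p$ covers only partially. The case analysis of \Cref{lemma:deltaCoveringSimple} (sub-cases 2a--2c) shows that whenever $p$ covers an entire edge $e$, so does $p^*$. The remaining situation is partial coverage of an edge $e=\{x,y\}$. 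There, with $\delta$ integer and $p$ in the interior of its edge, the covered part of $e$ is an initial segment $p(x,y,[0,\delta-d(x,u)-\lambda])$ from the endpoint $x$ closer to $p$. It has length $<1$, and after rounding it becomes exactly $p(x,y,[0,\half])$. So we do not get containment, but the partial pieces are of a very restricted form.

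To finish the count, look at a fixed edge $e$ and the tokens covering $e$ only partially from the $x$-side and from the $y$-side. In $S$, each point of $e$ is covered $k$ times. For the left half of $e$, a point $p(x,y,\mu)$ with $\mu\le\half$ is close to $x$. The claim is that the number of partial $x$-side tokens in $S$ covering $p(x,y,\half)$, plus the number of full-edge tokens, plus the $y$-side tokens reaching $p(x,y,\half)$, is at least $k$. After rounding, all $x$-side partial tokens cover $[0,\half]$, full tokens stay full, and $y$-side partial tokens cover $[\half,1]$. To make this work I would pair each point $\mu\le\half$ with the midpoint: any $y$-side token of $S$ reaching some $\mu\le\half$ also reaches the midpoint, so after rounding it still covers the midpoint. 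Points with $\mu<\half$ then receive at least as many tokens as the point at $\half$ did in $S$. I would also check that tokens on vertices, and the cycle case 2c, behave consistently: there the token is within distance $\delta$ of the whole edge both before and after rounding.

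I expect the main obstacle to be this per-point multiplicity bookkeeping, and in particular the inequality at the midpoint of an edge. There, $x$-side tokens reaching exactly $\half$ and $y$-side tokens reaching exactly $\half$ both count after rounding, but in $S$ they might have covered disjoint parts. I would handle this by showing that for the point $\mu=\half-\varepsilon$ (respectively $\half+\varepsilon$) the original count is at least $k$, and that each token counted there is still counted at $\half$ after rounding. This uses integrality of $\delta$ and the fact that the partial reach lengths take values in $(0,1)$ and get rounded up to $\half$ exactly when they were positive.
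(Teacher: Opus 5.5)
Your reduction to $b=1$, the rounding map, and your single-token analysis (a partial reach from one endpoint in $(0,1)$ becomes exactly $\half$) are fine. A per-point count is also a reasonable alternative to the paper's route. The paper instead extracts, from the $k$-fold coverage of each edge $e$, $k_1$ tokens that cover $e$ entirely and $k_2$ disjoint pairs that cover $e$ jointly, with $k_1+k_2\ge k$, and then applies the whole-edge cases of \Cref{lemma:deltaCoveringSimple} to each unit.

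\textbf{The gap.} The step meant to bound the count on $p(x,y,\mu)$ for $\mu<\half$ fails. You claim these points receive in $S^*$ at least as many tokens as the midpoint received in $S$. But the $y$-side partial tokens counted at the midpoint in $S$ are rounded to cover only $p(x,y,[\half,1])$. Your justification shows only that they still reach the midpoint itself. For example, take the path $v,x,y,w$ with $\delta=1$, $k=1$ and the optimal cover $S=\{p(x,v,0.4),p(y,w,0.4)\}$. The midpoint of $\{x,y\}$ is covered twice by $S$, yet every $p(x,y,\mu)$ with $\mu<\half$ is covered only once by $S^*$. So your argument never shows what the left half actually needs: that the full-edge tokens plus the $x$-side partial tokens number at least $k$. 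The midpoint, which you flag as the main obstacle, is in fact the easy point.

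\textbf{The fix.} Compare with points next to the endpoint, not with the midpoint. Because $\delta$ is an integer, an interior token never has reach exactly $0$ towards an endpoint of $e$, and a vertex token has integer reach. Hence, for $\varepsilon>0$ small enough, the tokens of $S$ covering $p(x,y,\varepsilon)$ are exactly two kinds:
\begin{enumerate}
\item tokens that cover $e$ entirely;
\item tokens whose reach from $x$ into $e$ lies in $(0,1)$.
\end{enumerate}
No purely $y$-side partial token appears, since its reach is less than $1$. There are at least $k$ such tokens, and after rounding each covers all of $p(x,y,[0,\half])$. The right half is symmetric, and rounding never removes coverage of the vertices $x,y$.

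\textbf{A missing case.} Your description of partial coverage as a single initial segment from the nearer endpoint overlooks tokens that reach $e$ partially from both endpoints. For instance, take $\delta=2$, a triangle on $v,x,y$, a pendant edge $\{u,v\}$, and a token at $p(u,v,\lambda)$ with $0<\lambda<\half$. This token covers $p(x,y,\mu)$ exactly for $\mu\in[0,\lambda]\cup[1-\lambda,1]$. Such tokens become full-edge tokens after rounding, so they are harmless, but your count must include them.
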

\begin{proof}
    Let $b = 1$ such that $\delta$ is integer, and let $S$ be an optimal $k$-tuple multiset $\delta$-cover of $G$.
    Consider some edge $e = \{x, y\} \in E(G)$.
    Then we can find $k_1$ points in $S$ that completely cover $e$, and $k_2$ pairs of points in $S$ where each pair completely covers $e$, such that $k_1 + k_2 \geq k$.

    Assume that this is not the case, i.e., there is an additional set $P_{e,m}$ of $m$ points that together cover $e$ $m' < m$ times, such that no single point in $P_{e,m}$ completely covers $e$, and removing any pair of points results in the remaining $m - 2$ points covering $e$ at most $m' - 2$ times.
    Since $\delta$ is integer, none of the points in $P_{e,m}$ is located on $e$ or the vertices $x, y$.
    Consider two points $p_1, p_2 \in P_{e,m}$ such that $p_1$ covers the points $p(x,y,\lambda)$ for $\lambda \in [0,\mu_1]$ and $p_2$ covers the points $p(x,y,\lambda)$ for $\lambda \in [\mu_2,1]$ with $\mu_2 \leq \mu_1$ and such that $\mu_1 - \mu_2$ is minimal among all such pairs.
    Such two points must exist as otherwise $P_{e,m}$ would leave at least one point of $e$ uncovered.
    Now if $P_{e,m} \setminus \{p_1, p_2\}$ covers $e$ at most $m' - 2$ times, there must be points $p_1', p_2'$ that cover the points $p(x,y,\lambda)$ for $\lambda \in [0, \mu_1']$ and $\lambda \in [\mu_2', 1]$ respectively with $\mu_2 < \mu_1' < \mu_2' < \mu_1$, contradicting the assumption that $\mu_1 - \mu_2$ is minimal.

    Therefore we can reuse the construction of $S^*$ of \Cref{lemma:deltaCoveringSimple} and by the same arguments each of the $k_1$ points still completely cover $e$ and each of the $k_2$ pairs of points also still cover $e$.
    Further, it is easy to see that \Cref{lemma:deltaCoveringSubdivision} also applies to {\sc $k$-tuple $\delta$-Covering}, which concludes the proof.
\end{proof}

We note that it is important for the above lemma that we allow $S^*$ to be a multiset, as the construction rule of \Cref{lemma:deltaCoveringSimple} replaces all points in the interior of an edge by the same point.

\begin{lemma}
    \label{lemma:kTupleMultisetDeltaCoverNPComplete}
    {\sc $k$-Tuple Multiset $\delta$-Covering} is \NP-complete when $\delta$ is not a unit fraction, and contained in $\NP$ otherwise.
\end{lemma}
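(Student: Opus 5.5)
The proof has two parts: containment in \NP{} for every rational $\delta = a/b$, and \NP-hardness when $\delta$ is not a unit fraction.

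\textbf{Containment.} By \Cref{lemma:kTupleMultisetDeltaCover2bSimple}, if a $k$-tuple multiset $\delta$-cover of size at most $\ell$ exists, then a $2b$-simple one exists. Such a cover places its points only on the finitely many positions $p(u,v,i/(2b))$ with $i \in \{0,\dots,b\}$. There are $O(b\,|E(G)|)$ such positions, and $b$ is a constant because $\delta$ is fixed. Each position needs multiplicity at most $k$: the $k$-fold requirement is met by any $k$ copies, so reducing a larger multiplicity to $k$ keeps the cover feasible. The certificate lists the multiplicity of each candidate position in binary, which takes polynomial length.

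The verifier first checks that the total multiplicity is at most $\ell$. It then checks $k$-fold coverage edge by edge. Fix an edge $e$. Every certificate point $p$ with multiplicity $m_p$ covers the set $\cball{p}{\delta} \cap e$, which is a union of at most two closed intervals whose endpoints can be computed from shortest-path distances to the endpoints of $e$. Collect all these interval endpoints on $e$; they partition $e$ into polynomially many elementary intervals and points. For each such piece, the verifier sums $m_p$ over all $p$ whose coverage contains the piece and checks that the sum is at least $k$. Vertices are handled as endpoints of their incident edges. All of this takes polynomial time, since the arithmetic only involves numbers with $O(\log k + \log |V(G)|)$ bits.

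\textbf{Hardness for non-unit-fraction $\delta$.} Take $k = 1$. Then {\sc $k$-Tuple Multiset $\delta$-Covering} coincides with {\sc $\delta$-Covering}: a multiset never helps when $k=1$, because duplicate copies can be dropped. \NP-hardness for every $\delta$ that is not a unit fraction is known from \cite{DBLP:journals/mp/HartmannLW22}. The identity reduction $(G,\ell) \mapsto (G, k{=}1, \ell)$ therefore suffices.

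The main obstacle is the verification step. Coverage of the continuum must be reduced to finitely many checks, and this needs care when a ball of radius $\delta$ wraps around a short cycle, so that it covers two disjoint subintervals of $e$. I would handle this by computing, for each certificate point $p$ and each endpoint $x \in e$, the value $d(p,x)$. The portion of $e$ covered from the $x$ side is then $[0, \delta - d(p,x)]$, measured from $x$ and clipped to $[0,1]$, and similarly from the $y$ side, with a separate case for points lying on $e$ itself. The union of these at most two intervals is exactly $\cball{p}{\delta} \cap e$.
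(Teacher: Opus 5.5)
Your proposal is correct and follows the paper's proof. Hardness comes from the case $k=1$, which coincides with {\sc $\delta$-Covering}. Membership uses \Cref{lemma:kTupleMultisetDeltaCover2bSimple} with a certificate listing the multiplicity of each $2b$-simple position.

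The only difference is the verifier. The paper checks just the points that are $4b$-simple but not $2b$-simple, using that a $2b$-simple point covering such a midpoint covers its whole grid cell of length $1/(2b)$. You instead compute the exact coverage intervals on each edge and check every elementary piece. Your check is a bit more work but does not depend on the grid alignment.
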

\begin{proof}
    The $\NP$-hardness for the non-unit fraction of $\delta$ easily follows by a reduction from {\sc $\delta$-Covering} for $k=1$, which was shown to be $\NP$-hard in \cite{DBLP:journals/mp/HartmannLW22}.

    By \Cref{lemma:kTupleMultisetDeltaCover2bSimple}, there is an optimal solution that is $2b$-simple. 
    Thus for every point in $P(G)$, we simply specify its multiplicity in $S$ as a certificate, which in total has length $2b\cdot|E(G)|\cdot\log k$, which is polynomial in the input.
    To verify the certificate, we check for every point in $P(G)$ that is $4b$-simple but not $2b$-simple, whether there are at least $k$ points in $S$ with distance at most $\delta$ to it.
    If such a $4b$-simple but not $2b$-simple point $p = p(u,v,\lambda)$ is covered by some point $q$ in $S$, then $q$ must cover the entire interval $[\lambda - 1/4b, \lambda + 1/4b]$, since $q$ is $2b$-simple.
    Thus checking these points suffices to verify if the entire graph is covered, which can be done in polynomial time, as there are only a polynomial number of such points.
\end{proof}

\begin{lemma}
    \label{lemma:kTupleMultisetDeltaCoverEquivalence}
    {\sc Defensive $\delta$-Covering with Multiset Defense} is equivalent to {\sc $k$-Tuple Multiset $\delta$-Covering}, regardless whether the attack may be a multiset or not.
\end{lemma}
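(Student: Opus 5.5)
We show that a multiset $D \subseteq P(G)$ of size at most $\ell$ counters every $k$-attack $A \subseteq P(G)$ if and only if $D$ is a $k$-tuple multiset $\delta$-cover. We argue both directions once for multiset attacks and once for set attacks; the argument is the same as for \Cref{lemma:kTupleDominatingSetEquivalence}, except that the vertex neighbourhood is replaced by the ball $\cball{p}{\delta}$.

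For the direction from $k$-tuple cover to defense, let $D$ be a $k$-tuple multiset $\delta$-cover and let $A$ be any $k$-attack, set or multiset. Form the bipartite graph between the tokens of $A$ and the tokens of $D$, keeping copies of repeated points as distinct vertices. Each attacker token $a$ is adjacent to every defender token in $\cball{a}{\delta}$. By assumption there are at least $k$ such tokens, counted with multiplicity. Hence every nonempty $A' \subseteq A$ satisfies $|N(A')| \geq k \geq |A'|$. Hall's theorem then gives a matching saturating $A$, so $D$ counters $A$.

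For the direction from defense to $k$-tuple cover, suppose $D$ counters every $k$-attack but is not a $k$-tuple cover. Then some point $p$ has at most $k-1$ defender tokens, with multiplicity, within distance $\delta$.
\begin{itemize}
\item If attacks may be multisets, attack $p$ with all $k$ tokens. Every attacker token can only be matched into $\cball{p}{\delta} \cap D$, which has fewer than $k$ tokens, so this is a contradiction.
\item If attacks must be sets, use continuity. Let $m$ be the number of tokens of $D$ in $\cball{p}{\delta}$, so $m \le k-1$. Choose $k$ distinct points $a_1,\dots,a_k$ close enough to $p$ that $\cball{a_i}{\delta} \cap D \subseteq \cball{p}{\delta} \cap D$ for all $i$. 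The attack $\{a_1,\dots,a_k\}$ then has a neighbourhood of at most $m<k$ defender tokens, again a contradiction.
\end{itemize}

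The existence of these nearby points is the main obstacle. Since $D$ is finite, $\varepsilon := \min\{dist(p,d) - \delta : d \in D,\ dist(p,d) > \delta\}$ is positive when the set is nonempty; if it is empty, take $\varepsilon := 1$. Any point $a$ with $dist(a,p) < \varepsilon$ satisfies $\cball{a}{\delta} \cap D \subseteq \cball{p}{\delta} \cap D$: if $dist(a,d) \le \delta$ then $dist(p,d) \le \delta + dist(a,p) < \delta + \varepsilon$, which by the choice of $\varepsilon$ forces $dist(p,d) \le \delta$. The ball $\oball{p}{\varepsilon}$ contains a nondegenerate segment of an edge incident to $p$, hence infinitely many points, so $k$ distinct attack points exist. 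This step is exactly where point attacks (as opposed to vertex attacks) make the set and multiset attack variants coincide.

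Both directions preserve the defense, so the minimum sizes agree and the instances $(G,k,\ell)$ map identically. This proves the equivalence regardless of whether the attack may be a multiset.
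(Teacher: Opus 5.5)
Your proposal is correct and follows the paper's proof. Both arguments use Hall's theorem for the cover-to-defense direction, attack $p$ with all $k$ tokens for multiset attacks, and for set attacks place $k$ distinct points within the gap $\varepsilon$ to the nearest token farther than $\delta$ from $p$. Your version is slightly more careful than the paper's in two ways: you explicitly show $\cball{a}{\delta} \cap D \subseteq \cball{p}{\delta} \cap D$ for every nearby attack point $a$, whereas the paper only asserts that each nearby point is individually under-covered, which alone would not bound the neighbourhood of the whole attack; and you handle the case where no token lies farther than $\delta$ from $p$.
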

\begin{proof}
    Let $S$ be an $\ell$-defense for any possible $k$-attack $A$ in $G$, where $A$ may be a multiset.
    If $S$ is not a $k$-tuple $\delta$-cover, there is a point $p \in P(G)$ such that $\sum_{q \in \cball{p}{\delta}} \mult{q}{S} \leq k-1$.
    Then $S$ does not defend the attack consisting of $k$ times $p$, which is a contradiction.

    On the other hand, let $S$ be a $k$-tuple $\delta$-cover of size $\ell$.
    For an arbitrary $k$-attack $A$, we build the following bipartite graph $G'$ with $V(G) = V_1 \cup V_2$.
    $V_1$ contains one vertex $v_p$ for every point $p \in A$ (if $p$ is contained in $A$ multiple times, then there are an according number of copies of $v_p$).
    Further, for each point $p \in A$ we add a vertex $v_q$ to $V_2$ for every point $q \in S$ with $d(p,q) \leq \delta$, and connect these vertices to $v_p$ (and its copies if they exist).
    If a point $q$ is contained in $S$ multiple times, multiple copies of the vertex $v_q$ are created accordingly.
    Since $S$ is a $k$-tuple $\delta$-cover, each vertex $v \in V_1$ has degree at least $k$ in $G'$.
    Further, $|V_1| = k$ and thus for every set $V' \subseteq V_1$ it holds that $|N(V')| \geq |V'|$ in $G'$.
    Therefore, by Hall's Theorem, there is a matching of size $k$ in $G'$, and thus $S$ counters $A$.
    Thus $S$ is also an $\ell$-defense.

    Now, we handle the case where the attack may not be a multiset.
    Let $S$ be an $\ell$-defense for any possible $k$-attack $A$ in $G$, where $A$ may not be a multiset. 
    If $S$ is not a $k$-tuple $\delta$-cover, there is a point $p \in P(G)$ such that $\sum_{x \in \cball{p}{\delta}} \mult{x}{S} \leq k-1$.
    Further, let $q$ be the point in $S$ that is closest to $p$ with $d(p,q) > \delta$.
    Thus $d(p,q) = \delta + \varepsilon$ for some $\varepsilon > 0$.
    Let $p'$ be the point on the shortest path from $p$ to $q$ with distance $\varepsilon/2$ from $p$.
    Then by choice of $q$, for all points $p*$ on the shortest path from $p$ to $p'$ it holds that $\sum_{x \in \cball{p^*}{\delta}} \mult{x}{S} \leq k-1$.
    Since any path between two points of length greater than $0$ contains an infinite number of points, $S$ does not defend against an attack consisting of $k$ arbitrary points on the shortest path from $p$ to $p'$, which is a contradiction.

    For the other direction, it is easy to see, that the above argument still works when the attack may not be a multiset.
\end{proof}

\paragraph*{Defensive $\delta$-Covering without Multiset Defense}

In this section, we analyze the complexity of {\sc Defensive $\delta$-Covering} when the defense is not allowed to be a multiset.

\begin{theorem}
    \label{thm:defensiveDeltaCoverNPCompletePolynomialK}
    {\sc Defensive $\delta$-Covering} is contained in \NP{} when $\delta$ is a unit fraction, and \NP-complete otherwise, regardless whether the attack may be a multiset or not, as long as $k$ is polynomially bounded.
\end{theorem}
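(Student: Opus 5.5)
We follow the same three-step structure as for \Cref{thm:defensiveDeltaCoverMultisetDefenseNPComplete}, now for non-multiset defenses. The first step is the equivalence. We define {\sc $k$-Tuple $\delta$-Covering} with $S$ a set, and show it is equivalent to {\sc Defensive $\delta$-Covering}, whether or not the attack is a multiset. The proof of \Cref{lemma:kTupleMultisetDeltaCoverEquivalence} should carry over verbatim: Hall's theorem gives one direction, and the argument that an undercovered point yields an undercovered nondegenerate interval (on which $k$ distinct attack points fit) gives the other. Multiplicities simply become $0/1$.

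\NP-hardness for non-unit fractions $\delta$ then follows with $k=1$ from the \NP-hardness of {\sc $\delta$-Covering} \cite{DBLP:journals/mp/HartmannLW22}, since for $k=1$ sets and multisets coincide.

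For containment, we need a polynomial certificate despite the set restriction. The key structural lemma should say the following. For $\delta=a/b$ there is an optimal $k$-tuple $\delta$-cover whose points lie in a polynomially sized grid. On each edge, the grid consists of the $2b$-simple positions together with, for each such position $c$, a cluster of $k$ (or $\ell$, but $k$ suffices) nearby points $c+\frac{j}{N}\cdot\varepsilon$ for $j=0,\dots,k-1$. Here $\varepsilon=1/(8b)$ and $N$ is polynomial in $k$.

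To prove this, we reuse the per-edge analysis of \Cref{lemma:kTupleMultisetDeltaCover2bSimple}. For each edge, the covering can be witnessed by $k_1$ single points that cover it entirely and $k_2$ pairs with $k_1+k_2\ge k$. We work in $G_b$ so that $\delta$ is an integer. The multiset rounding sends every interior point toward the midpoint (or leaves it at a vertex). Instead of collapsing points, we send the points of an edge $\{u,v\}$ to distinct locations $p(u,v,\tfrac12-\eta_i)$, and vertex points to $p(u,v,\eta_i)$ on distinct incident edges or to tiny distinct offsets, with $0\le\eta_i$ small.

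The main obstacle is checking that the case analysis in \Cref{lemma:deltaCoveringSimple} survives small perturbations. In subcase 2(b) the slack was $k+\tfrac12<k+1\le\delta$, which is strict, so small offsets are harmless. In the boundary cases, however, the distance is exactly $\delta$: a point on a vertex covering a whole edge at distance exactly $\delta$, or pairs meeting at the midpoint $p(x,y,\tfrac12)$ with zero slack. I would handle these as follows.

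A point sitting exactly on a vertex is left in place when it is the only such point. Since a set has at most one point per location, only one point per vertex can be there anyway, so vertex points need no perturbation. Interior points of one edge go to $\tfrac12\pm\eta_i$, symmetrically split, so that in case 3 each side's half-covering point still reaches its half up to $\tfrac12\mp\eta$. The lost sliver must then be recovered by the partner, which sits at $\tfrac12\pm\eta'$ on its own side. I would try to pair offsets so that the overlaps still meet, for instance by choosing the $i$-th point on $\{u,v\}$ at $p(u,v,\tfrac12-\eta_i)$ measured from the endpoint nearer the edge it must cover. If this fails, I would fall back on an \emph{ad hoc} finite grid and argue by compactness: covering is a finite system of linear inequalities in the offsets, so a vertex of the feasible polytope has polynomial-size rational coordinates.

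Given the grid lemma, the certificate lists the chosen grid points; its size is polynomial because $k$ is polynomially bounded, which is exactly where that hypothesis is used. Verification checks $k$-fold coverage of all finer grid points, as in \Cref{lemma:kTupleMultisetDeltaCoverNPComplete}, using the refined grid's midpoints between consecutive breakpoints of the coverage function.
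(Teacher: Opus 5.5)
\textbf{There is a genuine gap in the containment part.} Your reduction to {\sc $k$-Tuple $\delta$-Covering} (via the argument of \Cref{lemma:kTupleMultisetDeltaCoverEquivalence}) and the hardness for $k=1$ match the paper. The grid lemma, however, is the entire content of the non-multiset case, and you assert it rather than prove it: the decisive step is left as ``I would try to pair offsets so that the overlaps still meet.''

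The missing idea is that the offsets must be chosen \emph{globally consistently}; pairing them locally does not work. After subdividing so that $\delta$ is an integer, describe each interior, non-midpoint point by its distance $\lambda\in(0,\tfrac12)$ to its nearest vertex. A pair $p_1,p_2$ can witness the covering of an edge $e=\{x,y\}$ with zero slack. For example, if $d(p_1,x)=\delta-\lambda_1$ and $d(p_2,y)=\delta-1+\lambda_2$, then the pair covers $e$ from the sides of $x$ and $y$ iff $\lambda_2\le\lambda_1$. The rounding must preserve all such inequalities simultaneously. A single point takes part in many such pairs, on many edges and in both roles, so offsets cannot be assigned pair by pair. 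For the same reason, ``the endpoint nearer the edge it must cover'' is not well defined.

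The paper supplies exactly this global step:
\begin{itemize}
\item points linked by distance exactly $2\delta$ form equivalence classes, which must move in lockstep;
\item pairs of classes whose relative order matters induce restrictions $\lambda_i\le\lambda_j$;
\item the restriction relation has no cycles (\Cref{lemma:noCircularRestriction});
\item a topological order $\pi$ then gives positions $p(u,v,\tfrac12-\pi(i)/(4k|E(G)|))$, hence a $4k|E(G)|$-simple cover (\Cref{theorem:kTupleDeltaCoverSimple}).
\end{itemize}
Equivalently, any strictly order-preserving re-spacing of the finitely many values $\lambda\in(0,\tfrac12)$ works. This uses up to $|EQ(S)|\le 2k|E(G)|$ distinct levels, and you give no argument that clusters of only $k$ offsets per grid position suffice.

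\textbf{Your fallback does not close the gap as stated.} First, $k$-fold covering of a continuum becomes a finite linear system only after you fix the combinatorial type of a given optimal cover: which route realizes each relevant distance, and which single points and pairs witness each edge. You do not do this. More importantly, distinctness of the points is a \emph{strict} inequality, so the feasible region is not closed. A vertex of its closure may merge points, which is precisely the failure of the multiset rounding you are trying to avoid. To repair this you would need, for instance, to add a common margin $t$ to the distinctness constraints, maximize $t$, and argue that the optimum is positive and attained at a basic solution of polynomial bit size. Done carefully, this LP route would be a legitimate alternative to the paper's combinatorial ordering argument, but in your proposal it is only gestured at.
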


We again split the proof of the theorem into parts.
First, we provide a more elaborate discretization argument, which takes care of multiple points on the interior of an edge.
For that, we need some intermediate steps.
Throughout this, we assume that $\delta$ is an integer.

\begin{lemma}
    \label{lemma:numberOfDistinctPoints}
    If $\delta$ is integer, then an optimal $k$-tuple $\delta$-cover contains at most $2k$ interior points of an edge.
\end{lemma}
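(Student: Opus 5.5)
The plan is an exchange argument on a single edge: an optimal cover has no redundant point, and a point with $k$ interior points of the same edge on each side of it is redundant. Fix an optimal $k$-tuple $\delta$-cover $S$ with $\delta$ integer, and suppose that some edge $e=\{u,v\}$ carries at least $2k+1$ interior points of $S$. Order them as $p(u,v,\lambda_1),\dots,p(u,v,\lambda_m)$ with $0<\lambda_1<\dots<\lambda_m<1$ and $m\ge 2k+1$. Here the $\lambda_i$ are distinct because $S$ is a set. Take the middle point $q=p(u,v,\lambda_{k+1})$. It has $k$ interior points of $e$ strictly closer to $u$, namely $L=\{p(u,v,\lambda_i)\mid i\le k\}$, and at least $k$ strictly closer to $v$, namely $R=\{p(u,v,\lambda_i)\mid k+2\le i\le 2k+1\}$. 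I will show that $S\setminus\{q\}$ is still a $k$-tuple $\delta$-cover, which contradicts optimality.

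\textbf{Key step.} Let $x\in P(G)$ be any point with $dist(x,q)\le\delta$. I claim that every point of $L$ or every point of $R$ has distance at most $\delta$ to $x$. Then $x$ is still covered by at least $k$ points of $S\setminus\{q\}$. Points $x$ with $dist(x,q)>\delta$ are unaffected by removing $q$, so this claim finishes the argument. There are two cases.

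\begin{itemize}
\item If $x$ lies on $e$ itself, any two points of $e$ have distance at most $1\le\delta$. This uses that $\delta$ is a positive integer. Hence all of $L\cup R$ covers $x$.
\item If $x$ does not lie in the interior of $e$, then for every $\lambda$ we have $dist(x,p(u,v,\lambda))=\min\{dist(x,u)+\lambda,\ dist(x,v)+1-\lambda\}$, since every path leaving $e$ passes through $u$ or $v$. If the minimum for $q$ is attained through $u$, then for each $p(u,v,\lambda_i)\in L$ we get $dist(x,p(u,v,\lambda_i))\le dist(x,u)+\lambda_i< dist(x,u)+\lambda_{k+1}\le\delta$. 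The case where the minimum is attained through $v$ is symmetric, and uses $R$ together with $1-\lambda_i<1-\lambda_{k+1}$.
\end{itemize}

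\textbf{Main obstacle.} I expect the main difficulty to be the careful treatment of points $x$ on $e$ or near the endpoints. The concern is that the distance formula via $u$ or $v$ might fail for $x$ in the interior of $e$, where a shorter path runs along the edge. The integrality assumption $\delta\ge 1$ handles this cleanly. I would also double-check that the argument does not need $\delta$ to be integer anywhere else; it seems only $\delta\ge 1$ is used. Given the claim, $|S\setminus\{q\}|<|S|$ contradicts optimality, so every edge carries at most $2k$ interior points of $S$.
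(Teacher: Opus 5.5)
Your proof is correct and is essentially the paper's argument. The paper keeps the $k$ interior points closest to $u$ and the $k$ closest to $v$, and asserts (``easy to see'') that any other interior point only covers points already covered $k$ times by these; your case analysis proves exactly this claim, and removing only the $(k+1)$-th point already contradicts optimality. Your remark that only $\delta \ge 1$ is used, not integrality, is also accurate.
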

\begin{proof}
    Let $G$ be any graph.
    Assume there is some optimal $k$-tuple $\delta$-cover $S$ that contains more than $2k$ interior points of an edge $e = \{u, v\}$ of $G$.
    Let $S_u$ be the $k$ closest points to $u$ on $e$, and let $S_v$ be the $k$ closest points to $v$ on $e$.
    It is easy to see, that any point in $S$ on $e$ that is not in $S_u$ or $S_v$ only covers points that are already covered $k$ times by $S_u \cup S_v$.
    Thus $S$ is still a $k$-tuple $\delta$-cover without these points, which is a contradiction to $S$ being optimal.
\end{proof}

For a graph $G$, we consider an arbitrary optimal $k$-tuple $\delta$-cover $S$.
Each maximal subset $S_i$ of $S$ such that for every two points $p,q \in S_i$ there are points $p_1, \dots, p_\ell \in S_i$ with $p_1 = p$ and $p_\ell = q$ and $d(p_j, p_{j+1}) = 2\delta$ is an \emph{equivalence class}.
Thus for each edge, at most two points in $S$ belong to the same equivalence class.
The set of all equivalence classes of a $k$-tuple $\delta$-cover $S$ is denoted with $EQ(S)$.
Due to \Cref{lemma:numberOfDistinctPoints}, $|EQ(S)| \leq 2k\cdot|E(G)|$ for any graph $G$ and optimal $k$-tuple $\delta$-cover $S$ of $G$.

Thus the goal is to modify the construction of $S^*$ from \Cref{lemma:deltaCoveringSimple} as follows:
\begin{itemize}
    \item for $p = p(u,v,0) \in S$ or $p = p(u,v,\half) \in S$, we let $p \in S^*$
    \item for $p = p(u,v,\lambda) \in S_i$, we let $p^* = p(u,v,1/2 - \pi(i)/4k\cdot|E(G)|) \in S^*$, where $\pi$ is a permutation of length $|EQ(S)|$.
\end{itemize}
In the following, we show that a permutation $\pi$ such that $S^*$ is a valid $k$-tuple $\delta$-cover $S$ must exist.

For an equivalence class $S_i$ we denote with $\lambda_i$ the shortest distance to a vertex of all points in $S_i$.
Two equivalence classes $S_i, S_j$ can impose a \emph{restriction} of the form $\lambda_i \leq \lambda_j$, if there are points $p = p(u,v,\lambda_i) \in S_i$ and $q = p(w,x,\lambda_j) \in S_j$ (w.l.o.g.\ let $\lambda_i < 1/2$ and $\lambda_j < 1/2$) meeting the following requirements:
\begin{enumerate}
    \item $2\delta-1 < d(p,q) < 2\delta$
    \item $\lambda_i \leq \lambda_j$
    \item there are $\lambda'_i < 1/2$ and $\lambda'_j < 1/2$ with $\lambda'_i > \lambda'_j$ such that $d(p(u,v,\lambda'_i),p(w,x,\lambda'_j)) > 2\delta$
\end{enumerate}
Note that the equivalence classes that contain the points with $\lambda = 0$ and $\lambda=\half$ never impose restrictions on other equivalence classes due to condition 3.

\begin{lemma}
    \label{lemma:noCircularRestriction}
    There are no equivalence classes $S_{i_1}, \dots, S_{i_n}$ imposing restrictions of the form $\lambda_{i_1} \leq \dots \leq \lambda_{i_n} \leq \lambda_{i_1}$.
\end{lemma}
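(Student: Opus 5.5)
I would argue by contradiction, using a potential (sum of distances) that is forced to increase strictly around any cycle of restrictions. Suppose there are equivalence classes $S_{i_1},\dots,S_{i_n}$ whose restrictions form a cycle $\lambda_{i_1}\le\dots\le\lambda_{i_n}\le\lambda_{i_1}$. Then all inequalities are equalities, so every class in the cycle has the same offset $\lambda:=\lambda_{i_1}<\half$.

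The first step is to understand one restriction between $S_i$ and $S_j$. It is witnessed by points $p=p(u,v,\lambda)\in S_i$ and $q=p(w,x,\lambda)\in S_j$ with $2\delta-1<d(p,q)<2\delta$. Condition 3 says that increasing the offset on the side of $S_i$ and decreasing it on the side of $S_j$ can push the distance above $2\delta$. Since $\delta$ is an integer and both offsets are below $\half$, the shortest $p$–$q$ path leaves $p$ through $u$ or $v$ and enters $q$ through $w$ or $x$. So the distance has the form $c\pm\lambda'_i\pm\lambda'_j$ for an integer $c$. The open interval $(2\delta-1,2\delta)$, combined with condition 3, forces the sign pattern: the distance is $c+\lambda'_i-\lambda'_j$, measured so that it grows with $\lambda'_i$ and shrinks with $\lambda'_j$. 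Concretely, the restriction is a directed statement: the shortest path leaves $p$ through its far-side endpoint and reaches $q$ through its near-side endpoint. I would verify this with a short case check of the four endpoint combinations, using integrality of $\delta$ and the strict inequalities in condition 1.

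The second step is to chain these around the cycle. For each consecutive pair $(S_{i_r},S_{i_{r+1}})$ I take witnessing points $p_r$ and $q_{r+1}$. The directed structure from the first step says that moving from $S_{i_r}$ to $S_{i_{r+1}}$ adds a $+\lambda$ term on one side and a $-\lambda$ term on the other. Within an equivalence class, points are linked by distance exactly $2\delta$. The cycle condition forces the "orientation" (which endpoint a class's offset is measured toward) to be propagated consistently through the classes and back to $S_{i_1}$. Summing the signed offsets around the closed cycle then gives $\lambda=\lambda+\varepsilon$ with $\varepsilon>0$, coming from the strict gaps $d<2\delta$ in condition 1. This is the contradiction.

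A more robust variant, in case the orientation bookkeeping is messy, is the following. Consider the continuous perturbation in which every class in the cycle shifts its offset by the pattern suggested by condition 3: raise the $\lambda'$ of each "$i$-side" and lower each "$j$-side". In a cycle every class is both an $i$-side and a $j$-side, so the only consistent pattern is the zero shift, which contradicts condition 3 applied to at least one pair.

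The main obstacle is the first step: making rigorous that a restriction really encodes an oriented relation, so that the offsets of the classes in the cycle cannot all be equal. It is not obvious whether the paper's definition, with $\lambda_i\le\lambda_j$ and equality allowed, truly forbids equal offsets, or whether it needs the strict distance bounds together with the optimality of $S$ (a class could be moved to strictly improve coverage). I would first try the sign analysis. If that fails, I would invoke the optimality of $S$: exhibit a simultaneous shift of the cycle's classes that keeps the cover $k$-tuple valid and frees a point, contradicting optimality.
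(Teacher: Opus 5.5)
Your opening step and your first step together already contain the paper's proof; the chaining in your second step is unnecessary. The paper argues as you begin: condition 2 around the cycle forces all $\lambda_{i_r}$ to be equal. Then, for witnesses $p=p(u,v,\lambda)$ and $q=p(w,x,\lambda)$ with equal offsets, the distance is realized in one of two ways:
\begin{itemize}
\item by a mixed route (near endpoint of one point, far endpoint of the other), whose length $1+d(u,x)$ or $1+d(v,w)$ is an integer and so violates $2\delta-1<d(p,q)<2\delta$;
\item or by a same-sign route, e.g.\ length $2\delta-2\lambda$ via $v,x$ with $d(v,x)=2\delta-2$, which stays $\le 2\delta$ for all offsets below $\tfrac12$ and so violates condition 3.
\end{itemize}
Your four-endpoint case check is this same dichotomy done for general offsets. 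Once it shows the distance is $c+\lambda_i-\lambda_j$, setting $\lambda_i=\lambda_j$ gives an integer strictly between $2\delta-1$ and $2\delta$, and you are done. You do not need to chain witnesses around the cycle or propagate any orientation. That worry is moot anyway: $\lambda_i$ is a class-level quantity, since two points at the integer distance $2\delta$ have equal distances to their nearest vertices.

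If you prefer your ``$\lambda=\lambda+\varepsilon$'' formulation, make it precise as follows. The case check gives $c\in\{2\delta-1,2\delta\}$, and condition 2 (or 3) excludes $c=2\delta-1$. So $c=2\delta$, and $d<2\delta$ then yields $\lambda_i<\lambda_j$ strictly for every restriction, which rules out a cycle directly. This version actually gives slightly more than the paper's: every restriction is strict.

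Three corrections:
\begin{itemize}
\item Your ``concretely'' sentence has the endpoints reversed. A length that grows in $\lambda'_i$ and shrinks in $\lambda'_j$ leaves $p$ through its \emph{near} endpoint $u$ and enters $q$ through its \emph{far} endpoint $x$.
\item Your ``more robust variant'' is invalid. Condition 3 is an existential statement about hypothetical offsets of a single pair in isolation. Nothing requires a simultaneous shift of all classes in the cycle to be consistent, so ``the only consistent pattern is the zero shift'' contradicts nothing.
\item Optimality of $S$ is not needed. Equal offsets are excluded by the strict bounds of condition 1 together with condition 3 alone, which answers your doubt about whether the definition with $\lambda_i\le\lambda_j$ forbids equality.
\end{itemize}
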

\begin{proof}
    From the second requirement it follows that in this case it would hold that $\lambda_{i_1} = \dots = \lambda_{i_n}$.
    Thus the distance of two points from these equivalence classes is either an integer, violating the first requirement, or of the form $2\delta - 2\lambda_{i_1}$, violating the third requirement, as $d(p(u,v,1/2),p(w,x,1/2)) \leq 2\delta$ and $d(u,w) \leq 2\delta$, which follows from $\delta$ being integer.
\end{proof}

\begin{lemma}
    \label{lemma:permutationFulfillingAllRestrictions}
    There is a permutation $\pi$ such that if $\pi(i) < \pi(j)$ then there is no restriction of the form $\lambda_{\pi(i)} > \lambda_{\pi(j)}$.
\end{lemma}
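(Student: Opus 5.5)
The plan is to read the restrictions as a directed graph and take a topological ordering. I will define the restriction digraph $R$ whose vertex set is $EQ(S)$, with an arc $S_i \to S_j$ whenever $S_i$ and $S_j$ impose a restriction of the form $\lambda_i \leq \lambda_j$. By \Cref{lemma:noCircularRestriction}, $R$ contains no directed cycle $S_{i_1} \to \dots \to S_{i_n} \to S_{i_1}$, so $R$ is a directed acyclic graph. The only care needed is for cycles of length one or two: a restriction of a class with itself, or two opposite restrictions between the same pair. These are special cases of the chain $\lambda_{i_1} \leq \dots \leq \lambda_{i_n} \leq \lambda_{i_1}$ with $n = 1$ or $n = 2$, so \Cref{lemma:noCircularRestriction} excludes them as well. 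I will state this explicitly.

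A finite DAG admits a topological ordering, which can be obtained by repeatedly removing a source. Because $|EQ(S)| \leq 2k\cdot|E(G)|$ is finite, this gives a bijection $\pi$ onto $\{1,\dots,|EQ(S)|\}$ such that every arc $S_i \to S_j$ satisfies $\pi(i) < \pi(j)$. Now suppose $\pi(i) < \pi(j)$ and a restriction of the form $\lambda_{\pi(i)} > \lambda_{\pi(j)}$ existed. Such a restriction would be an arc pointing from the later class to the earlier one in the ordering, which contradicts the topological property. This gives exactly the statement of the lemma, once the indexing convention $\lambda_{\pi(i)}$ versus $\lambda_i$ is fixed consistently; I will fix it so that $\pi$ lists the classes in topological order.

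I do not expect a real obstacle. The one point requiring care is that \Cref{lemma:noCircularRestriction} covers cycles of every length, including $n=1,2$, since that is what acyclicity of $R$ depends on. A secondary point is that the equivalence classes containing points with $\lambda \in \{0,\half\}$ impose no restrictions, so they are isolated vertices of $R$ and can be placed anywhere in $\pi$.
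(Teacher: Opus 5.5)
Your proposal is correct and follows the paper's proof. The paper also builds a digraph on the equivalence classes with an arc $S_i \to S_j$ for each restriction $\lambda_i \leq \lambda_j$, gets acyclicity from \Cref{lemma:noCircularRestriction}, and takes a linear extension (``embedding the DAG into a directed path''); your explicit handling of length-one and length-two cycles and of the statement's loose indexing adds care but does not change the approach.
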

\begin{proof}
    We create a directed graph, where each equivalence class is represented by a vertex, and each restriction of the form $\lambda_i \leq \lambda_j$ is modeled by a directed edge from the vertex representing $S_i$ to the vertex representing $S_j$.
    By \Cref{lemma:noCircularRestriction}, this graph contains no directed cycle.
    Since we can embed a directed acyclic graph into a directed path without reversing an edge, the claim follows.
\end{proof}

From \Cref{lemma:numberOfDistinctPoints,lemma:permutationFulfillingAllRestrictions} it follows that the modified construction of $S^*$ above does not discretize different points in an optimal $k$-tuple $\delta$-cover $S$ to the same point, as the construction of \Cref{lemma:deltaCoveringSimple} does.
We now look at the three cases in the proof of \Cref{lemma:deltaCoveringSimple}:
The first two cases still hold by the same arguments, and the third case holds due to \Cref{lemma:permutationFulfillingAllRestrictions}.
Combining this with the argument of \Cref{lemma:kTupleMultisetDeltaCover2bSimple} to consider the points (resp. pairs of points) that cover an edge separately, we obtain the following theorem.

\begin{theorem}
    \label{theorem:kTupleDeltaCoverSimple}
    For any graph $G$ and rational $\delta = a/b$, $a,b \in \mathbb N$, there is an optimal $k$-tuple $\delta$-cover that is $\left(4k \cdot |E(G)|\right)$-simple.
\end{theorem}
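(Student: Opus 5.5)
We first prove the statement for integer $\delta$ (the case $b=1$) and then transfer it to $\delta = a/b$ via subdivision. Concretely, for $\delta = a/b$ we pass to the subdivided graph $G_b$ and covering range $a$, as in the last step of \Cref{lemma:deltaCoveringSimple}. We use that \Cref{lemma:deltaCoveringSubdivision} also holds for $k$-tuple covers, as noted in \Cref{lemma:kTupleMultisetDeltaCover2bSimple}: the subdivision scales all distances by $b$, so the $k$-tuple covering property is preserved. A $(4k\cdot|E(G_b)|)$-simple solution in $G_b$ then yields a point set in $G$ whose offsets are multiples of $1/(b\cdot 4k|E(G_b)|)$. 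To land exactly on the claimed denominator $4k\cdot|E(G)|$, it is cleanest to run the integer argument with a grid of size $4k|E(G)|$ per original edge. The number of equivalence classes is at most $2k|E(G)|$ in the original graph, since each original edge carries at most $2k$ interior points by \Cref{lemma:numberOfDistinctPoints}, so this grid has enough slots. I expect the denominator bookkeeping to need some care here, but no new idea.

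For integer $\delta$, fix an optimal $k$-tuple $\delta$-cover $S$. By \Cref{lemma:numberOfDistinctPoints} it has at most $2k$ interior points per edge, hence $|EQ(S)| \leq 2k|E(G)|$. Apply \Cref{lemma:permutationFulfillingAllRestrictions} to obtain a permutation $\pi$ consistent with all restrictions. Then define $S^*$ as in the modified construction: points at $\lambda\in\{0,\half\}$ stay, and every point of class $S_i$ moves to offset $\half - \pi(i)/(4k|E(G)|)$ measured from its nearer endpoint. These offsets lie in $(0,\half)$, are distinct for distinct classes, and have the right denominator. Within a class, points are moved consistently, so pairwise distances of exactly $2\delta$ are preserved. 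Distinct points of $S$ therefore stay distinct, and $|S^*| = |S|$.

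It remains to show $S^*$ is a $k$-tuple $\delta$-cover. Fix an edge $e$. As in the proof of \Cref{lemma:kTupleMultisetDeltaCover2bSimple}, decompose the coverage of $e$ into $k_1$ single points of $S$ covering $e$ entirely and $k_2$ disjoint pairs each covering $e$ together, with $k_1+k_2\geq k$. For single points, cases 1 and 2 of \Cref{lemma:deltaCoveringSimple} carry over. The moved point lies strictly between the original interior offset region and the midpoint, and integrality of $\delta$ gives the same slack argument: a far-point distance $k'+\lambda$ forces $\delta\geq k'+1$. In case 2(c) the two edges lie on a short cycle, so any point covers $e$.

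The main obstacle is the pairs. A pair $p_1,p_2$ covering $[0,\mu_1]$ and $[\mu_2,1]$ of $e$ with $\mu_2\le\mu_1$ has distance between $2\delta-1$ and $2\delta$, or exactly $2\delta$ when the two points are in the same class. In the equality case both points move in lockstep and keep touching. Otherwise they impose a restriction $\lambda_i\leq\lambda_j$. I would verify that the coverage of $e$ by the pair depends only on the sign of $\lambda_i-\lambda_j$ together with condition 3. Since $\pi$ respects the order, after moving we still have $d(p_1^*,p_2^*)\le 2\delta$ along the relevant path through $e$, so the covered intervals still overlap. Pairs that never impose a restriction, because condition 3 fails, remain at distance at most $2\delta$ for all positions in $(0,\half)$ and are safe. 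This case analysis is the delicate step I would check most carefully.
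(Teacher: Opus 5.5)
Your route is the paper's own: integer $\delta$ via equivalence classes, the restriction DAG and a permutation $\pi$, then rational $\delta$ via subdivision. However, the step you flagged as delicate genuinely fails. The paper's sketch has the same hole, because it also defines restrictions through the graph distance $d(p,q)$. Whether $p_1$ (entering $e=\{x,y\}$ at $x$) and $p_2$ (entering at $y$) jointly cover $e$ is governed by $d(p_1,x)+1+d(p_2,y)\le 2\delta$. It is not governed by $d(p_1,p_2)$, which may be realised by a route avoiding $e$.

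Take the triangle on $x,w,y$ with $\delta=k=1$ and the optimal cover $S=\{p_1,p_2\}$, where $p_1=p(x,w,0.1)$ and $p_2=p(w,y,0.3)$. It is optimal because one point covers only an arc of length $2$ of the $3$-cycle. On $\{x,y\}$, $p_1$ covers the points at distance at most $1-\lambda_1$ from $x$, and $p_2$ those at distance at least $1-\lambda_2$ from $x$. So the edge is covered only because $\lambda_1\le\lambda_2$. Yet the two singleton classes impose no restriction: condition 3 fails, because the route through $w$ keeps every repositioned pair at distance below $\frac32<2\delta$. Hence $\pi(1)=1$, $\pi(2)=2$ is admissible. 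It gives offsets $\frac{5}{12}>\frac{4}{12}$, and the points of $\{x,y\}$ at distance in $(\frac{7}{12},\frac{8}{12})$ from $x$ are no longer covered. So your claim that pairs imposing no restriction ``remain at distance at most $2\delta$ \dots\ and are safe'' is false: a small $d(p_1,p_2)$ says nothing about coverage through $e$.

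A clean repair drops classes and restrictions altogether. Take integer $\delta$ and $p=p(u,v,\lambda)$ with $0<\lambda<\frac12$, not on $e$. Then $p$ covers $p(x,y,t)$ iff $t\le\delta-d(p,x)$ or $1-t\le\delta-d(p,y)$. Since $d(p,x)=\min(\lambda+d(u,x),\,1-\lambda+d(v,x))$ with integer vertex distances, each threshold is $m\pm\lambda$, with $m$ and the sign independent of $\lambda\in(0,\frac12)$. So inside $(0,1)$ every threshold is $\lambda$ or $1-\lambda$. Vertices and midpoints only contribute $0,\frac12,1$, and points on $e$ cover all of $e$. Thus the multiplicity function on each edge depends only on the weak order of the offsets. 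Consequently, any strictly increasing map of the distinct offsets into grid values in $(0,\frac12)$ is injective on $S$ and preserves the $k$-tuple property. With at most $2k|E(G)|$ offsets (\Cref{lemma:numberOfDistinctPoints}), denominator $4k|E(G)|+2$ suffices. Note that your offsets $\frac12-\pi(i)/(4k|E(G)|)$ hit $0$ when $\pi(i)=2k|E(G)|$.

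Your rational-case bookkeeping cannot be completed either. \Cref{lemma:numberOfDistinctPoints} fails for $\delta<1$, and the denominator $4k|E(G)|$ is in general unattainable for non-integer $\delta$. For $G=K_2$, $k=1$, $\delta=\frac{1}{10}$, the unique optimal cover has offsets $\frac{1}{10},\frac{3}{10},\frac12,\frac{7}{10},\frac{9}{10}$, which is not $4$-simple. A path with $11$ edges and $\delta=\frac{11}{10}$ behaves the same way. Subdivision honestly gives a denominator of order $kb^2|E(G)|$. This is still polynomial for fixed $\delta$, which is all the \NP{} membership argument needs.
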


Note that, \Cref{lemma:deltaCoveringSubdivision} can easily be extended to $k$-tuple $\delta$-cover to handle the cases where $\delta$ is not integer.

As long as $k$ is polynomially bounded in the size of the input graph, we can follow from \Cref{theorem:kTupleDeltaCoverSimple} that {\sc $k$-Tuple $\delta$-Covering} is contained in NP, by the same argument as \Cref{lemma:kTupleMultisetDeltaCoverNPComplete}.
However since the length of the certificate is $4b \cdot |E(G)|^2 \cdot k\log k$, it is no longer polynomially bounded in the input when $k$ is not.
Next, we show that {\sc Defensive $\delta$-Covering} is equivalent to {\sc $k$-Tuple $\delta$-Covering}, completing the proof that {\sc Defensive $\delta$-Covering} contained in \NP{} as long as the size of the defense is bounded by a polynomial.

\begin{lemma}
    \label{lemma:kTupleDeltaCoverEquivalence}
    {\sc Defensive $\delta$-Covering} is equivalent to {\sc $k$-Tuple $\delta$-Covering}, regardless whether the attack may be a multiset or not.
\end{lemma}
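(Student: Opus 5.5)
We mirror the proof of \Cref{lemma:kTupleMultisetDeltaCoverEquivalence}; the only change is that $S$ is now a set of distinct points, and a $k$-tuple $\delta$-cover requires at least $k$ \emph{distinct} points of $S$ within distance $\delta$ of every $p \in P(G)$. We prove two implications, each for both the multiset-attack and the set-attack variant.

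\textbf{$k$-tuple cover $\Rightarrow$ defense.} Let $S$ be a $k$-tuple $\delta$-cover with $|S| \le \ell$, and let $A$ be any $k$-attack, possibly a multiset. Build the bipartite graph with one vertex per attacker token (copies for repeated points) on one side and one vertex per point of $S$ on the other. Join an attacker token at $p$ to $q \in S$ whenever $d(p,q) \le \delta$. Every attacker vertex then has degree at least $k$. For $A' \subseteq A$ with $A' \neq \emptyset$ we get $|N(A')| \ge k \ge |A'|$, so Hall's theorem yields a matching saturating $A$. Hence $S$ counters $A$. Since a set attack is a special case of a multiset attack, this direction covers both variants.

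\textbf{Defense $\Rightarrow$ $k$-tuple cover.} Let $S$ be an $\ell$-defense (a set) and suppose, for contradiction, that some $p \in P(G)$ has $|S \cap \cball{p}{\delta}| \le k-1$. If multiset attacks are allowed, attacking $p$ exactly $k$ times cannot be countered, a contradiction. For set attacks we need $k$ distinct bad points, which we obtain by a perturbation argument.

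\begin{enumerate}
\item Since $S$ is finite, let $\varepsilon = \min\{d(p,q) - \delta : q \in S,\ d(p,q) > \delta\}$. If no such $q$ exists, set $\varepsilon = 1$. In either case $\varepsilon > 0$.
\item By the triangle inequality, every $p^*$ with $d(p,p^*) < \varepsilon$ satisfies $\cball{p^*}{\delta} \cap S \subseteq \cball{p}{\delta} \cap S$. So every such $p^*$ is reached by at most $k-1$ defenders.
\item Choose $k$ distinct points of $\oball{p}{\varepsilon}$. This is possible because the ball contains a nondegenerate segment of an incident edge, or of the edge containing $p$, and so contains infinitely many points.
\item Attacking these $k$ points gives an attack $A$ with $|B^\leq(A,\delta) \cap S| \le k-1 < |A|$. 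By the Hall observation, $S$ does not counter $A$, a contradiction.
\end{enumerate}

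The only delicate step is the perturbation in the set-attack case. It needs a uniform $\varepsilon$ so that the entire neighbourhood inherits the deficiency of $p$, and finiteness of $S$ supplies exactly that. The equivalence also preserves the bound $\ell$ and the set/non-multiset nature of $S$, so the two problems coincide instance by instance.
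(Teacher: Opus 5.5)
Your proof is correct and follows the paper's own argument, which simply reuses the proof of \Cref{lemma:kTupleMultisetDeltaCoverEquivalence}: Hall's theorem for one direction, and a perturbation to $k$ distinct under-covered points near $p$ for set attacks. Your version of the perturbation is a little more careful than the paper's ``shortest path towards the nearest far defender'' step, since you take a uniform $\varepsilon$ from the finiteness of $S$, check the inclusion explicitly by the triangle inequality, and fall back to $\varepsilon = 1$ when no defender lies beyond $\delta$. Like the paper, you tacitly assume in step 3 that $p$ is not an isolated vertex. There the ball is just $\{p\}$, and for set attacks with $k \ge 2$ an isolated vertex genuinely breaks the equivalence, so this degenerate case has to be excluded by assumption.
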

\begin{proof}
    This follows directly from the arguments of \Cref{lemma:kTupleMultisetDeltaCoverEquivalence}, as all arguments in that proof are independent of the defense being a multiset.
\end{proof}

\section{Conclusion and Future Work}
\label{sec:conclusion}

We thoroughly analyzed the complexity of {\sc Defensive Domination} in a continuous setting, showing that the continuous setting often makes the problem easier from a complexity standpoint.
This leaves the question, how the minimum size of a defense that counters every $k$-attack relate between the attacker only being allowed to attack vertices, or attack any point in the graph.
Additionally, in \cite{DBLP:journals/mp/HartmannLW22} the authors provide a polynomial time algorithm for {\sc $1$-Covering}.
This algorithm easily extends to {\sc $k$-Tuple $\delta$-Covering} on bipartite graphs, as the set of factor critical components of the Edmonds-Gallai decomposition of size at least $3$ is empty for bipartite graphs.
It would be interesting whether this algorithm as well as the shifting theorem from \cite{DBLP:journals/mp/HartmannLW22} extend to {\sc $k$-Tuple $\delta$-Covering}, as this would settle the cases left open by \Cref{tab:my_label}.

Another problem variant one could consider would be {\sc Defensive $\delta$-Covering with Vertex Defense}, where only the attacker is allowed to place tokens on any point in the graph, while the defender is restricted to vertices.
However we suspect that this problem behaves very similar to {\sc Defensive $\delta$-Covering}, since the attacker again can implicitly attack a point multiple times, by placing tokens arbitrarily close together.

\newpage
\bibliography{bibliography}

\appendix

\end{document}